\newtheorem{thm}{Theorem}[section]
\newtheorem{prop}[thm]{Proposition}
\theoremstyle{definition}
\newtheorem{Def}[thm]{Definition}
\newtheorem{rem}[thm]{Remark}
\newcommand{\e}{\mathrm{e}}
\def\IN{\mathbb {N}}
\def\IC{\mathbb {C}}
\def\IP{\mathbb {P}}
\def\sff{\mathsf{f}}
\def\sfK{\mathsf{K}}
\def\sfF{\mathsf{F}}
\def\sfDel{\mathsf{\Delta}}
\def\nn{n}
\def\wkp{\widetilde{\kappa}}
\def\Psd{\Psi^{\dagger}}
\def\Psdw{\widetilde{\Psi}^{\dagger}}
\def\vPsdw{\widetilde{\Psi}^{\dagger}_{2}}
\def\eqnarray{%
 \stepcounter{equation}%
 \let\@currentlabel=\theequation
 \global\@eqnswtrue
 \global\@eqcnt\z@
 \tabskip\@centering
 \let\\=\@eqncr
 $$\halign to \displaywidth\bgroup\@eqnsel\hskip\@centering
 $\displaystyle\tabskip\z@{##}$&\global\@eqcnt\@ne
 \hfil$\displaystyle{{}##{}}$\hfil
 &\global\@eqcnt\tw@$\displaystyle\tabskip\z@{##}$\hfil
 \tabskip\@centering&\llap{##}\tabskip\z@\cr}
\title{A Hamiltonian Formalism for Topological Recursion}
\author[a]{Hiroyuki Fuji,}
\author[b]{Masahide Manabe,}
\author[c]{and Yoshiyuki Watabiki}
\affiliation[a]{Center for Mathematical and Data Sciences and Department of Mathematics, \\
Kobe University,
Rokko, Kobe 657-8501, Japan}
\affiliation[b]{Center for Data Science Education, Tottori University,\\
4-101 Koyama-cho Minami, Tottori, 680-8550, Japan}
\affiliation[c]{Department of Physics,
Institute of Science Tokyo,\\
Oh-okayama 2-12-1, Meguro-ku, Tokyo 152-8551, Japan}
\emailAdd{hfuji@math.kobe-u.ac.jp}
\emailAdd{masahidemanabe@gmail.com}
\emailAdd{watabiki@th.phys.titech.ac.jp}
\abstract{
We propose a string field Hamiltonian formalism that associates a class of spectral curves and provides their quantization through the Chekhov-Eynard-Orantin topological recursion.
As illustrative examples, we present Hamiltonians for the $(2,2m-1)$ minimal discrete and continuum dynamical triangulation (DT) models, the supersymmetric analogue of minimal continuum DT models, the Penner model, and 4D $\mathcal{N}=2$ $SU(2)$ gauge theories in the self-dual $\Omega$-background.
}
\begin{document}
\maketitle
\flushbottom

\section{Introduction}\label{sect:introduction}

Topological recursion originally proposed by Eynard and Orantin in 2007 \cite{Eynard:2007kz} (see e.g., \cite{Bouchard:2024fih} for a review including generalizations and applications) provides a formalism to quantize spectral curves, 
where the quantization is induced by coupling to two-dimensional 
(string world-sheet) gravity. 
It was formulated based on the prototype of the recursion that was derived, 
from the Schwinger-Dyson (SD) equations, in matrix models by Chekhov, Eynard and Orantin \cite{Eynard:2004mh,Chekhov:2006vd} to perturbatively compute matrix model resolvents (correlation functions) in the $1/N$-expansion, where $N$ is the size of matrices.
Let $(\Sigma; x,y,B)$ be a spectral curve data consisting of a compact Riemann surface $\Sigma$, meromorphic functions $x, y: \Sigma \to {\IP}^1$ such that the zeros of $dx$ ($=$ branch points of spectral curve) do not coincide with the zeros of $dy$, and a meromorphic symmetric bi-differential $B$ on $\Sigma^{2}$, 
where the zeros of $dx$ are further assumed to be simple as in \cite{Eynard:2007kz}.
Then, the Chekhov-Eynard-Orantin (CEO) topological recursion in \cite{Chekhov:2006vd, Eynard:2007kz} recursively defines symmetric multi-differentials $\omega_{\nn}^{(g)}(z_1,\ldots,z_{\nn})$ on $\Sigma^{\nn}$ labeled by two integers $g \ge 0$ and $\nn \ge 1$ from the spectral curve data: $\omega_{1}^{(0)}(z)=y(z)dx(z)$ and $\omega_{2}^{(0)}(z_1,z_2)=B(z_1,z_2)$, 
and $g$ and $\nn$ correspond to the genus and the number of marked points of string world-sheet, respectively. In matrix models, the multi-differentials $\omega_{\nn}^{(g)}(z_1,\ldots,z_{\nn})$ give the expansion coefficients of 
$n$-point resolvents in the $1/N$-expansion.

In this paper, we particularly focus on a class of spectral curves described by
\begin{align}
\begin{split}
&
p(x)^2\, y^2 = 
q(x)^2\, \sigma(x)\,,
\qquad
\sigma(x):=\prod_{k=1}^{\mathfrak{b}}(x-\alpha_k)\,,
\\
&
p(x):=\sum_{k=\mu}^{r} p_k\, x^{k}\,,
\quad
r \ge \mu \ge 0,
\qquad
q(x):=\sum_{k=0}^{s-h} q_k\, x^k\,,
\quad s \ge h:=\left\lfloor \frac{\mathfrak{b}-1}{2}\right\rfloor,
\label{sp_curve_int}
\end{split}
\end{align}
where $\alpha_k$ ($k=1,2,\ldots,\mathfrak{b}$, $\alpha_i \ne \alpha_j$ for $i\ne j$) 
are branch points of spectral curves, and $h$ is the genus of $\Sigma$.
In Sections \ref{sec:ham_tr} and \ref{sec:ham_ai}, we separately examine the cases $\mathfrak{b}=2h+2$ and $\mathfrak{b}=2h+1$, under the respective conditions $s+2 \ge r$ and $s+1 \ge r$.
For the above class of spectral curves, the bi-differential $B$ in this paper is characterized by the conditions:
\begin{align}
&
\bullet\ \
\textrm{the pole is only at the diagonal $z_1=z_2$ as}\ 
B(z_1,z_2)\mathop{\sim}_{z_1 \to z_2} \frac{dz_1dz_2}{(z_1-z_2)^2}+\textrm{regular}\,,
\nonumber\\
&
\bullet\ \
\oint_{[\alpha_{2i-1}, \alpha_{2i}]}B(z_1,\cdot)=0\,, \quad i=1,\ldots,h\,.
\nonumber
\end{align}
For example, when $h=0$ ($\Sigma={\IP}^1$), 
\begin{align}
B(z_1,z_2)=
\frac{dz_1 dz_2}{\left(z_1-z_2\right)^2}
=
\begin{cases}
\frac{dx_1 dx_2}{2\left(x_1-x_2\right)^2}
\left(
\frac{\left(x_1+x_2\right)/2-\alpha_1}{\sqrt{\sigma(x_1) \sigma(x_2)}}+1\right)\ &\textrm{for}\ \ \mathfrak{b}=1\,,
\\
\frac{dx_1 dx_2}{2\left(x_1-x_2\right)^2}
\left(
\frac{x_1x_2-\left(\alpha_1+\alpha_2\right)\left(x_1+x_2\right)/2+\alpha_1\alpha_2}{\sqrt{\sigma(x_1) \sigma(x_2)}}+1\right)\ 
&\textrm{for}\ \ \mathfrak{b}=2\,,
\end{cases}
\nonumber
\end{align}
where $z_1, z_2 \in {\IP}^1$ can be introduced, for example, by 
$x(z)=z^2+\alpha_1$ for $\mathfrak{b}=1$, and 
$x(z)=(\alpha_1+\alpha_2)/2-(\alpha_1-\alpha_2)(z+z^{-1})/4$ 
(the Zhukovsky variable) for $\mathfrak{b}=2$.

On the other hand, a string field Hamiltonian formalism in two-dimensional (2D) quantum gravity was formulated in \cite{Ishibashi:1993nq} at the continuum level, based on an earlier work \cite{Kawai:1993cj}, and in \cite{Watabiki:1993ym,Ambjorn:1996ne} at the discrete level.
In \cite{FMW2025a,FMW2025b} we showed that the perturbative amplitudes 
with respect to the string coupling constant 
for a class of dynamical triangulation (DT) models---described by the Hamiltonians in \cite{Watabiki:1993ym,Ambjorn:1996ne}---can be obtained via the CEO topological recursion.
It is then natural and intriguing to ask what form of string field Hamiltonian formalism is captured by the topological recursion formalism, and vice versa:
$$
\textrm{Hamiltonian formalism}\qquad
\stackrel{?}{\longleftrightarrow}\qquad
\textrm{topological recursion formalism}
$$
In this paper, we propose a string field Hamiltonian formalism which leads to the topological recursion for the class of spectral curves in \eqref{sp_curve_int}.%
\footnote{
In the context of the topological recursion formalism, 
a Hamiltonian representation of 
isomonodromic deformations of rational connections on $\mathfrak{gl}_2(\IC)$, 
which provides a generalization of the Painlev\'e Hamiltonian systems in 
\cite{Okamoto:P1,Okamoto:P2,Okamoto:P3,Okamoto:P4,OKSO:P5},
was presented in \cite{Marchal:2022hyz}.
Note that their Hamiltonian formalism, formulated in terms of Darboux coordinates, is 
different from our string field Hamiltonian formalism.}

This paper is organized as follows.
In Section \ref{sec:string_op_ex}, after introducing the string field Hamiltonian formalism, we briefly review the frameworks presented in \cite{Watabiki:1993ym,Ambjorn:1996ne}, along with our previous works \cite{FMW2025a,FMW2025b}, and motivate the proposal put forward in this paper.
In Sections \ref{sec:ham_tr} and \ref{sec:ham_ai}, we propose the Hamiltonians \eqref{gen_hamiltonian} and \eqref{gen_hamiltonian_ai} as working definitions 
for the class of spectral curves in \eqref{sp_curve_int}
---with an even and an odd number of branch points, respectively---
and show that the amplitudes defined by these Hamiltonians can be obtained perturbatively in the string coupling constant via the CEO topological recursion. 
As examples, we illustrate the $(2,2m-1)$ minimal discrete DT model \cite{Watabiki:1993ym}, the Penner model \cite{Penner1988}, and the 4D $\mathcal{N}=2$ $SU(2)$ gauge theory with $N_f=4$ hypermultiplets in Section \ref{subsec:ex_even}. 
We also illustrate, in Section \ref{subsec:ex_odd}, the $(2,2m-1)$ minimal continuum DT model \cite{Gubser:1993vx,Watabiki:1993ym,Ambjorn:1996ne}, its supersymmetric analogue, and the 4D $\mathcal{N}=2$ pure $SU(2)$ gauge theory.
In Appendix \ref{app:sp_to_sd}, we discuss a reverse construction of the Hamiltonians presented in Sections \ref{sec:ham_tr} and \ref{sec:ham_ai}, starting from the class of spectral curves given in \eqref{sp_curve_int}.
In Appendix \ref{app:decoupling}, as further illustrations of the Hamiltonians, we consider 4D $\mathcal{N}=2$ $SU(2)$ gauge theories with $N_f=0,1,2,3,4$ hypermultiplets from the viewpoint of hypermultiplet decoupling.

\section{Hamiltonian formalism}
\label{sec:string_op_ex}

In this section, we introduce the string field Hamiltonian formalism, and 
review a pure DT model as an example which leads to the topological recursion formalism.

\subsection{String operators and SD equation}
\label{subsec:string_op_sd}

We first define string operators \cite{Watabiki:1993ym,Ambjorn:1996ne}.

\begin{Def}[String operators]\label{def:string_op}
The string annihilation and string creation operators 
$\Psi(\ell)$ and $\Psd(\ell)$, $\ell \ge 1$, satisfy 
the commutation relations
\begin{align}
\left[\Psi(\ell), \Psd(\ell')\right]=\delta_{\ell,\ell'}\,,
\quad
\left[\Psi(\ell), \Psi(\ell')\right]
=\left[\Psd(\ell), \Psd(\ell')\right]=0\,,
\qquad \ell,\ell' \in {\IN}\,.
\label{string_com_rel}
\end{align}
For convenience we also introduce 
$\Psi(\ell)=\Psd(\ell) \equiv 0$ for $\ell \le 0$.
A vacuum $|\mathrm{vac}\rangle$ and its conjugate $\langle\mathrm{vac}|$ with pairing $\langle\mathrm{vac}|\mathrm{vac}\rangle=1$ are defined, in the bra-ket notation, by
\begin{align}
\Psi(\ell)\, |\mathrm{vac}\rangle=0\,,
\quad
\langle\mathrm{vac}|\, \Psd(\ell)=0\,.
\label{vac_gen}
\end{align}
Here $\Psi(\ell)$ (resp. $\Psd(\ell)$) is considered to annihilate (resp. create) a closed loop $S^1$ of circumference $\ell$.
\end{Def}

In this paper, we will introduce a $W^{(3)}$-type Hamiltonian $\mathcal{H}$ in Section \ref{sec:ham_tr} and a two-reduced $W^{(3)}$-type Hamiltonian $\mathcal{H}^{red}$ in Section \ref{sec:ham_ai}, composed of string operators up to three-string interaction terms, for two classes of spectral curves with an even and an odd number of branch points, respectively.%
\footnote{
We refer to the Hamiltonians \eqref{gen_hamiltonian} and \eqref{gen_hamiltonian_ai}, which include three-string interaction terms, as the $W^{(3)}$-type Hamiltonian and the two-reduced $W^{(3)}$-type Hamiltonian, respectively, following the terminology discussed in \cite{Ambjorn:1996ne} in the study of 2D quantum gravity (see also \cite{Dijkgraaf:1990rs,Fukuma:1990jw,Fukuma:1990yk}).}
A motivated example of $\mathcal{H}$ (resp. $\mathcal{H}^{red}$) is shown in Section \ref{subsubsec:discrete_pureDT} (resp. Section \ref{subsubsec:conti_pureDT}).
Our main objective in this paper is to show that 
the Hamiltonians $\mathcal{H}$ and $\mathcal{H}^{red}$ generate solutions determined by the CEO topological recursion.

For the first class of Hamiltonians $\mathcal{H}$, we introduce a discrete Laplace transform of string creation operators,
\begin{align}
\Psdw(x):=
\sum_{\ell \ge 1} x^{-\ell-1}\, \Psd(\ell)\,,
\label{psdw}
\end{align}
and the time evolution by $\mathcal{H}$ defines $\nn$-point amplitude
\begin{align}
f_{\nn}(\bm{x}_{I})=
\lim_{T \to \infty} \left\langle\mathrm{vac}\Big| \e^{-T \mathcal{H}}\,
\widehat{f}_{\nn}(\bm{x}_{I}) \Big|\mathrm{vac}\right\rangle,
\quad
\widehat{f}_{\nn}(\bm{x}_{I}):=
\Psdw(x_1) \Psdw(x_2) \cdots \Psdw(x_{\nn})\,,
\label{n_pt_amp}
\end{align}
where ${I}=\{1, \ldots, \nn\}$ and $\bm{x}_{I}=\{x_1, \ldots, x_{\nn}\}$.%
\footnote{
A certain boundary condition at $T=\infty$ should be required to determine the amplitude \eqref{n_pt_amp} uniquely (see also Remark \ref{rem:sp_curve_vac} in Section \ref{subsec:n1_sd}).
The well-definiteness of the amplitude at $T=\infty$ leads to the SD equation \eqref{sd_eq_gen}.
}
The SD equation for the $\nn$-point amplitude in string field Hamiltonian formalism is \cite{Ishibashi:1993nq},
\begin{align}
0&=\lim_{T \to \infty}\frac{\partial}{\partial T}
\left\langle\mathrm{vac}\Big| \e^{-T \mathcal{H}}\, 
\widehat{f}_{\nn}(\bm{x}_{I}) \Big|\mathrm{vac}\right\rangle
=
\lim_{T \to \infty}
\left\langle\mathrm{vac}\Big| \e^{-T \mathcal{H}}
\left[-\mathcal{H}, \widehat{f}_{\nn}(\bm{x}_{I})\right]
\Big|\mathrm{vac}\right\rangle
\nonumber\\
&=
\lim_{T \to \infty}
\sum_{i=1}^{\nn}
\left\langle\mathrm{vac}\Big| \e^{-T \mathcal{H}}\, 
\Psdw(x_1) \cdots \Psdw(x_{i-1})
\left[-\mathcal{H}, \Psdw(x_i)\right]
\Psdw(x_{i+1})\cdots \Psdw(x_{\nn})
\Big|\mathrm{vac}\right\rangle,
\label{sd_eq_gen}
\end{align}
where a condition, which is referred to as the ``no big-bang condition'',
\begin{align}
\mathcal{H} |\mathrm{vac}\rangle=0\,,
\label{no_bigbang}
\end{align}
is assumed and used in the second equality.
We can understand that the condition \eqref{no_bigbang} is imposed to avoid ``overcounting'' from the view point of dynamical triangulation (see e.g., Section 2.1.2 of \cite{FMW2025a}).

For the second class of Hamiltonians $\mathcal{H}^{red}$, we, instead, introduce 
\begin{align}
\vPsdw(x):=
\sum_{\ell \ge 1} x^{-\ell/2-1}\, \Psd(\ell)\,,
\label{psdw_v}
\end{align}
and $\nn$-point amplitude,
\begin{align}
\sff_{\nn}(\bm{x}_{I})=
\lim_{T \to \infty} \left\langle\mathrm{vac}\Big| \e^{-T \mathcal{H}^{red}}\,
\vPsdw(x_1) \vPsdw(x_2) \cdots \vPsdw(x_{\nn}) \Big|\mathrm{vac}\right\rangle.
\label{n_pt_amp_v}
\end{align}
As above, the SD equation for the $\nn$-point amplitude is
\begin{align}
0=
\lim_{T \to \infty}
\sum_{i=1}^{\nn}
\left\langle\mathrm{vac}\Big| \e^{-T \mathcal{H}^{red}}\, 
\vPsdw(x_1) \cdots \vPsdw(x_{i-1})
\left[-\mathcal{H}^{red}, \vPsdw(x_i)\right]
\vPsdw(x_{i+1})\cdots \vPsdw(x_{\nn})
\Big|\mathrm{vac}\right\rangle,
\label{sd_eq_gen_red}
\end{align}
where the ``no big-bang condition'',
\begin{align}
\mathcal{H}^{red} |\mathrm{vac}\rangle=0\,,
\label{no_bigbang_red}
\end{align}
is also assumed.

\subsection{Pure DT models}
\label{subsec:dt_rev}

In the following, we review the pure DT models \cite{Watabiki:1993ym,Ambjorn:1996ne} for 2D pure gravity as a concrete example of the string field Hamiltonian formalism summarized above. 
We present some computations of disk amplitudes to illustrate the connection with the CEO topological recursion in later sections.

\subsubsection{Discrete DT model for the pure gravity}\label{subsubsec:discrete_pureDT}

Here we briefly review the pure DT model consisting only of triangles (the basic-type discrete DT model), as presented in \cite{Watabiki:1993ym}, 
in order to motivate the construction of Hamiltonians intended for topological recursion.
The Hamiltonian $\mathcal{H}=\mathcal{H}_{\mathrm{basic}}$ of the basic-type discrete DT model, which belongs to a class of $W^{(3)}$-type Hamiltonians in this paper, is given by \cite{Watabiki:1993ym},%
\footnote{
Remark that the standard notation in DT models is found by $g_s \to G^{1/2}$, 
$\Psd(\ell) \to G^{-1/2} \Psd(\ell)$, $\Psi(\ell) \to G^{1/2} \Psi(\ell)$.\label{ft:dt}}
\begin{align}
\mathcal{H}_{\mathrm{basic}}&=
\sum_{\ell \ge 1} \ell 
\left(\Psd(\ell)-\kappa \Psd(\ell+1)-2\kappa \Psd(\ell-1)
\right)\Psi(\ell)
-g_s^{-1} \kappa\, \Psi(1) - 3g_s^{-1} \kappa\, \Psi(3)
\nonumber\\
&\ \
- g_s \kappa \sum_{\ell,\ell' \ge 1}
\left(\ell+\ell'-1\right) \Psd(\ell)\Psd(\ell')\Psi(\ell+\ell'-1)
- g_s \kappa \sum_{\ell,\ell' \ge 1} 
\ell \ell'\, \Psd(\ell+\ell'+1)\Psi(\ell)\Psi(\ell')\,,
\label{basic_h}
\end{align}
where $g_s$ is the string coupling constant and $\kappa \ge 0$ is the discrete cosmological constant, and $\mathcal{H}_{\mathrm{basic}}$ satisfies the no big-bang condition \eqref{no_bigbang} following by $\Psi(\ell)|\mathrm{vac}\rangle=0$.
Then, it is shown that the SD equation \eqref{sd_eq_gen} leads to
\begin{align}
0&=
g_s \kappa\, x_1^3\, f_{\nn+1}(x_1, \bm{x}_{I})
+  \left(\kappa\, x_1^2-x_1+2\kappa \right)f_{\nn}(\bm{x}_{I})
+g_s^{-1} \kappa \left(x_1^{-1}+x_1^{-3}\right) 
f_{\nn-1}(\bm{x}_{I \backslash \{1\}})
\nonumber\\
&\ \
+g_s \kappa \sum_{i=2}^{\nn} \partial_{x_i}
\frac{x_1^3 f_{\nn-1}(\bm{x}_{I \backslash \{i\}}) 
- x_i^3 f_{\nn-1}(\bm{x}_{I \backslash \{1\}})}{x_1-x_i}
+C_{\nn-1}(\bm{x}_{I \backslash \{1\}})\,,
\label{sp_sd_N_dt_d_sim}
\end{align}
where $C_{\nn-1}(\bm{x}_{I \backslash \{1\}})$ is 
a function of $\bm{x}_{I \backslash \{1\}}=\{x_2, \ldots, x_{\nn}\}$.
Note that the connected part $f_{\nn}^{\mathrm{con}}(\bm{x}_{I})$ of the $\nn$-point amplitude $f_{\nn}(\bm{x}_{I})$ for the basic-type discrete DT model is expanded around $x=\infty$, $\kappa=0$, $g_s=0$ as
\begin{align}
f_{\nn}^{\mathrm{con}}(\bm{x}_{I})=
\sum_{g \ge 0} g_s^{2g-2+n}
\sum_{\ell_1, \ldots, \ell_{\nn} \ge 1} x_1^{-\ell_1-1} \cdots x_{\nn}^{-\ell_{\nn}-1}
\sum_{N \ge 1} \sum_{S \in \mathcal{T}_{\nn}^{(g)}(\ell_1,\ldots,\ell_{\nn};N)} \kappa^{N}\,.
\label{dt_count}
\end{align}
Here $\mathcal{T}_{\nn}^{(g)}(\ell_1,\ldots,\ell_{\nn};N)$ is the set of all triangulated, oriented, and connected surfaces of genus $g$ with $\nn$ boundary loops of circumferences $\ell_1, \ldots, \ell_{\nn}$, and 
$N$ equilateral triangles of the same size.

For $n=1$ of \eqref{sp_sd_N_dt_d_sim}, in particular, 
\begin{align}
\begin{split}
0&=
g_s \kappa\, x_1^3\, f_{2}(x_1, x_1)
+  \left(\kappa\, x_1^2-x_1+2\kappa \right)f_{1}(x_1)
+g_s^{-1} \kappa \left(x_1^{-1}+x_1^{-3}\right) 
+C_{0}\,,
\\
C_{0}&=
-\kappa \lim_{T \to \infty} \left\langle\mathrm{vac}\Big| \e^{-T \mathcal{H}}\,
\Psd(1) \Big|\mathrm{vac}\right\rangle,
\label{sp_sd_sim_n1}
\end{split}
\end{align}
is obtained, and the asymptotic behavior of amplitudes under $g_s \to 0$:
\begin{align}
f_{1}(x) = g_s^{-1}\, f_{1}^{(0)}(x) + O(g_s^0)\,,\quad
f_{2}(x,x) = g_s^{-2}\, f_{1}^{(0)}(x)^2 + O(g_s^{-1})\,,
\end{align}
gives a planar SD equation for the disk amplitude $f_{1}^{(0)}(x)$:
\begin{align}
0&=
\kappa\, x^3\, f_{1}^{(0)}(x)^2
+  \left(\kappa\, x^2-x+2\kappa \right)f_{1}^{(0)}(x)
+ \kappa \left(x^{-1}+x^{-3}\right) 
+ C_{0}^{(0)}\,,
\nonumber\\
\Longleftrightarrow\ \
0&=
\left(f_{1}^{(0)}(x) 
+ \frac{1}{\kappa x^3}\left(\frac{\kappa}{2} x^2-\frac12 x+\kappa \right)\right)^2
\nonumber\\
&\qquad
-\frac{1}{4 \kappa^2 x^5}
\left(\kappa^2\, x^3 - 2 \kappa \left(1 + 2 C_{0}^{(0)}\right)x^2
+ x - 4\kappa\right),
\label{sp_sd_sim_n1_planar}
\end{align}
where $C_{0}^{(0)}=\lim_{g_s \to 0} g_s C_{0}$.
This planar SD equation defines a spectral curve of the basic-type discrete DT model:
\begin{align}
\begin{split}
&
y^2 = \frac{1}{4 \kappa^2 x^6}
\left(\kappa^2\, x^4 - 2 \kappa \left(1 + 2 C_{0}^{(0)}\right)x^3
+ x^2 - 4\kappa\, x\right),
\\
&
y := f_{1}^{(0)}(x) 
+ \frac{1}{\kappa x^3}\left(\frac{\kappa}{2} x^2-\frac12 x+\kappa \right).
\label{sp_curve_basic}
\end{split}
\end{align}
Here, by the one-cut ansatz of the spectral curve
\begin{align}
y=\frac{q(x)}{p(x)} \sqrt{\sigma(x)}\,,
\quad
p(x)=\kappa\, x^3\,,
\quad
q(x)= \frac{\kappa}{2} \left(x - \gamma \right),
\quad
\sigma(x)=x \left(x - \alpha\right),
\label{discrete_DT_sim_sp_curve}
\end{align}
the comparison between \eqref{sp_curve_basic} and \eqref{discrete_DT_sim_sp_curve} gives relations
\begin{align}
2\alpha \gamma+\gamma^2=\frac{1}{\kappa^2}\,,\quad
\alpha \gamma^2=\frac{4}{\kappa}\,,\quad
\alpha + 2\gamma = \frac{2}{\kappa}\left(1+2C_0^{(0)}\right),
\label{rel_basic_dt}
\end{align}
which determine $\alpha$ and $\gamma$ in terms of $\kappa$, and the constant $C_{0}^{(0)}$ is also determined as well.

One can show that the CEO topological recursion by Chekhov, Eynard and Orantin \cite{Chekhov:2006vd,Eynard:2007kz}, 
with the spectral curve \eqref{sp_curve_basic} as input, 
provides perturbative solutions around $g_s=0$ to 
the SD equation \eqref{sp_sd_N_dt_d_sim} for general $n$ \cite{FMW2025a}.

\begin{rem}\label{rem:critical_kappa}
The first two equations in \eqref{rel_basic_dt} yield the equation 
$\kappa^2 \gamma^3-\gamma+8\kappa=0$ for $\gamma$, 
and a solution to this equation is chosen so that the disk amplitude $f_{1}^{(0)}(x)$ admits the expansion given in \eqref{dt_count}. 
In particular, the behavior of the leading disk amplitude
\begin{align}
\begin{split}
\lim_{g_s \to 0} g_s
\lim_{T \to \infty} \left\langle\mathrm{vac}\Big| \e^{-T \mathcal{H}}\,
\Psd(1) \Big|\mathrm{vac}\right\rangle
=
-\frac{1}{\kappa}C_0^{(0)}
&=\frac{1}{2\kappa}-\frac{1}{\kappa\, \gamma^2}-\frac{\gamma}{2}
\\
&=\sum_{N \ge 1} \sum_{S \in \mathcal{T}_{1}^{(0)}(1;N)} \kappa^{N}\,,
\label{lead_disk_basic_dt}
\end{split}
\end{align}
around $\kappa=0$ determines the appropriate solution which behaves as $\gamma=\kappa^{-1}-4\kappa+O(\kappa^3)$ as $\kappa \to 0$, and actually 
the leading disk amplitude \eqref{lead_disk_basic_dt} behaves as $\kappa + O(\kappa^3)$ only for this solution.%
\footnote{
The other two solutions, which are inappropriate, behave as 
$\gamma=-\kappa^{-1}-4\kappa+O(\kappa^3)$ and 
$\gamma=8\kappa+O(\kappa^3)$ as $\kappa \to 0$, and 
they do not reproduce the leading behavior of the disk amplitude \eqref{lead_disk_basic_dt}, which is $\kappa^{-1} + O(\kappa)$ and $-\kappa^{-3}/64 + O(\kappa^{-1})$, respectively.}
Furthermore, by requiring that the solution yields a multiple root of the equation $\kappa^2 \gamma^3-\gamma+8\kappa=0$, 
one finds the critical value
\begin{align}
\kappa_c=\frac{1}{2 \cdot 3^{3/4}}\,,
\label{crit_kappa}
\end{align}
which is the reasonable maximal value of $\kappa$, as well as the critical values 
$\gamma_c=1/(\sqrt{3}\kappa_c)=2\cdot 3^{1/4}$ of $\gamma$ and 
$\alpha_c=4/(\kappa_c \gamma_c^2)=2\cdot 3^{1/4}=\gamma_c$ of $\alpha$.
In addition, for the critical value \eqref{crit_kappa}, the critical value $x_c$ of $x$ is given by the value that coincides with the branch point $\alpha_c$ of the spectral curve \eqref{discrete_DT_sim_sp_curve}, namely,
\begin{align}
x_c=\alpha_c=2 \cdot 3^{1/4}\,.
\label{crit_x}
\end{align}
\end{rem}

\subsubsection{Continuous DT model for the pure gravity}\label{subsubsec:conti_pureDT}

The Hamiltonian $\mathcal{H}^{red}=\mathcal{H}_{\mathrm{conti.DT}}^{\mathrm{pure}}$ of the continuum pure DT model, which belongs to a class of two-reduced $W^{(3)}$-type Hamiltonians in this paper, is given by \cite{Ambjorn:1996ne},%
\footnote{
Remark that the standard notation in continuum DT models is found by $g_s \to (2\mathcal{G})^{1/2}$, $\Psd(\ell) \to (2\mathcal{G})^{-1/2} \phi_{\ell}^{\dagger}$, $\Psi(\ell) \to (2\mathcal{G})^{1/2} \phi_{\ell}$.
Here, $\phi_{\ell}^{\dagger}$ and $\phi_{\ell}$ are distinct from the string operators in the literature; they actually denote the coefficients in the mode expansions of the Laplace-transformed string operators.
\label{ft:cont_dt}}
\begin{align}
\mathcal{H}_{\mathrm{conti.DT}}^{\mathrm{pure}}&=
-\frac{g_s}{4} \Psi(4)
- \frac{g_s}{8} 
\left(\Psi(1)-\frac{3\mu}{2}g_s^{-1}\right)^2\Psi(2)
- 2 \sum_{\ell \ge 1} \ell\, \Psd(\ell+1)\Psi(\ell)
\nonumber\\
&\ \
+ \frac{3\mu}{4}\sum_{\ell \ge 4} \ell\, \Psd(\ell-3)\Psi(\ell)
- g_s \sum_{\ell, \ell' \ge 1}
\left(\ell+\ell'+4\right) \Psd(\ell)\Psd(\ell')\Psi(\ell+\ell'+4)
\nonumber\\
&\ \
-\frac{g_s}{4}
\sum_{\ell, \ell' \ge 1}
\ell \ell'\, \Psd(\ell+\ell'-4) \Psi(\ell) \Psi(\ell')\,,
\label{cont_pure_dt_ham}
\end{align}
where $\mu \ge 0$ is the cosmological constant, and $\mathcal{H}_{\mathrm{conti.DT}}^{\mathrm{pure}}$ satisfies the no big-bang condition \eqref{no_bigbang_red}. 
By a similar way as in Section \ref{subsubsec:discrete_pureDT}, one obtains the one-cut spectral curve associated to the Hamiltonian $\mathcal{H}_{\mathrm{conti.DT}}^{\mathrm{pure}}$ as
\begin{align}
y:=
\sff_1^{(0)}(x) + x^{3/2}-\frac{3\mu}{8}x^{-1/2}
=\left(x -\frac{\sqrt{\mu}}{2}\right) \sqrt{x + \sqrt{\mu}}\,,
\label{cont_DT_sp_curve}
\end{align}
where $\sff_{1}(x) = g_s^{-1}\, \sff_{1}^{(0)}(x) + O(g_s^0)$ as $g_s \to 0$, 
and finds that the perturbative $n$-point amplitudes around $g_s=0$ are obtained \cite{FMW2025a} via the CEO topological recursion.

\begin{rem}\label{rem:cont_limit_basic}
The continuum pure DT model is obtained, by a continuum limit, from the discrete pure DT model in Section \ref{subsubsec:discrete_pureDT}. 
The continuum limit is achieved by setting
\begin{align}
x = x_c\, \e^{\epsilon\, \xi}\,,
\qquad
\kappa = \kappa_c\, \e^{-\frac{3}{16} \epsilon^2\, \mu}\,,
\qquad
g_s = \frac{\epsilon^{5/2}}{2}\, g_s'\,,
\label{cont_v_pure_dt}
\end{align}
and zooming in to the branch point $x=\alpha$, of the spectral curve \eqref{discrete_DT_sim_sp_curve} of the basic-type discrete DT model, 
for the critical values $\kappa=\kappa_c$ in \eqref{crit_kappa}.
Actually, the limit $\epsilon \to 0_{+}$ gives the spectral curve \eqref{cont_DT_sp_curve} of the continuum pure DT model:
\begin{align}
2x_c\, \epsilon^{-3/2}\, y = \left(\xi -\frac{\sqrt{\mu}}{2}\right) \sqrt{\xi + \sqrt{\mu}} + O(\epsilon)\,,
\label{eq:cont_pure_DT_spectral_curve}
\end{align}
up to an overall factor (see e.g., \cite{FMW2025a} in detail).
\end{rem}

\section{Hamiltonians for spectral curves with even number of branch points}
\label{sec:ham_tr}

In this section, we focus on a class of spectral curves with $(h+1)$-cut ($h \ge 0$) and even number of branch points, described by
\begin{align}
y=M(x) \sqrt{\sigma(x)}\,,
\quad M(x)=\frac{q(x)}{p(x)}\,,
\quad \sigma(x)=\prod_{k=1}^{2h+2} (x - \alpha_k)\,,
\label{sp_curve_g}
\end{align}
where $\alpha_k$ ($k=1, 2, \ldots, 2h+2$, $\alpha_i \ne \alpha_j$ for $i\ne j$) are branch points, and 
$p(x)$ and $q(x)$ are polynomials of $x$ of degrees 
$r$ and $s-h \ge 0$, respectively, as
\begin{align}
\begin{split}
p(x)&=\sum_{k=\mu}^{r} p_k\, x^{k}
\quad 
(p_{\mu} \ne 0, p_r \ne 0, r \ge \mu \ge 0)\,,
\\
q(x)&=\sum_{k=0}^{s-h} q_k\, x^k
\quad (q_{s-h} \ne 0, s \ge h)\,.
\label{pq_def}
\end{split}
\end{align}
We will associate a $W^{(3)}$-type Hamiltonian with the spectral curve \eqref{sp_curve_g}.

\subsection{\texorpdfstring{$W^{(3)}$}{W{(3)}}-type Hamiltonian and amplitudes}
\label{subsec:w3_h}

The spectral curve \eqref{discrete_DT_sim_sp_curve} of the basic-type discrete DT model is an example of the above class of spectral curves with $r=\mu=3$, $s=1$ and $h=0$. 
Remarkably, we see that the parameter $p_3=\kappa$, which determines $p(x)$, 
barely appears in the three-string interaction terms of the Hamiltonian \eqref{basic_h}, whereas the parameters $\alpha$ and $\gamma$, which determine $\sigma(x)$ and $q(x)$, do not barely appear in the Hamiltonian and are rather determined from the relations \eqref{rel_basic_dt}.
This is a key piece to construct a generalized Hamiltonian $\mathcal{H}$ such that the spectral curve \eqref{sp_curve_g} is associated and the amplitudes defined by $\mathcal{H}$ obey the CEO topological recursion, i.e., 
it is reasonable to assume that 
$p_k$ ($k=\mu, \mu+1, \ldots, r$) barely appear in three-string interaction terms of Hamiltonian, 
whereas $\alpha_k$ ($k=1, 2, \ldots, 2h+2$) and $q_k$ ($k=0, 1, \ldots, s-h$) do not barely appear in Hamiltonian and are rather determined from certain relations.
In Appendix \ref{app:rev_sd}, we construct a generalized Hamiltonian, reversely starting from the spectral curve \eqref{sp_curve_g}, by consulting the derivation of the spectral curve \eqref{sp_curve_basic} of the basic-type discrete DT model from the planar SD equation \eqref{sp_sd_sim_n1_planar} for the Hamiltonian \eqref{basic_h}. 
As a result, we find that the following $W^{(3)}$-type Hamiltonian is associated with the spectral curve \eqref{sp_curve_g}, and we adopt it as our working definition in this paper.

\begin{Def}\label{def:hamiltonian}
A $W^{(3)}$-type Hamiltonian $\mathcal{H}$ (see Fig. \ref{fig:hamiltonian}) is
\begin{align}
\begin{split}
-\mathcal{H}&=
g_s^{-1}
\sum_{k=1}^{\mu+2} k \, \kappa_{k}\, \Psi(k)
+
\tau_{0} \Psi(1)^2
+2\sum_{k=0}^{s+2} \tau_k
\sum_{\ell \ge 1}
\ell\, \Psd(\ell+k-2)\Psi(\ell)
\\
&\ \
+g_s \sum_{k=\mu}^{r} p_k
\sum_{\ell, \ell' \ge 1}
\left(\ell+\ell'-k+2\right)
\Psd(\ell)\Psd(\ell')\Psi(\ell+\ell'-k+2)
\\
&\ \
+g_s \sum_{k=\mu}^{r} p_k
\sum_{\ell, \ell' \ge 1}
\ell \ell'\, 
\Psd(\ell+\ell'+k-2) \Psi(\ell) \Psi(\ell')\,,
\label{gen_hamiltonian}
\end{split}
\end{align}
where $g_s$ is the string coupling constant.
\begin{figure}[t]
\centering
\includegraphics[width=140mm]{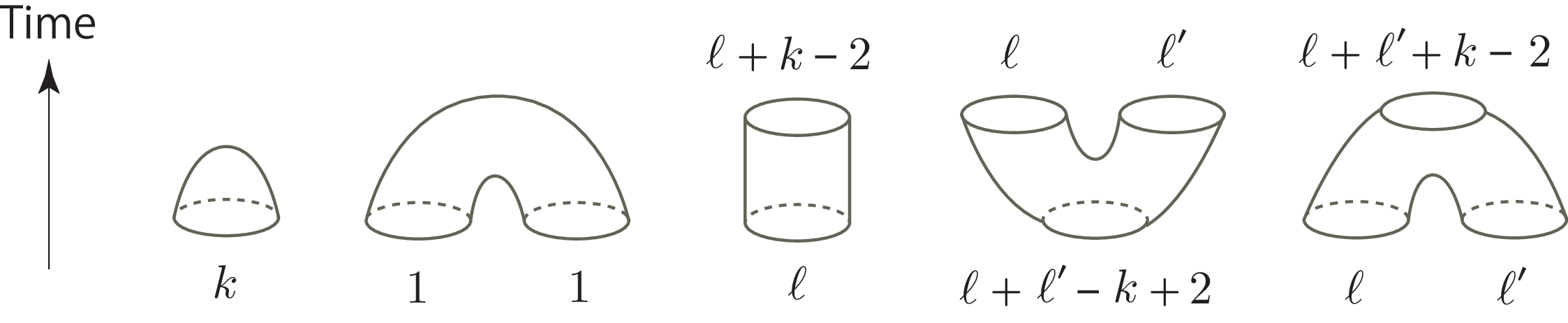}
\caption{Graphical representation of building blocks of the Hamiltonian \eqref{gen_hamiltonian} which generates the time evolution from bottom to top.}
\label{fig:hamiltonian}
\end{figure}
The parameters $\kappa_k$ ($k=1, 2, \ldots, \mu$), 
$\tau_k$ ($k=0, 1, \ldots, s+2$) and 
$p_k$ ($k=\mu, \mu+1, \ldots, r$, $p_{\mu} \ne 0$, $p_r \ne 0$) are independent parameters, 
whereas $\kappa_k$ ($k=\mu+1, \mu+2$) are not independent parameters determined 
by the planar SD equation in Section \ref{subsec:n1_sd} 
(see also \eqref{condition_coeff_cc_1}) as
\begin{align}
\kappa_{\mu+1}=
\frac{2 \tau_{0} \tau_{1}}{p_{\mu}}-\frac{\tau_{0}^2p_{\mu+1}}{p_{\mu}^2}\,,\ \
\kappa_{\mu+2}=\frac{\tau_{0}^2}{p_{\mu}}\,.
\label{condition_coeff_cc_2}
\end{align}
Here we impose a condition
\begin{align}
s+2 \ge r\,,
\label{condition_param_gen}
\end{align}
for obtaining the spectral curve \eqref{sp_curve_g} from the planar SD equation in Section \ref{subsec:n1_sd} (see also \eqref{condition_param}).
\end{Def}

\begin{rem}\label{rem:hamiltonian_param}
The input data for the Hamiltonian \eqref{gen_hamiltonian} are interpreted as
\begin{itemize}
\item
$\mu$ parameters $\kappa_k$ ($k=1, 2, \ldots, \mu$) are associated with $S^1 \to S^0$ (tadpole) processes (the first one of Fig. \ref{fig:hamiltonian}),
\item
$s+3$ parameters $\tau_k$ ($k=0, 1, \ldots, s+2$) are associated with $S^1 \to S^1$ (propagator) processes (the third one of Fig. \ref{fig:hamiltonian}),
\item 
$r-\mu+1$ parameters $p_k$ ($k=\mu, \mu+1, \ldots, r$) are associated with $S^1 \to S^1\times S^1$ and $S^1 \times S^1 \to S^1$ (three-string interaction) processes (the fourth and fifth ones of Fig. \ref{fig:hamiltonian}).
\end{itemize}
The Hamiltonian also has
\begin{itemize}
\item
two parameters $\kappa_{\mu+1}$ and $\kappa_{\mu+2}$ determined by \eqref{condition_coeff_cc_2}, and note that, when $\tau_{0}=0$, 
$\kappa_{\mu+1}=\kappa_{\mu+2}=0$.
\end{itemize}
The derivation of the spectral curve \eqref{sp_curve_g} as a planar SD equation will be discussed in Section \ref{subsec:n1_sd}, where the constrained parameters in \eqref{condition_coeff_cc_2} and the condition \eqref{condition_param_gen} will also be clarified.
The $r + s + 4 + h -\mu$ parameters of the spectral curve \eqref{sp_curve_g} are determined, in terms of the $r + s + 4$ independent parameters of the Hamiltonian \eqref{gen_hamiltonian}, by the planar SD equation and 
the $h$ $A$-periods in \eqref{periods_c}.
As a result, we see that the Hamiltonian contains $\mu$ redundant parameters with respect to the determination of the spectral curve.
\end{rem}

We now consider the $\nn$-point amplitude $f_{\nn}(\bm{x}_{I})$ in \eqref{n_pt_amp} determined by the Hamiltonian \eqref{gen_hamiltonian}. 
Note that it satisfies the no big-bang condition $\mathcal{H} |\mathrm{vac}\rangle=0$ in \eqref{no_bigbang}, due to $\Psi(\ell)|\mathrm{vac}\rangle=0$. Then, the following proposition is established.

\begin{prop}\label{prop:sd_eq}
The SD equation \eqref{sd_eq_gen} yields
\begin{align}
0&=
g_s \sum_{i=1}^{\nn} \partial_{x_i} \left(
\left(p(x_i)\, 
f_{\nn+1}(x_i, \bm{x}_{I})\right)_{\mathrm{irreg}(x_i)}
\right)
\nonumber\\
&\ \
+ \sum_{i=1}^{\nn} \partial_{x_i} \left( 
\left(2 \Delta(x_i)\,  f_{\nn}(\bm{x}_{I})\right)_{\mathrm{irreg}(x_i)}\right)
+ g_s^{-1} \sum_{i=1}^{\nn} \partial_{x_i} \left(
K_{-}(x_i)\, f_{\nn-1}(\bm{x}_{I \backslash \{i\}})\right)
\nonumber\\
&\ \
+ 2g_s \sum_{1 \le i < j \le \nn} \partial_{x_i}\partial_{x_j}
\frac{\left(p(x_i)\, f_{\nn-1}(\bm{x}_{I \backslash \{j\}})\right)_{\mathrm{irreg}(x_i)}
-\left(p(x_j)\, f_{\nn-1}(\bm{x}_{I \backslash \{i\}})\right)_{\mathrm{irreg}(x_j)}}{x_i-x_j}
\nonumber\\
&\ \
+ 2 \sum_{1 \le i < j \le \nn} \partial_{x_i}\partial_{x_j}
\frac{\Delta(x_i)_{\mathrm{irreg}(x_i)}
-\Delta(x_j)_{\mathrm{irreg}(x_j)}}{x_i-x_j}
\, f_{\nn-2}(\bm{x}_{I \backslash \{i,j\}})\,,
\label{key_sd}
\end{align}
where $F(x)_{\mathrm{irreg}(x)}:=F(x)-F(x)_{\mathrm{reg}(x)}$ for a function $F(x)$ of $x$, and 
$\mathrm{reg}(x)$ denotes the regular part of the Laurent series of $x$ at $x=0$.
Here $p(x)$ is defined in \eqref{pq_def}, and
\begin{align}
\Delta(x):=\sum_{k=-1}^{s+1}\tau_{k+1}\, x^k\,,
\qquad
K_{-}(x):=\sum_{k=1}^{\mu+2} \kappa_k\, x^{-k}\,.
\label{delta_k_def}
\end{align}
\end{prop}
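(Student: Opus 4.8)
The plan is to evaluate the SD equation \eqref{sd_eq_gen} by computing the commutator $[-\mathcal{H},\Psdw(x_i)]$ explicitly and then reorganizing the result into the amplitudes $f_{n+1}$, $f_n$, $f_{n-1}$, $f_{n-2}$. The only primitive contraction needed is $[\Psi(\ell),\Psdw(x_i)]=x_i^{-\ell-1}$, which follows at once from \eqref{string_com_rel} and the definition \eqref{psdw}; since the $\Psd$'s mutually commute, every commutator simply replaces one annihilation operator $\Psi(\ell)$ in $-\mathcal{H}$ by the c-number $x_i^{-\ell-1}$. I would therefore classify the five groups of terms in \eqref{gen_hamiltonian} by their $\Psi$-content: the tadpole $\Psi(k)$, the propagator $\Psd\Psi$ and the splitting vertex $\Psd\Psd\Psi$ are linear in $\Psi$, whereas the merging vertex $\Psd\Psi\Psi$ and the term $\tau_0\Psi(1)^2$ are quadratic in $\Psi$.

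For the groups linear in $\Psi$ the commutator produces either a pure c-number or operators built only from $\Psd$'s, and the result is a single-derivative term. The tadpole gives $g_s^{-1}\sum_k k\kappa_k x_i^{-k-1}$, which multiplies the remaining product to yield $f_{n-1}$; resumming against $\sum_k\kappa_k x_i^{-k}=K_{-}(x_i)$ reproduces the $g_s^{-1}\partial_{x_i}\!\left(K_{-}(x_i)f_{n-1}\right)$ term, and since $K_{-}$ is already purely singular at $x_i=0$ no projection is needed. The propagator leaves a single $\Psd$ that is reabsorbed into the $x_i$-slot, regenerating $f_n(\bm{x}_I)$ with coefficient $2\Delta(x_i)$ from \eqref{delta_k_def}; the splitting vertex leaves two creation operators at the same point, producing $f_{n+1}(x_i,\bm{x}_I)$ (with $x_i$ repeated) with coefficient $p(x_i)$. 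In each case the \emph{momentum weights} appearing explicitly in \eqref{gen_hamiltonian} are exactly what generate the outer $\partial_{x_i}$, via $\ell\,x_i^{-\ell-1}=-\partial_{x_i}x_i^{-\ell}$, while the index shift $\ell\mapsto m$ combined with the original constraint $\ell\ge1$ truncates the Laurent series so that only the part singular at $x_i=0$ survives, which is precisely the operation $(\,\cdot\,)_{\mathrm{irreg}(x_i)}$.

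The quadratic-in-$\Psi$ groups generate the double sums. Here the commutator leaves one annihilation operator $\Psi$, which I would commute rightward through $\Psdw(x_{i+1}),\dots,\Psdw(x_n)$ until it annihilates $|\mathrm{vac}\rangle$ by \eqref{vac_gen}, contracting once against some $\Psdw(x_j)$ with $j>i$; the operators to its left remain as spectators forming part of the reduced amplitude. For the merging vertex this contracts one momentum against $x_i$ and the other against $x_j$, while the surviving creation operator is reabsorbed, together with the weight $x_i^{-\ell-1}$, into the external variable $x_i$ of $f_{n-1}(\bm{x}_{I \backslash \{j\}})$; the two weights $\ell,\ell'$ become $\partial_{x_i}\partial_{x_j}$, and resumming the two series against $\sum_k p_k x^k$ telescopes into the antisymmetric quotient $\frac{(p(x_i)f_{n-1})_{\mathrm{irreg}(x_i)}-(p(x_j)f_{n-1})_{\mathrm{irreg}(x_j)}}{x_i-x_j}$, exactly as in the passage from $n=1$ to general $n$ for the basic DT model in \eqref{sp_sd_N_dt_d_sim}. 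Summing the commutator position $i$ over $1,\dots,n$ collects each unordered pair once as $\sum_{1\le i<j\le n}$, the summand being symmetric because the quotient and $\partial_{x_i}\partial_{x_j}$ are. The term $\tau_0\Psi(1)^2$ instead contracts both of its $\Psi(1)$'s against external legs (the factors $x^{-2}$ rewritten as $-\partial_x x^{-1}$ supply the two derivatives), reduces the amplitude to $f_{n-2}(\bm{x}_{I \backslash \{i,j\}})$, and contributes only $\tau_0 x^{-1}=\Delta(x)_{\mathrm{irreg}(x)}$, so it may be written with the full $\Delta$, reproducing the last line of \eqref{key_sd}.

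The main obstacle will be the bookkeeping that turns the summation constraints into the projections $(\,\cdot\,)_{\mathrm{irreg}}$ and produces the correct difference-quotient kernels: one must verify that each index shift together with $\ell,\ell'\ge1$ leaves exactly the singular part at the origin, and that the telescoping of the two series in the quadratic vertices genuinely assembles the antisymmetric combination with the surviving momentum reabsorbed as a bona fide external variable. A secondary point is to justify moving the leftover $\Psi$ only to the right and to control the $T\to\infty$ limit using the no big-bang condition \eqref{no_bigbang}, so that the boundary term vanishes and \eqref{sd_eq_gen} holds; once these structural facts are in place, the residual signs and combinatorial prefactors are a routine, if delicate, check.
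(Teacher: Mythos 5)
Your proposal is correct and follows essentially the same route as the paper's proof: the paper likewise computes the commutator of $-\mathcal{H}$ with a creation operator (in mode space, its Eq.~\eqref{com_H1}), lets the leftover annihilation operators from the $\Psd\Psi\Psi$ and $\tau_0\Psi(1)^2$ terms contract against the creation operators to their right, and reassembles the weighted mode sums into the derivative, $\mathrm{irreg}$-projection, and difference-quotient structures of \eqref{key_sd}. Working directly with $\Psdw(x_i)$ instead of $\Psd(\ell_i)$ is only a cosmetic difference, and your identification of the mechanisms (weights $\ell\mapsto\partial_x$, index shifts with $\ell\ge 1$ giving the $\mathrm{irreg}$ projection, telescoping giving the antisymmetric quotients, and rightward contraction yielding $\sum_{i<j}$) matches the paper's computation.
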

\begin{proof}
By
\begin{align}
&
\left[-\mathcal{H}, \Psd(\ell)\right]
\nonumber\\
&=
g_s\, \ell \sum_{k=\mu}^{r} p_k\, 
\sum_{\substack{\ell',\ell''\ge 1 \\ \ell'+\ell''=\ell+k-2}}
\Psd(\ell')\Psd(\ell'')
+ 2\ell \sum_{k=-1}^{s+1}\tau_{k+1}\, \Psd(\ell+k-1)
+ g_s^{-1}\, \ell\, \kappa_{\ell} \sum_{k=1}^{\mu+2} \delta_{k, \ell}
\nonumber\\
&\ \
+2g_s \sum_{k=\mu}^{r} p_k
\sum_{\ell' \ge 1} \ell \ell'\,
\Psd(\ell+\ell'+k-2) \Psi(\ell')
+2 \tau_{0}\, \Psi(1)\, \delta_{\ell,1}\,,
\label{com_H1}
\end{align}
we obtain
\begin{align}
&
\Psd(\ell_1) \cdots \Psd(\ell_{i-1})
\left[-\mathcal{H}, \Psd(\ell_i)\right]
\Psd(\ell_{i+1})\cdots \Psd(\ell_{\nn})
\nonumber
\\
&=
\Psd(\ell_1)\cdots \breve{\Psi}^{\dagger}(\ell_i)\cdots \Psd(\ell_{\nn})
\nonumber\\
&\
\times
\Biggl(
g_s\, \ell_i \sum_{k=\mu}^{r} p_k\, 
\sum_{\substack{\ell,\ell'\ge 1 \\ \ell+\ell'=\ell_i+k-2}}
\Psd(\ell)\Psd(\ell')
+ 2 \ell_i \sum_{k=-1}^{s+1}\tau_{k+1}\, \Psd(\ell_i+k-1)
+ g_s^{-1}\, \ell_i\, \kappa_{\ell_i} \sum_{k=1}^{\mu+2} \delta_{k, \ell_i}
\nonumber\\
&\hspace{2.5em}
+2g_s \sum_{k=\mu}^{r} p_k
\sum_{\ell \ge 1} \ell_i \ell\,
\Psd(\ell_i+\ell+k-2)\Psi(\ell)
+2 \tau_{0}\, \Psi(1)\, \delta_{\ell_i,1}
\Biggr)
\nonumber\\
&\
+\sum_{j=i+1}^{\nn} \Biggl(
2g_s \sum_{k=\mu}^{r} p_k \ell_i \ell_j\,
\Psd(\ell_i+\ell_j+k-2)
+2 \tau_{0}\, \delta_{\ell_i, 1}\, \delta_{\ell_j, 1}
\Biggr)
\nonumber\\
&\hspace{18em}
\times
\Psd(\ell_1)\cdots \breve{\Psi}^{\dagger}(\ell_i)\cdots 
\breve{\Psi}^{\dagger}(\ell_j) \cdots \Psd(\ell_{\nn})\,,
\end{align}
where $\breve{\Psi}^{\dagger}(\ell)$ indicates the exclusion of $\Psd(\ell)$.
As a result, the SD equation \eqref{sd_eq_gen} gives \eqref{key_sd}.
\end{proof}

Proposition \ref{prop:sd_eq} leads to the following proposition.

\begin{prop}\label{prop:sd_eq_sep}
Symmetric solutions $f_{\nn}(\bm{x}_{I})$ in the variables $\bm{x}_{I}$ of the separated SD equation
\begin{align}
0&=
g_s \left(p(x_1)\, f_{\nn+1}(x_1, \bm{x}_{I})\right)_{\mathrm{irreg}(x_1)}
\nonumber\\
&\ \
+ \left(2 \Delta(x_1)\,  f_{\nn}(\bm{x}_{I})\right)_{\mathrm{irreg}(x_1)}
+ g_s^{-1} K_{-}(x_1)\, f_{\nn-1}(\bm{x}_{I \backslash \{1\}})
\nonumber\\
&\ \
+g_s \sum_{i=2}^{\nn} \partial_{x_i}
\frac{
\left(p(x_1)\, f_{\nn-1}(\bm{x}_{I \backslash \{i\}})\right)_{\mathrm{irreg}(x_1)}
-\left(p(x_i)\, f_{\nn-1}(\bm{x}_{I \backslash \{1\}})\right)_{\mathrm{irreg}(x_i)}}{x_1-x_i}
\nonumber\\
&\ \
+ \sum_{i=2}^{\nn} \partial_{x_i}
\frac{\Delta(x_1)_{\mathrm{irreg}(x_1)}
-\Delta(x_i)_{\mathrm{irreg}(x_i)}}{x_1-x_i}
\, f_{\nn-2}(\bm{x}_{I \backslash \{i,j\}})
+ C_{\nn-1}(\bm{x}_{I \backslash \{1\}})\,,
\label{key_sp_sd}
\end{align}
are also solutions to the equation \eqref{key_sd}, 
where $C_{\nn-1}(\bm{x}_{I \backslash \{1\}})$ is 
a function of $\bm{x}_{I \backslash \{1\}}=\{x_2, \ldots, x_{\nn}\}$.
\end{prop}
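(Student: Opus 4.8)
The plan is to recognize that the multi-variable equation \eqref{key_sd} of Proposition \ref{prop:sd_eq} is precisely the symmetrized total derivative of the separated equation \eqref{key_sp_sd}. For each $i \in I$ let $\mathcal{E}_i$ denote the expression obtained from the right-hand side of \eqref{key_sp_sd} by dropping the $x_1$-independent term $C_{\nn-1}$ and promoting $x_i$ to the distinguished variable in place of $x_1$; thus when $f_{\nn}$ solves \eqref{key_sp_sd} one has $\mathcal{E}_1=-C_{\nn-1}(\bm{x}_{I \backslash \{1\}})$. First I would establish the identity
\[
\textrm{RHS of \eqref{key_sd}}=\sum_{i=1}^{\nn}\partial_{x_i}\mathcal{E}_i .
\]
The single-sum terms match term by term: applying $\partial_{x_i}$ to the first three summands of $\mathcal{E}_i$ and summing over $i$ reproduces exactly the first and second lines of \eqref{key_sd}, with $p(x_i)$, $\Delta(x_i)$ and $K_{-}(x_i)$ playing their respective roles.

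The two-body terms carry the only genuine bookkeeping. After applying $\partial_{x_i}$ to the difference-quotient parts of $\mathcal{E}_i$ and summing over $i$, one obtains double sums $\sum_{i}\sum_{k\neq i}$ over
\[
g_s\,\partial_{x_i}\partial_{x_k}\frac{(p(x_i)f_{\nn-1}(\bm{x}_{I\backslash\{k\}}))_{\mathrm{irreg}(x_i)}-(p(x_k)f_{\nn-1}(\bm{x}_{I\backslash\{i\}}))_{\mathrm{irreg}(x_k)}}{x_i-x_k}
\]
together with the analogous $\Delta$-term carrying the factor $f_{\nn-2}(\bm{x}_{I\backslash\{i,k\}})$ (this is how the index in the last line of \eqref{key_sp_sd} is to be read, namely as $\bm{x}_{I\backslash\{1,i\}}$). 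In each summand the difference quotient is invariant under $i\leftrightarrow k$, since numerator and denominator each change sign while $f_{\nn-1}$ and $f_{\nn-2}$ are symmetric, and $\partial_{x_i}\partial_{x_k}$ is manifestly symmetric. Hence each double sum $\sum_{i}\sum_{k\neq i}$ collapses to $2\sum_{i<k}$, which is precisely the third and fourth lines of \eqref{key_sd}. This symmetrization step is where I expect the main effort to lie, as it is the point at which the per-variable sums reassemble into the $\sum_{i<j}$ sums of Proposition \ref{prop:sd_eq}.

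With the identity in hand, the conclusion follows from symmetry of the amplitudes. Since $f_{\nn}$ and the neighbouring $f_{\nn+1}$, $f_{\nn-1}$, $f_{\nn-2}$ are symmetric in their arguments, applying the transposition $(1\,i)$ to \eqref{key_sp_sd} leaves every amplitude invariant and carries $\mathcal{E}_1$ to $\mathcal{E}_i$, sending the constant term to a function of $\bm{x}_{I\backslash\{i\}}$. Thus any symmetric $f_{\nn}$ solving \eqref{key_sp_sd} satisfies $\mathcal{E}_i=-\widetilde{C}_i$ where $\widetilde{C}_i$ depends only on $\bm{x}_{I\backslash\{i\}}$ and hence is independent of $x_i$. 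Therefore $\partial_{x_i}\mathcal{E}_i=-\partial_{x_i}\widetilde{C}_i=0$ for every $i$, and summing over $i$ yields $\sum_{i=1}^{\nn}\partial_{x_i}\mathcal{E}_i=0$, which is \eqref{key_sd}. The substantive work is thus entirely the $i\leftrightarrow k$ symmetry check of the two-body summands; once that is done, the reduction from \eqref{key_sp_sd} to \eqref{key_sd} is purely formal and relies only on the symmetry of the $f_{\nn}$'s and the fact that $C_{\nn-1}$ is annihilated by the distinguished derivative.
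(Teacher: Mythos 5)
Your proof is correct and coincides with the paper's (implicit) argument: the paper states Proposition \ref{prop:sd_eq_sep} as a direct consequence of Proposition \ref{prop:sd_eq}, precisely because the right-hand side of \eqref{key_sd} is the sum over $i$ of $\partial_{x_i}$ applied to the separated expression with $x_i$ as the distinguished variable, and for a symmetric solution each such term vanishes since the $x_i$-independent function $C_{\nn-1}(\bm{x}_{I \backslash \{i\}})$ is annihilated by $\partial_{x_i}$. Your symmetrization bookkeeping, including the factor $2$ from collapsing $\sum_{i}\sum_{k\neq i}$ to $2\sum_{i<k}$ and the reading of the last line of \eqref{key_sp_sd} as $f_{\nn-2}(\bm{x}_{I\backslash\{1,i\}})$, is exactly the verification the paper leaves to the reader.
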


\subsection{Topological recursion}\label{subsec:tr}

In this section, we show that the separated SD equation \eqref{key_sp_sd} leads to the spectral curve \eqref{sp_curve_g}, and 
admits solutions that obey the CEO topological recursion for the (connected) amplitudes defined below.

\begin{Def}\label{def:free_en}
(Connected) amplitudes $F_{\nn}(\bm{x}_{I})$, $\nn \ge 1$, are
\begin{align}
F_1(x) = f_1(x) + g_s^{-1} \frac{\Delta(x)}{p(x)}\,,
\quad
F_{\nn}(\bm{x}_{I})=f_{\nn}^{\mathrm{con}}(\bm{x}_{I})
\ \textrm{for}\ \nn \ge 2\,,
\label{free_en_def}
\end{align}
where the connected part of the $\nn$-point amplitude $f_{\nn}(\bm{x}_{I})$ is denoted as $f_{\nn}^{\mathrm{con}}(\bm{x}_{I})$, e.g.,
\begin{align}
\begin{split}
f_2(x_1,x_2)&=f_2^{\mathrm{con}}(x_1,x_2)+f_1(x_1)f_1(x_2)\,,
\\
f_3(x_1,x_2,x_3)&=f_3^{\mathrm{con}}(x_1,x_2,x_3)
+f_2^{\mathrm{con}}(x_1,x_2)f_1(x_3)
+f_2^{\mathrm{con}}(x_1,x_3)f_1(x_2)
\\
&\ \
+f_2^{\mathrm{con}}(x_2,x_3)f_1(x_1)
+f_1(x_1)f_1(x_2)f_1(x_3)\,.
\end{split}
\end{align}
We consider perturbative expansions of the connected amplitudes in the string coupling constant $g_s$ as follows:
\begin{align}
f_{\nn}^{\mathrm{con}}(\bm{x}_{I})
=\sum_{g \ge 0} g_s^{2g-2+\nn} f_{\nn}^{(g)}(\bm{x}_{I})\,,
\quad
F_{\nn}(\bm{x}_{I})=\sum_{g \ge 0} g_s^{2g-2+\nn} F_{\nn}^{(g)}(\bm{x}_{I})\,.
\label{pert_exp_f}
\end{align}
\end{Def}

\subsubsection{\texorpdfstring{$\nn=1$}{n=1} separated SD equation and disk amplitude}\label{subsec:n1_sd}

When $\nn=1$, the separated SD equation \eqref{key_sp_sd} is
\begin{align}
0&=
g_s \left(p(x)\, f_{2}(x, x)\right)_{\mathrm{irreg}(x)}
+\left(2 \Delta(x)\,  f_{1}(x)\right)_{\mathrm{irreg}(x)} 
+ g_s^{-1} K_{-}(x) + C_0
\nonumber\\
&=
\left(g_s p(x)\, f_{2}^{\mathrm{con}}(x, x) 
+ g_s p(x)\, f_{1}(x)^2+2 \Delta(x)\,  f_{1}(x)\right)_{\mathrm{irreg}(x)}
+ g_s^{-1} K_{-}(x)\,,
\label{sd_n1}
\end{align}
where, in the second equality, $C_0=0$ is used, 
following from the asymptotic behavior as $x \to \infty$.
By applying the perturbative expansion \eqref{pert_exp_f}, 
the equation \eqref{sd_n1} yields
\begin{align}
0&=
\left(f_1^{(0)}(x) + \frac{\Delta(x)}{p(x)}\right)^2
- \frac{q(x)^2}{p(x)^2}\, \sigma(x)
=
F_1^{(0)}(x)^2 
- \frac{q(x)^2}{p(x)^2}\, \sigma(x)\,,
\label{sd_pert_n10}
\\
0&=
\left(p(x)\, F_{2}^{(g-1)}(x,x)
+p(x) \sum_{g_1+g_2=g} F_1^{(g_1)}(x)\, F_1^{(g_2)}(x)
\right)_{\mathrm{irreg}(x)}
\quad \textrm{for}\ g \ge 1\,,
\label{sd_pert_n1h}
\end{align}
where
\begin{align}
q(x)^2 \sigma(x)=
\Delta(x)^2 
+p(x)\left(2 \Delta(x)\,  f_{1}^{(0)}(x)
+p(x)\, f_1^{(0)}(x)^2\right)_{\mathrm{reg}(x)} 
- p(x)\, K_{-}(x)\,.
\label{sp_curve_determine}
\end{align}
As a result, we obtain the disk amplitude $F_1^{(0)}(x)$ which defines 
the spectral curve $y=F_1^{(0)}(x)$ in \eqref{sp_curve_g} 
as an $(h+1)$-cut solution ($0 \le h \le s$) to the equation \eqref{sd_pert_n10}:
\begin{align}
F_1^{(0)}(x)=M(x) \sqrt{\sigma(x)}\,,\ \
M(x)=\frac{q(x)}{p(x)}=
\frac{\sum_{k=0}^{s-h} q_k\, x^k}{\sum_{k=\mu}^{r} p_k\, x^{k}}\,,\ \
\sigma(x)=\prod_{k=1}^{2h+2} (x - \alpha_k)\,.
\label{planar_f_g}
\end{align}
By
\begin{align*}
&
\Delta(x)^2 = \left(\sum_{k=-1}^{s+1}\tau_{k+1}\, x^k\right)^2
= \tau_{s+2}^2\, x^{2s+2} + \cdots 
+ 2 \tau_0 \tau_1\, x^{-1} + \tau_0^2\, x^{-2}\,,
\\
&
p(x)\Bigl(2 \Delta(x)\,  f_{1}^{(0)}(x)
+p(x)\, f_1^{(0)}(x)^2\Bigr)_{\mathrm{reg}(x)} = 
\left(\sum_{k=\mu}^{r} p_k\, x^{k}\right)
\left(\sum_{k=0}^{s-1} c_k\, x^k\right)
\\
&\hspace{17.7em}
= p_r c_{s-1} x^{r+s-1} + \cdots,
\\
&
p(x) K_{-}(x) = 
\left(\sum_{k=\mu}^{r} p_k\, x^{k}\right)
\left(\sum_{k=1}^{\mu+2} \kappa_k\, x^{-k}\right)
\\
&\hspace{4.7em}
= p_r \kappa_1 x^{r-1} + \cdots 
+ \left(p_{\mu} \kappa_{\mu+1} + p_{\mu+1} \kappa_{\mu+2}\right)x^{-1}
+ p_{\mu} \kappa_{\mu+2}\, x^{-2}\,,
\end{align*}
where $c_k$ $(k=0,1, \ldots, s-1)$ are certain constants, 
the equation \eqref{sp_curve_determine} yields $2s+5$ conditions on 
the coefficients of powers of $x$, 
which determine 
$\kappa_{\mu+1}, \kappa_{\mu+2}$ as in \eqref{condition_coeff_cc_2}, 
the $s$ constants $c_k$, and $s+3$ parameters $q_k$ and $\alpha_k$ out of the total $s+h+3$ parameters. 
Here $2s+2 > r+s-1$, i.e., the condition $s+2 \ge r$ in \eqref{condition_param_gen} is imposed to determine the spectral curve,  
and in particular, 
\begin{align}
q_{s-h}=\tau_{s+2}\,,
\end{align}
is determined. 
The remaining $h$ parameters can be determined by $h$ $A$-periods
\begin{align}
P_i=
\frac{1}{2 \pi \mathsf{i}} \oint_{[\alpha_{2i-1}, \alpha_{2i}]} 
M(x)\sqrt{\sigma(x)}\, dx\,,
\qquad
i=1,2,\ldots, h\,.
\label{periods_c}
\end{align}

\begin{rem}\label{rem:sp_curve_vac}
The $(h+1)$-cut solution \eqref{planar_f_g} should be considered as the solution obtained by imposing a boundary condition at $T=\infty$ for the amplitude \eqref{n_pt_amp}.
\end{rem}

\subsubsection{\texorpdfstring{$\nn=2$}{n=2} separated SD equation and annulus amplitude}

When $\nn=2$, the separated SD equation \eqref{key_sp_sd} is
\begin{align}
0&=
g_s \left(p(x_1)\, f_{3}(x_1, x_1, x_2)\right)_{\mathrm{irreg}(x_1)}
+ \left(2 \Delta(x_1)\,  f_{2}(x_1,x_2)\right)_{\mathrm{irreg}(x_1)}
+g_s^{-1} K_{-}(x_1)\, f_1(x_2)
\nonumber\\
&\ \
+g_s \partial_{x_2}
\frac{\left(p(x_1)\, f_{1}(x_1)\right)_{\mathrm{irreg}(x_1)}
-\left(p(x_2)\, f_{1}(x_2)\right)_{\mathrm{irreg}(x_2)}}{x_1-x_2}
\nonumber\\
&\ \
+ \partial_{x_2}
\frac{\Delta(x_1)_{\mathrm{irreg}(x_1)}-\Delta(x_2)_{\mathrm{irreg}(x_2)}}{x_1-x_2}
+C_1(x_2)
\nonumber\\
&=
g_s \left(p(x_1)\left(f_3^{\mathrm{con}}(x_1,x_1,x_2) + 2 f_{2}^{\mathrm{con}}(x_1, x_2) f_1(x_1) 
+ f_{2}(x_1, x_1) f_1(x_2)\right)\right)_{\mathrm{irreg}(x_1)}
\nonumber\\
&\ \
+\left(2 \Delta(x_1)\left(f_{2}^{\mathrm{con}}(x_1,x_2) 
+ f_1(x_1) f_1(x_2)\right)\right)_{\mathrm{irreg}(x_1)}
+ g_s^{-1} K_{-}(x_1)\, f_1(x_2)
\nonumber\\
&\ \
+g_s \partial_{x_2}
\frac{\left(p(x_1)\, F_{1}(x_1)\right)_{\mathrm{irreg}(x_1)}
-\left(p(x_2)\, F_{1}(x_2)\right)_{\mathrm{irreg}(x_2)}}{x_1-x_2}
+C_1(x_2)\,,
\label{sd_n2}
\end{align}
and, by the $\nn=1$ separated SD equation \eqref{sd_n1}, yields
\begin{align}
0&=
g_s \left(p(x_1)\left(F_3(x_1,x_1,x_2) 
+ 2 F_{2}(x_1, x_2) F_1(x_1)\right)\right)_{\mathrm{irreg}(x_1)}
\nonumber\\
&\ \
+g_s \partial_{x_2}
\frac{\left(p(x_1)\, F_{1}(x_1)\right)_{\mathrm{irreg}(x_1)}
-\left(p(x_2)\, F_{1}(x_2)\right)_{\mathrm{irreg}(x_2)}}{x_1-x_2}
+C_1(x_2)\,.
\label{sd_n2_1}
\end{align}
By applying the expansion \eqref{pert_exp_f}, the leading part of the equation \eqref{sd_n2_1} is
\begin{align}
0&=
\left(2p(x_1) F_1^{(0)}(x_1) F_{2}^{(0)}(x_1,x_2)\right)_{\mathrm{irreg}(x_1)}
\nonumber\\
&\ \
+\partial_{x_2}
\frac{\left(p(x_1)\, F_{1}^{(0)}(x_1)\right)_{\mathrm{irreg}(x_1)}
-\left(p(x_2)\, F_{1}^{(0)}(x_2)\right)_{\mathrm{irreg}(x_2)}}{x_1-x_2}
+C_1^{(0)}(x_2)\,,
\label{sd_n1_1_0}
\end{align}
where $C_1^{(0)}(x_2)$ is a function of $x_2$.

For the $(h+1)$-cut spectral curve 
$y=F_1^{(0)}(x)=q(x) \sqrt{\sigma(x)}/p(x)$ in \eqref{planar_f_g}, 
the equation \eqref{sd_n1_1_0} yields
\begin{align}
0=
2\sqrt{\sigma(x_1)} F_{2}^{(0)}(x_1,x_2) 
+ \partial_{x_2} \frac{\sqrt{\sigma(x_1)}-\sqrt{\sigma(x_2)}}{x_1-x_2}
+ \frac{R_2^{(0)}(x_1, x_2)}{q(x_1)}\,,
\label{sd_n1_1_01}
\end{align}
where
\begin{align}
\begin{split}
R_2^{(0)}(x_1, x_2)&=
\partial_{x_2}
\frac{\left(q(x_1)-q(x_2)\right)\sqrt{\sigma(x_2)}}{x_1-x_2}
-\left(2 q(x_1) \sqrt{\sigma(x_1)} F_{2}^{(0)}(x_1,x_2)\right)_{\mathrm{reg}(x_1)}
\\
&\ \
-\partial_{x_2}
\frac{\left(q(x_1) \sqrt{\sigma(x_1)}\right)_{\mathrm{reg}(x_1)}
-\left(q(x_2) \sqrt{\sigma(x_2)}\right)_{\mathrm{reg}(x_2)}}{x_1-x_2}
+C_1^{(0)}(x_2)\,,
\end{split}
\end{align}
is a polynomial in $x_1$ of degree $s-1$. 
We now follow the argument by Eynard \cite{Eynard:2004mh}, and 
find a solution to the equation \eqref{sd_n1_1_01} such that 
\begin{itemize}
\item
the amplitude $F_{2}^{(0)}(x_1,x_2)$ has no poles in $x_1$ away from 
the branch cuts of the spectral curve, which lie 
along $[\alpha_1, \alpha_2] \cup \cdots \cup 
[\alpha_{2h+1}, \alpha_{2h+2}]$,
\end{itemize}
as follows.
Because $q(x_1)$ is the polynomial in $x_1$ of degree $s-h$, 
$R_2^{(0)}(x_1, x_2)/q(x_1)$ should be a polynomial in $x_1$ of degree $(s-1)-(s-h)=h-1$. 
In general, as a basis $\{L_1(x), \ldots, L_{h}(x)\}$ 
for degree $h-1$ polynomials of $x$, there is a unique set that satisfies
\begin{align}
\frac{1}{2\pi \mathsf{i}} \oint_{[\alpha_{2j-1}, \alpha_{2j}]} \frac{L_i(x)}{\sqrt{\sigma(x)}}\, dx= \delta_{i,j}\,,
\qquad
i,j=1,2,\ldots,h\,.
\label{basis_L}
\end{align}
In the present case, by integrations $\oint_{[\alpha_{2i-1}, \alpha_{2i}]}dx_1/2\pi \mathsf{i}\sqrt{\sigma(x_1)}$ $(i=1, \ldots, h)$, 
which do not contain $x_2$ inside the contours, 
of the equation \eqref{sd_n1_1_01} we find
\begin{align}
\frac{R_2^{(0)}(x_1, x_2)}{2q(x_1)} &= 
-\frac12 \partial_{x_2} \sqrt{\sigma(x_2)} \sum_{i=1}^{h} C_i(x_2)\, L_i(x_1)\,,
\nonumber\\
C_i(x_2) &:= \frac{1}{2\pi \mathsf{i}} 
\oint_{(x_2 \not\in) [\alpha_{2i-1}, \alpha_{2i}]} \frac{dx_1}{(x_1-x_2)\sqrt{\sigma(x_1)}}\,,
\label{rem_coeff}
\end{align}
and then
\begin{align}
F_{2}^{(0)}(x_1, x_2)\, dx_1 dx_2=
\frac12 \partial_{x_2} dS_{z_2}(z_1)\, dx_2
-\frac{dx_1 dx_2}{2\left(x_1-x_2\right)^2}\,,
\label{annulus_sol}
\end{align}
is obtained. Here
\begin{align}
dS_{z_2}(z_1)=
\frac{\sqrt{\sigma(x_2)}}{\sqrt{\sigma(x_1)}}
\left(\frac{1}{x_1-x_2}-\sum_{i=1}^{h}C_i(x_2)\, L_i(x_1)\right)dx_1\,,
\label{third_diff}
\end{align}
is the unique third kind differential on the spectral curve with 
the residue $+1$ (resp. $-1$) at $z_1=z_2$ (resp. $z_1= \overline{z}_2$, 
the conjugate point of $z_2$ as $x(\overline{z}_2)=x(z_2)$), and vanishing all the $A$-periods, where $z \in {\IP}^1$ is a variable, defined near a branch point of $\sqrt{\sigma(x)}$, which parametrizes the spectral curve as $x=x(z), y=y(z)$.

\subsubsection{Separated SD equation for general \texorpdfstring{$\nn$}{n} and topological recursion}\label{subsubsec:tr_w}

The connected part of the separated SD equation \eqref{key_sp_sd} 
for general $\nn \ge 2$ yields
\begin{align}
0&=
g_s
\Biggl(p(x_1)\, F_{\nn+1}(x_1,\bm{x}_{I})
+2 p(x_1)\, F_1(x_1) F_{\nn}(\bm{x}_{I})
\nonumber\\
&\ \ \qquad
+p(x_1) \sum_{\substack{I_1 \cup I_2=I \backslash \{1\} \\ I_1,I_2 \ne \emptyset}}
F_{|I_1|+1}(x_1, \bm{x}_{I_1}) F_{|I_2|+1}(x_1, \bm{x}_{I_2})
\Biggr)_{\mathrm{irreg}(x_1)}
\nonumber\\
&\ \
+g_s \sum_{i=2}^{\nn} \partial_{x_i} 
\frac{\left(p(x_1)\, F_{\nn-1}(\bm{x}_{I \backslash \{i\}})\right)_{\mathrm{irreg}(x_1)}
-\left(p(x_i)\, F_{\nn-1}(\bm{x}_{I \backslash \{1\}})\right)_{\mathrm{irreg}(x_i)}}{x_1-x_i}
+\widetilde{C}_{\nn-1}(\bm{x}_{I \backslash \{1\}})\,,
\label{sp_sd_con_gen}
\end{align}
where $I_1=\{i_1,\ldots,i_{|I_1|}\}$, $I_2=\{i_{{|I_1|}+1},\ldots,i_{\nn-1}\}$ 
are disjoint subsets of $I \backslash \{1\}=\{2,\ldots,\nn\}$, and
$\bm{x}_{I_1}=\{x_{i_1}, \ldots, x_{i_{|I_1|}}\}$, 
$\bm{x}_{I_2}=\{x_{i_{|I_1|+1}}, \ldots, x_{i_{\nn-1}}\}$, 
and $\widetilde{C}_{\nn-1}(\bm{x}_{I \backslash \{1\}})$ is a function of 
$\bm{x}_{I \backslash \{1\}}$.
By applying the expansion \eqref{pert_exp_f}, 
the SD equation \eqref{sp_sd_con_gen} yields
\begin{align}
F_{\nn}^{(g)}(\bm{x}_{I})&=
\frac{(-1)}{2F_1^{(0)}(x_1)}
\Biggl[F_{\nn+1}^{(g-1)}(x_1,\bm{x}_{I})
+\mathop{\sum_{g_1+g_2=g}}_{I_1 \cup I_2=I \backslash \{1\}}^{\textrm{no (0,1)}}
F_{|I_1|+1}^{(g_1)}(x_1, \bm{x}_{I_1}) F_{|I_2|+1}^{(g_2)}(x_1, \bm{x}_{I_2})
\nonumber\\
&\hspace{14em}
+\sum_{i=2}^{\nn} 
\frac{F_{\nn-1}^{(g)}(\bm{x}_{I \backslash \{i\}})}{(x_1-x_i)^2}\Biggr]
+\frac{R_{\nn}^{(g)}(\bm{x}_{I})}{2 p(x_1) F_1^{(0)}(x_1)}\,,
\label{sp_sd_con_gen_pert}
\end{align}
where ``no (0,1)'' in the summation means that it does not contain 
$F_1^{(0)}(x_1)$, and
\begin{align}
&
R_{\nn}^{(g)}(\bm{x}_{I})
\nonumber
\\
&=
\Biggl(p(x_1)\, F_{\nn+1}^{(g-1)}(x_1,\bm{x}_{I})
+p(x_1) \mathop{\sum_{g_1+g_2=g}}_{I_1 \cup I_2=I \backslash \{1\}}
F_{|I_1|+1}^{(g_1)}(x_1, \bm{x}_{I_1}) F_{|I_2|+1}^{(g_2)}(x_1, \bm{x}_{I_2})
\Biggr)_{\mathrm{reg}(x_1)}
\nonumber
\\
&\ \
+\sum_{i=2}^{\nn} \partial_{x_i} 
\frac{\left(p(x_1)\, F_{\nn-1}^{(g)}(\bm{x}_{I \backslash \{i\}})\right)_{\mathrm{reg}(x_1)}+
\left(p(x_i)\, F_{\nn-1}^{(g)}(\bm{x}_{I \backslash \{1\}})\right)_{\mathrm{irreg}(x_i)}}{x_1-x_i}
+C_{\nn-1}^{(g)}(\bm{x}_{I \backslash \{1\}})\,.
\end{align}
Here $C_{\nn-1}^{(g)}(\bm{x}_{I \backslash \{1\}})$ is a function of $\bm{x}_{I \backslash \{1\}}$. 
Note that the SD equation \eqref{sp_sd_con_gen_pert} can be 
regarded to include not only the SD equations for $\nn\ge 2, g \ge 0$, but also
the SD equations \eqref{sd_pert_n1h} for $\nn=1, g\ge 1$.

For the $(h+1)$-cut spectral curve 
$y=F_1^{(0)}(x)=q(x) \sqrt{\sigma(x)}/p(x)$ in \eqref{planar_f_g}, 
solutions to the SD equation \eqref{sp_sd_con_gen_pert} such that
\begin{itemize}
\item
the amplitude $F_{\nn}^{(g)}(\bm{x}_{I})$ on the left hand side of \eqref{sp_sd_con_gen_pert} has no poles in $x_1$ away from the branch cuts of the spectral curve, 
which lie along $\mathcal{C}:=[\alpha_1, \alpha_2] \cup \cdots \cup 
[\alpha_{2h+1}, \alpha_{2h+2}]$,
\end{itemize}
are obtained \cite{Eynard:2004mh} (see also \cite{Brini:2010fc}), 
because $R_{\nn}^{(g)}(\bm{x}_{I})$ does not have poles on $\mathcal{C}$, as 
\begin{align}
F_{\nn}^{(g)}(\bm{x}_{I})&=
-
\mathop{\mathrm{Res}}_{x_0=x_1}\, \frac{dx_0\, dS_{z_0}(z_1)}{dx_1}\,
F_{\nn}^{(g)}(x_0,\bm{x}_{I \backslash \{1\}})
\nonumber\\
&=
\frac{1}{2\pi \mathrm{i}}\oint_{\mathcal{C}}
\frac{dx_0\, dS_{z_0}(z_1)}{dx_1}\,
F_{\nn}^{(g)}(x_0,\bm{x}_{I \backslash \{1\}})
\nonumber\\
&=
\frac{(-1)}{2\pi \mathrm{i}}\oint_{\mathcal{C}}
\frac{dx_0\, dS_{z_0}(z_1)}{2F_1^{(0)}(x_0)\, dx_1}
\Biggl[F_{\nn+1}^{(g-1)}(x_0,x_0,\bm{x}_{I \backslash \{1\}})
\nonumber\\
&\hspace{10.5em}
+\mathop{\sum_{g_1+g_2=g}}_{I_1 \cup I_2=I \backslash \{1\}}^{\textrm{no (0,1)}}
\mathcal{F}_{|I_1|+1}^{(g_1)}(x_0, \bm{x}_{I_1}) 
\mathcal{F}_{|I_2|+1}^{(g_2)}(x_0, \bm{x}_{I_2})\Biggr]\,,
\label{sd_cut}
\end{align}
where $z_i \in {\IP}^1$ are variables on the spectral curve, 
defined via the maps $x_i=x(z_i)$ introduced below \eqref{third_diff}, 
$dS_{z_0}(z_1)$ is the third kind differential \eqref{third_diff}, and
\begin{align}
\mathcal{F}_{|J|+1}^{(g)}(x_0, \bm{x}_{J}):=
F_{|J|+1}^{(g)}(x_0, \bm{x}_{J})+
\frac{\delta_{|J|,1}\, \delta_{g,0}}{2\left(x_0-x_j\right)^2}\,,\qquad 
j \in J=\{i_1, \ldots, i_{|J|}\}\,.
\end{align}
Solutions to the equation \eqref{sd_cut} satisfy $\mathcal{F}_{|J|+1}^{(g)}(x(\overline{z}), \bm{x}_{J})
=-\mathcal{F}_{|J|+1}^{(g)}(x(z), \bm{x}_{J})$, and we see that 
the integrand of the equation \eqref{sd_cut} has no branches. 
As a result, we get
\begin{align}
F_{\nn}^{(g)}(\bm{x}_{I})&=
\sum_{k=1}^{2h+2}
\mathop{\mathrm{Res}}_{x_0=\alpha_k}\,
\frac{(-1)\, dx_0\, dS_{z_0}(z_1)}{2F_1^{(0)}(x_0)\, dx_1}
\Biggl[F_{\nn+1}^{(g-1)}(x_0,x_0,\bm{x}_{I \backslash \{1\}})
\nonumber\\
&\hspace{12.5em}
+\mathop{\sum_{g_1+g_2=g}}_{I_1 \cup I_2=I \backslash \{1\}}^{\textrm{no (0,1)}}
\mathcal{F}_{|I_1|+1}^{(g_1)}(x_0, \bm{x}_{I_1}) 
\mathcal{F}_{|I_2|+1}^{(g_2)}(x_0, \bm{x}_{I_2})\Biggr]\,.
\label{top_rec_c}
\end{align}
Furthermore, in terms of multi-differentials of variables $z_i$ as
\begin{align}
\begin{split}
\omega_{2}^{(0)}(z_1, z_2)
&=B(z_1, z_2)
:=F_{2}^{(0)}(x_1, x_2)\, dx_1 dx_2 + \frac{dx_1 dx_2}{\left(x_1-x_2\right)^2}
\\
&=
\frac12 \partial_{x_2} dS_{z_2}(z_1)\, dx_2
+\frac{dx_1 dx_2}{2\left(x_1-x_2\right)^2}\,,
\\
\omega_{\nn}^{(g)}(z_1, \ldots, z_n)&=F_{\nn}^{(g)}(x_1, \ldots, x_n)\, dx_1 \cdots dx_{\nn}
\ \ \textrm{for}\ \ (g,\nn)\ne (0,2)\,,
\label{mdiffW}
\end{split}
\end{align}
the equation \eqref{top_rec_c} is written as the CEO topological recursion in \cite{Chekhov:2006vd,Eynard:2007kz} for the spectral curve \eqref{sp_curve_g}:%
\footnote{
Note the extra factor $2$ in the denominator of the recursion in the variables $z_i$, which arises from the double cover structure of the spectral curve.}
\begin{align}
\omega_{\nn}^{(g)}(\bm{z}_{I})&=
\sum_{k=1}^{2h+2}
\mathop{\mathrm{Res}}_{z_0=\xi_k}\,
K(z_0, z_1)
\Biggl[\omega_{\nn+1}^{(g-1)}(z_0,\overline{z}_0,\bm{z}_{I \backslash \{1\}})
\nonumber\\
&\hspace{10em}
+\mathop{\sum_{g_1+g_2=g}}_{I_1 \cup I_2=I \backslash \{1\}}^{\textrm{no (0,1)}}
\omega_{|I_1|+1}^{(g_1)}(z_0, \bm{z}_{I_1}) 
\omega_{|I_2|+1}^{(g_2)}(\overline{z}_0, \bm{z}_{I_2})\Biggr]\,,
\label{top_rec_gen}
\end{align}
where $\xi_k$ are branch points as $\alpha_k=x(\xi_k)$, and
\begin{align}
K(z_0, z_1):=
\frac{\int^{z=z_0}_{z=\overline{z}_0}B(z, z_1)}{4\omega_1^{(0)}(z_0)}
=\frac{\frac12 \int^{z=z_0}_{z=\overline{z}_0}B(z, z_1)}{\omega_1^{(0)}(z_0)-\omega_1^{(0)}(\overline{z}_0)}\,,
\end{align}
is the recursion kernel.

\begin{rem}\label{rem:kernel_expansion}
Solutions to the recursion \eqref{top_rec_c} are shown to be 
expressed in terms of kernel differentials \cite{Bouchard:2007ys},
\begin{align}
\chi_k^{(p)}(x):=&\,
\mathop{\mathrm{Res}}_{x_0=\alpha_k}
\left(\frac{dS_{z_0}(z)}{F_1^{(0)}(x_0)}\frac{dx_0}{(x_0-\alpha_k)^p}\right)
\nonumber\\
=&\,
\frac{dx}{(p-1)!\sqrt{\sigma(x)}}
\frac{\partial^{p-1}}{\partial x_0^{p-1}}\bigg|_{x_0=\alpha_k}
\frac{1}{M(x_0)}
\left(\frac{1}{x-x_0}-\sum_{i=1}^{h}C_i(x_0)\, L_i(x)\right),
\quad
p\ge 1\,,
\label{kernel_diff}
\end{align}
as
\begin{align}
F_{\nn}^{(g)}(x_1, \ldots, x_n)\, dx_1 \cdots dx_{\nn}=
\sum_{k=1}^{2h+2} \sum_{p_1,\ldots,p_{\nn} \ge 1}
C_{k;p_1,\ldots,p_{\nn}}^{(g)}\, 
\chi_{k}^{(p_1)}(x_1) \cdots \chi_{k}^{(p_{\nn})}(x_{\nn})\,,
\label{fg_kernel_exp}
\end{align}
for $2g+\nn \ge 3$, 
where the coefficients $C_{k;p_1,\ldots,p_{\nn}}^{(g)}$ do not depend on $x_1,\ldots,x_{\nn}$, and the sums contain only finitely many terms. 
From the expression \eqref{fg_kernel_exp} we see that the amplitudes are expanded around $x_i=\infty$ as \eqref{psdw},
\begin{align}
F_{\nn}^{(g)}(x_1, \ldots, x_n)=
\sum_{\ell_1,\ldots,\ell_{\nn} \ge 1}
x_1^{-\ell_1-1} \cdots x_1^{-\ell_{\nn}-1}\, f_{\ell_1,\ldots,\ell_{\nn}}^{(g)}\,,
\label{fg_x_exp}
\end{align}
for spectral curves with even number of branch points,
where the coefficients $f_{\ell_1,\ldots,\ell_{\nn}}^{(g)}$ do not depend on $x_1,\ldots,x_{\nn}$.
\end{rem}

\subsection{Examples}\label{subsec:ex_even}

We exemplify several $W^{(3)}$-type Hamiltonians, and their spectral curves obtained from the equation \eqref{sp_curve_determine}. 
Input parameters included in the Hamiltonian \eqref{gen_hamiltonian} are summarized in Remark \ref{rem:hamiltonian_param}, i.e., 
$\mu$ parameters $\kappa_k$ ($k=1, 2, \ldots, \mu$), 
$s+3$ parameters $\tau_k$ ($k=0, 1, \ldots, s+2$) and 
$r-\mu+1$ parameters $p_k$ ($k=\mu, \mu+1, \ldots, r$), 
and the Hamiltonian also contains constrained parameters $\kappa_{\mu+1}$ and $\kappa_{\mu+2}$ in \eqref{condition_coeff_cc_2}.

\subsubsection{Basic-type discrete DT (pure gravity) model}

As the first example, 
consider the one-cut spectral curve \eqref{discrete_DT_sim_sp_curve}:
\begin{align}
y=\frac{q(x)}{p(x)} \sqrt{\sigma(x)}\,,
\quad
p(x)=\kappa\, x^3\,,
\quad
q(x)= \frac{\kappa}{2} \left(x - \gamma \right),
\quad
\sigma(x)=x \left(x - \alpha\right),
\label{discrete_DT_sim_sp_curve_ex}
\end{align}
of the basic-type discrete DT model \cite{Watabiki:1993ym} reviewed in Section \ref{subsubsec:discrete_pureDT}.
In this case,
\begin{align}
r=\mu=3\,,\ \ 
s=1\,,
\end{align}
and we set
\begin{align}
\kappa_1=\kappa_3=\kappa\,,\ \ \kappa_2=0\,,\ \
\tau_0=0\,,\ \
\tau_1=\kappa\,,\ \ \tau_2=-\frac12\,,\ \ \tau_3=\frac{\kappa}{2}\,,\ \
p_3=\kappa\,.
\end{align}
Then, $\kappa_4=\kappa_5=0$ by \eqref{condition_coeff_cc_2}, and 
the Hamiltonian \eqref{gen_hamiltonian} yields \eqref{basic_h} as
\begin{align}
-\mathcal{H}_{\mathrm{basic}}&=
g_s^{-1} \sum_{k=1}^{3} k \, \kappa_{k}\, \Psi(k)
+2\sum_{k=1}^{3} \tau_k
\sum_{\ell \ge 1} \ell\, \Psd(\ell+k-2)\Psi(\ell)
\nonumber\\
&\ \
+ g_s \kappa \sum_{\ell, \ell' \ge 1} 
\left(
\left(\ell+\ell'-1\right)
\Psd(\ell)\Psd(\ell')\Psi(\ell+\ell'-1)
+ 
\ell \ell'\, \Psd(\ell+\ell'+1) \Psi(\ell) \Psi(\ell')
\right)
\nonumber\\
&=
g_s^{-1} \kappa\, \Psi(1) + 3g_s^{-1} \kappa\, \Psi(3)
+ 2\kappa \sum_{\ell \ge 1} \left(\ell+1\right) \Psd(\ell)\Psi(\ell+1)
\nonumber\\
&\ \
-\sum_{\ell \ge 1} \ell \left(\Psd(\ell)-\kappa \Psd(\ell+1)\right)\Psi(\ell)
\nonumber\\
&\ \ 
+ g_s \kappa \sum_{\ell,\ell' \ge 1} 
\left(
\left(\ell+\ell'-1\right) \Psd(\ell)\Psd(\ell')\Psi(\ell+\ell'-1)
+
\ell \ell'\, \Psd(\ell+\ell'+1)\Psi(\ell)\Psi(\ell')
\right),
\label{sim_hamiltonian}
\end{align}
and the spectral curve \eqref{discrete_DT_sim_sp_curve_ex} is actually obtained from the equation \eqref{sp_curve_determine}.

\subsubsection{\texorpdfstring{$(2,2m-1)$}{(2,2m-1)} minimal discrete DT model}

Next, let us provide a Hamiltonian for a one-cut spectral curve
\begin{align}
y=q(x) \sqrt{\sigma(x)}\,,
\quad
q(x)= \tau_{m+1}\, x^{m-1} + \sum_{k=0}^{m-2} q_k\, x^k\,,
\quad
\sigma(x)=\left(x - \alpha_1\right)\left(x - \alpha_2\right).
\label{minimal_DT_sp_curve}
\end{align}
In this case, 
\begin{align}
r=\mu=0\,,\ \ 
s=m-1\ (m \ge 1)\,,
\end{align}
and we set
\begin{align}
\tau_{0}=1\,,\ \ \tau_1=0\,,\ \ \tau_2=-1\,,\ \
p_0=1\,.
\end{align}
Then, $\kappa_1=0$, $\kappa_2=1$ by \eqref{condition_coeff_cc_2}, and 
the Hamiltonian \eqref{gen_hamiltonian} yields
\begin{align}
-\mathcal{H}_{\mathrm{MDDT}}^{(m)}&=
2g_s^{-1}\, \Psi(2)
+ \Psi(1)^2
+2\sum_{k=0}^{m+1} \tau_k
\sum_{\ell \ge 1} \ell\, \Psd(\ell+k-2)\Psi(\ell)
\nonumber\\
&\ \
+g_s \sum_{\ell, \ell' \ge 1} 
\left(
\left(\ell+\ell'+2\right)
\Psd(\ell)\Psd(\ell')\Psi(\ell+\ell'+2)
+
\ell \ell'\, \Psd(\ell+\ell'-2) \Psi(\ell) \Psi(\ell')
\right)
\nonumber\\
&=
g_s \sum_{\ell, \ell' \ge 0} 
\left(
\left(\ell+\ell'+2\right)
\widehat{\Psi}^{\dagger}(\ell) \widehat{\Psi}^{\dagger}(\ell') \Psi(\ell+\ell'+2)
+
\ell \ell'\, 
\widehat{\Psi}^{\dagger}(\ell+\ell'-2) \widehat{\Psi}(\ell) \widehat{\Psi}(\ell')
\right)
+ \mathcal{H}_0\,,
\label{mddt_hamiltonian}
\end{align}
which describes 
the $(2,2m-1)$ minimal discrete DT model \cite{Watabiki:1993ym} (see also footnote \ref{ft:dt} in Section \ref{subsubsec:discrete_pureDT}). 
Here
\begin{align}
\begin{split}
&
\widehat{\Psi}^{\dagger}(\ell)=\Psd(\ell)+g_s^{-1}\delta_{\ell,0}\ \
\textrm{for}\ \ \ell \ge 0\,,
\\
&
\widehat{\Psi}(\ell)=\Psi(\ell) + g_s^{-1}\frac{\tau_{\ell}}{\ell}\ \
\textrm{for}\ \ \ell \ge 1\,,\quad
\tau_{\ell}=0\ \ \textrm{for}\ \ \ell \ge m+2\,,
\end{split}
\end{align}
are introduced, and $\mathcal{H}_0$ is explicitly expressed in terms of the string creation operators $\Psd(\ell)$.
The spectral curve \eqref{minimal_DT_sp_curve} is obtained from the equation \eqref{sp_curve_determine}, and 
$q_k$ ($0, \ldots, m-2$), $\alpha_1$ and $\alpha_2$ are determined as functions of 
the parameters $\tau_{\ell}$ ($\ell=3, \ldots, m+1$).

\begin{rem}\label{rem:mddt_matrix}
For the minimal discrete DT model, the Laplace-transformed string operator $g_s^{-1}x^{-1}+ \Psdw(x)$ in \eqref{psdw} with the shift $g_s^{-1}x^{-1}$ is identified with the generating function of traces of rank $N$ matrix $M$:
\begin{align}
\omega(x;M):=\mathrm{Tr}\frac{1}{x-M}
=N\, x^{-1}+\sum_{\ell \ge 1} x^{-\ell-1}\, \mathrm{Tr} M^{\ell}\,,
\label{resolvent_op}
\end{align}
in the hermitian matrix model with a polynomial potential,
\begin{align}
Z_{\mathrm{MM}}=\int dM\, \e^{-\frac{2}{g_s} \mathrm{Tr} V(M)}\,,\qquad
V(M)=
\frac{1}{2} M^{2} - \sum_{\ell=3}^{m+1} \frac{\tau_{\ell}}{\ell} M^{\ell}\,.
\end{align}
This follows from the fact that 
the separated SD equation \eqref{key_sp_sd} for amplitudes $f_{\nn}(\bm{x}_{I})$ agrees with the SD equation (loop equation)
for $\nn$-point resolvents in the matrix model (see e.g., \cite{Eynard:2004mh}):
\begin{align}
\left\langle \prod_{i=1}^{\nn} \omega(x_i;M) \right\rangle_{\mathrm{MM}}
=
\frac{1}{Z_{\mathrm{MM}}}\,
\int dM\, \e^{-\frac{2}{g_s} \mathrm{Tr} V(M)}
\prod_{i=1}^{\nn} \omega(x_i;M)\,.
\end{align}
\end{rem}

\subsubsection{Penner model}

We here provide a Hamiltonian for the spectral curve \cite{Manabe:2015kbj} of the Penner model \cite{Penner1988} (see Remark \ref{rem:penner} in the following):
\begin{align}
y=\frac{1}{p(x)} \sqrt{\sigma(x)}\,,
\quad
p(x)=1-x\,,\ \
\sigma(x)=x^2 + 4\mu x-4\mu\,.
\label{penner_sp_curve}
\end{align}
In this case,
\begin{align}
r=1\,,\ \ 
\mu=0\,,\ \ 
s=0\,,
\end{align}
and we set
\begin{align}
p_0=1\,,\ \ p_1=-1\,.
\end{align}
Then, $\kappa_1=2\tau_{0}\tau_1+\tau_{0}^2$, $\kappa_2=\tau_{0}^2$ by \eqref{condition_coeff_cc_2}, and 
the Hamiltonian \eqref{gen_hamiltonian} yields
\begin{align}
-\mathcal{H}_{\mathrm{penner}}&=
g_s^{-1} \sum_{k=1,2} k \, \kappa_{k}\, \Psi(k)
+
\tau_{0} \Psi(1)^2
+2\sum_{k=0,1,2} \tau_k \sum_{\ell \ge 1}
\ell\, \Psd(\ell+k-2)\Psi(\ell)
\nonumber\\
&\ \
+g_s \sum_{k=0,1} p_k
\sum_{\ell, \ell' \ge 1}
\left(\ell+\ell'-k+2\right)
\Psd(\ell)\Psd(\ell')\Psi(\ell+\ell'-k+2)
\nonumber\\
&\ \
+g_s \sum_{k=0,1} p_k \sum_{\ell, \ell' \ge 1}
\ell \ell'\, \Psd(\ell+\ell'+k-2) \Psi(\ell) \Psi(\ell')\,.
\label{penner_hamiltonian}
\end{align}
The equation \eqref{sp_curve_determine} is
\begin{align}
q(x)^2 \sigma(x)=\tau_2^2 x^2 + 2 \tau_1 \tau_2 x
+ \left(\tau_1 + \tau_{0}\right)^2 + 2 \tau_{0} \tau_2\,,
\end{align}
and by setting
\begin{align}
\tau_{0}=-2\mu\,,\ \ \tau_1=2\mu\,,\ \ \tau_2=1\,,
\end{align}
we obtain the spectral curve \eqref{penner_sp_curve}.

\begin{rem}\label{rem:penner}
The Penner model is a matrix model with potential given by
\begin{align}
V(x)=-x - \log\left(1-x\right).
\end{align}
The partition function of the Penner model gives the generating function of the virtual Euler characteristics of moduli spaces of curves of genus $g$ with $n$ marked points \cite{Penner1988}:
\begin{align}
\chi(\mathcal{M}_{g,n})=
\frac{(-1)^n (2g+n-3)! (2g-1)}{(2g)! n!}\, B_{2g}\,,
\end{align}
where $B_{2g}$ are Bernoulli numbers defined by
\begin{align}
\frac{x}{\e^x-1}=\sum_{k \ge 0} \frac{B_k}{k!}\, x^k
=-\frac12\, x + \sum_{g \ge 0} \frac{B_{2g}}{(2g)!}\, x^{2g}\,.
\end{align}
We remark that the parameter $\mu$ in the spectral curve \eqref{penner_sp_curve} corresponds to the 't Hooft parameter and serves as the fugacity for the number of marked points.
\end{rem}

\subsubsection{4D \texorpdfstring{$\mathcal{N}=2$}{N=2} \texorpdfstring{$SU(2)$}{SU(2)} gauge theory with \texorpdfstring{$N_f=4$}{Nf=4}}
\label{subsubsec:nf4_sw}

Let us provide a Hamiltonian for the two-cut spectral curve in \cite{Gaiotto:2009we,Eguchi:2009gf,Kozcaz:2010af}, which is the Seiberg-Witten curve in 4D $\mathcal{N}=2$ $SU(2)$ gauge theory with $N_f=4$ hypermultiplets, as
\begin{align}
y=\frac{m_0}{p(x)} \sqrt{\sigma(x)}\,,
\quad
p(x)=x \left(x-1\right)\left(x-\zeta\right),
\quad
\sigma(x)=x^4 + S_1 x^3 + S_2 x^2 + S_3 x + S_4\,,
\label{su2_nf4_sw}
\end{align}
where
\begin{align}
\begin{split}
S_1&=-\frac{(\zeta-1)m_2^2 +2\zeta m_2m_3+(1+\zeta)(m_0^2+U)}{m_0^2}\,,
\qquad
S_4=\frac{\zeta^2m_1^2}{m_0^2}\,,
\\
S_2&=\frac{\zeta (m_0^2+m_1^2-m_3^2+2m_2m_3) + (\zeta-1)\zeta m_2^2+ \zeta^2 (2m_2+m_3)m_3+ (1+\zeta)^2U}{m_0^2}\,,
\\
S_3&=-\frac{\zeta (m_1^2-m_3^2)+ \zeta^2 (m_1^2+2m_2m_3+m_3^2)+ \zeta(1+\zeta)U}{m_0^2}\,.
\end{split}
\end{align}
Here $\zeta$ is the UV coupling parameter, $m_0, m_1, m_2, m_3$ are the masses of the hypermultiplets, and $U$ is the (quantum) Coulomb branch parameter.
In this case, 
\begin{align}
r=3\,,\ \ 
\mu=1\,,\ \ 
s=1\,,
\end{align}
and we set
\begin{align}
\tau_3=m_0\,,\ \
p_1=\zeta\,,\ \ p_2=-1-\zeta\,,\ \ p_3=1\,.
\end{align}
Then, $\kappa_2=2\tau_{0}\tau_{1}/\zeta+\tau_{0}^2(1+\zeta)/\zeta^2$, 
$\kappa_3=\tau_{0}^2/\zeta$ by \eqref{condition_coeff_cc_2}, 
and the Hamiltonian \eqref{gen_hamiltonian} yields
\begin{align}
-\mathcal{H}^{SU(2)}_{N_f=4}&=
g_s^{-1}
\sum_{k=1}^{3} k \, \kappa_{k}\, \Psi(k)
+
\tau_{0} \Psi(1)^2
+2\sum_{k=0}^{3} \tau_k
\sum_{\ell \ge 1}
\ell\, \Psd(\ell+k-2)\Psi(\ell)
\nonumber\\
&\ \
+g_s \sum_{k=1}^{3} p_k
\sum_{\ell, \ell' \ge 1}
\left(\ell+\ell'-k+2\right)
\Psd(\ell)\Psd(\ell')\Psi(\ell+\ell'-k+2)
\nonumber\\
&\ \
+g_s \sum_{k=1}^{3} p_k
\sum_{\ell, \ell' \ge 1}
\ell \ell'\, 
\Psd(\ell+\ell'+k-2) \Psi(\ell) \Psi(\ell')\,.
\label{sw_su2_nf4_hamiltonian}
\end{align}
The equation \eqref{sp_curve_determine} is
\begin{align}
q(x)^2 \sigma(x)=&\
\left(\tau_0 x^{-1} + \tau_1 + \tau_2 x + m_0 x^2\right)^2
+ 2 m_0\, f\, x \left(x-1\right)\left(x-\zeta\right)
\nonumber\\
&
- x \left(x-1\right)\left(x-\zeta\right)
\left(\kappa_1 x^{-1} + \left(\frac{2\tau_{0}\tau_{1}}{\zeta}+\frac{\tau_{0}^2(1+\zeta)}{\zeta^2}\right)x^{-2}
+\frac{\tau_{0}^2}{\zeta} x^{-3}\right),
\nonumber\\
=&\
m_0^2\, x^4 + 2m_0 \left(f + \tau_2\right) x^3
+
\left(-2 m_{0} f \left(1+\zeta\right) + 2 \tau_{1} m_{0}+\tau_{2}^{2}-\kappa_1 \right) x^{2}
\nonumber\\
&
+ \frac{\left(2 \zeta^{3} m_{0} f+\kappa_1  \,\zeta^{3}+2 \zeta^{2} \tau_{0} m_{0}+2 \tau_{1} \tau_{2} \zeta^{2}+\kappa_1  \,\zeta^{2}-\tau_{0}^{2} \zeta -2 \tau_{0} \tau_{1} \zeta -\tau_{0}^{2}\right) x}{\zeta^{2}}
\nonumber\\
&
-\frac{\kappa_1  \,\zeta^{3}-\zeta^{2} \tau_{0}^{2}-2 \tau_{0} \tau_{1} \zeta^{2}-2 \tau_{0} \tau_{2} \zeta^{2}-\tau_{1}^{2} \zeta^{2}-\tau_{0}^{2} \zeta -2 \tau_{0} \tau_{1} \zeta -\tau_{0}^{2}}{\zeta^{2}}\,,
\label{sp_eq_sw_nf4}
\\
f:=&\
\lim_{g_s \to 0} g_s \lim_{T \to \infty} \left\langle\mathrm{vac}\Big| \e^{-T \mathcal{H}_{N_f=4}^{SU(2)}}\,
\Psd(1) \Big|\mathrm{vac}\right\rangle.
\end{align}
Here we change the parameters 
$\kappa_1, \tau_0, \tau_1, \tau_2, \tau_3(=m_0)$ to $m_0, m_1, m_2, m_3$ by%
\footnote{\label{ft:sw_red}
One of the parameters in $\kappa_1, \tau_0, \tau_1, \tau_2, \tau_3$ is redundant with respect to the parameters $m_0, m_1, m_2, m_3$, and the choices \eqref{param_rel_sw_su2_4} are not unique for obtaining the Seiberg-Witten curve \eqref{su2_nf4_sw} (see also Remark \ref{rem:hamiltonian_param} in Section \ref{subsec:w3_h}).}
\begin{align}
\begin{split}
\kappa_1&=
\left(m_{0}+m_{1}+m_{2}+m_{3}\right)
\left(m_{0}-m_{1}+m_{2}+m_{3}\right)\zeta\,,
\\
\tau_{0}&=0\,,
\ \
\tau_{1}=\left(m_{0}+m_{2}+m_{3}\right)\zeta\,,
\ \
\tau_{2}=
-\left(m_{0}+m_{2}\right)
-\left(m_{0}+m_{3}\right)\zeta\,.
\label{param_rel_sw_su2_4}
\end{split}
\end{align}
Then, by the equation \eqref{sp_eq_sw_nf4}, we obtain the spectral curve \eqref{su2_nf4_sw}, where 
the Coulomb branch parameter $U$ is introduced by
\begin{align}
2 m_0\, f=
-\left(\zeta +1\right) U
+\left(m_0+m_2\right)^2
+\left(m_{0}-m_{2}\right) 
\left(m_{0} + m_{2} + 2 m_{3}\right)\zeta\,,
\label{su2_nf4_coulomb}
\end{align}
and determined by the $A$-period of the curve:
\begin{align}
a = \oint_{\mathcal{A}} y\, dx\,.
\end{align}

\begin{rem}\label{rem:nf4_sw}
The amplitudes $F_{\nn}^{(g)}(\bm{x}_{I})$, determined by the Hamiltonian \eqref{sw_su2_nf4_hamiltonian}, provide the perturbative expansion coefficients of the instanton partition function with $\nn$ half-BPS simple-type surface defects, in the self-dual $\Omega$-background with parameter $g_s=\hbar$ 
\cite{Alday:2009fs,Dimofte:2010tz,Kozcaz:2010af}.
\end{rem}

\section{Hamiltonians for spectral curves with odd number of branch points}
\label{sec:ham_ai}

In this section, instead of the spectral curve \eqref{sp_curve_g} we focus on 
the following class of spectral curves with $(h+1)$-cut and odd number of branch points as
\begin{align}
y=M(x) \sqrt{\sigma(x)}\,,
\quad M(x)=\frac{q(x)}{p(x)}\,,
\quad \sigma(x)=\prod_{k=1}^{2h+1} (x - \alpha_k)\,,
\label{sp_curve_g_c}
\end{align}
where $p(x)$ and $q(x)$ are polynomials of $x$ defined in \eqref{pq_def}.
We will associate a two-reduced $W^{(3)}$-type Hamiltonian with the spectral curve \eqref{sp_curve_g_c}.

\subsection{Two-reduced \texorpdfstring{$W^{(3)}$}{W(3)}-type Hamiltonian and amplitudes}

The spectral curve \eqref{cont_DT_sp_curve} of the continuum pure DT model is an example of the above class of spectral curves with $r=\mu=0$, $s=1$ and $h=0$.
Similar to Section \ref{sec:ham_tr}, by consulting the Hamiltonian \eqref{cont_pure_dt_ham}, we associate the following two-reduced $W^{(3)}$-type Hamiltonian $\mathcal{H}^{red}$ with the spectral curve \eqref{sp_curve_g_c} (see Appendix \ref{app:rev_sd_ai} for a detailed construction of $\mathcal{H}^{red}$). 
A remarkable point, compared to the $W^{(3)}$-type Hamiltonian $\mathcal{H}$ in \eqref{gen_hamiltonian}, 
is the appearance of a three-string annihilation operator $\Psi(1)^2 \Psi(2)$.

\begin{Def}\label{def:hamiltonian_red}
A two-reduced $W^{(3)}$-type Hamiltonian $\mathcal{H}^{red}$ (see Figs. \ref{fig:hamiltonian} and \ref{fig:hamiltonian_extra}) is
\begin{align}
\begin{split}
-\mathcal{H}^{red}&=
2 g_s^{-1}
\sum_{k=1}^{\mu+1} k \, \kappa_{k}\, \Psi(2k)
+
\frac{g_s}{8} \left(2p_0 \Psi(4) + p_1 \Psi(2)\right)
+ \tau_{0}\, \Psi(1) \Psi(2)
\\
&\ \
+ \frac{g_s p_0}{8} \Psi(1)^2 \Psi(2)
+2\sum_{k=0}^{s+1} \tau_k
\sum_{\ell \ge 1}
\ell\, \Psd(\ell+2k-3)\Psi(\ell)
\\
&\ \
+g_s \sum_{k=\mu}^{r} p_k
\sum_{\ell, \ell' \ge 1}
\left(\ell+\ell'-2k+4\right)
\Psd(\ell)\Psd(\ell')\Psi(\ell+\ell'-2k+4)
\\
&\ \
+\frac{g_s}{4} \sum_{k=\mu}^{r} p_k
\sum_{\ell, \ell' \ge 1}
\ell \ell'\, 
\Psd(\ell+\ell'+2k-4) \Psi(\ell) \Psi(\ell')\,,
\label{gen_hamiltonian_ai}
\end{split}
\end{align}
where $g_s$ is the string coupling constant.
\begin{figure}[t]
\centering
\includegraphics[width=65mm]{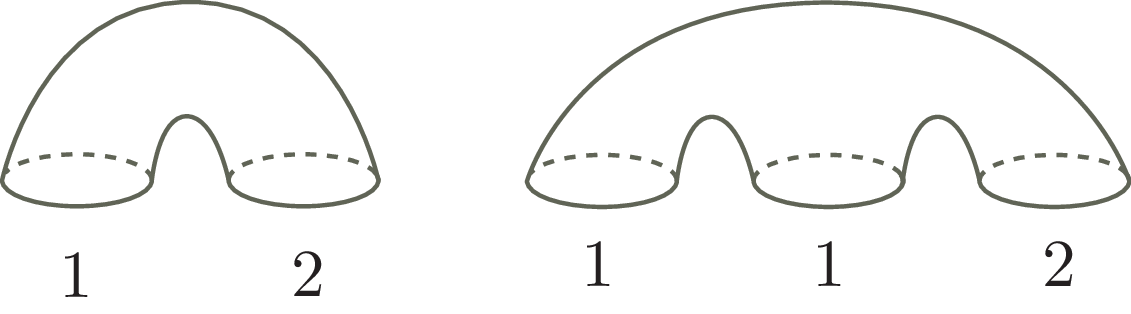}
\caption{Graphical representation of the term $\Psi(1) \Psi(2)$ in $\mathcal{H}^{red}$ instead of $\Psi(1)^2$ in Fig. \ref{fig:hamiltonian} and 
the additional term $\Psi(1)^2 \Psi(2)$ in $\mathcal{H}^{red}$ which is not included among the building blocks in Fig. \ref{fig:hamiltonian}.}
\label{fig:hamiltonian_extra}
\end{figure}
The parameters $\kappa_k$ ($k=1, 2, \ldots, \mu$), 
$\tau_k$ ($k=0, 1, \ldots, s+1$) and 
$p_k$ ($k=0,1, \mu, \mu+1, \ldots, r$, $p_{\mu} \ne 0$, $p_r \ne 0$) are independent parameters, 
whereas $\kappa_{\mu+1}$ is the constrained parameter determined by the planar SD equation in Section \ref{subsec:n1_sd_red} (see also \eqref{nu0_ai_app}) as
\begin{align}
\kappa_{\mu+1}=\frac{\tau_0^2}{p_{\mu}}\,.
\label{nu0_ai}
\end{align}
Here we impose a condition
\begin{align}
s+1 \ge r\,,
\label{condition_param_gen_ai}
\end{align}
for obtaining the spectral curve \eqref{sp_curve_g_c} from the planar SD equation in Section \ref{subsec:n1_sd_red} (see also \eqref{condition_param_c}).
\end{Def}

\begin{rem}\label{rem:hamiltonian_param_ai}
The input data for the two-reduced $W^{(3)}$-type 
Hamiltonian \eqref{gen_hamiltonian_ai} admit
a similar interpretation to that given in Remark \ref{rem:hamiltonian_param} for 
the $W^{(3)}$-type Hamiltonian \eqref{gen_hamiltonian}.
The analysis of the planar SD equation in Section \ref{subsec:n1_sd_red} will clarify the constrained parameter \eqref{nu0_ai} and the condition \eqref{condition_param_gen_ai}.
Here we note that the Hamiltonian \eqref{gen_hamiltonian_ai} contains $\mu$ redundant parameters in relation to the determination of the spectral curve \eqref{sp_curve_g_c}, just as mentioned in Remark \ref{rem:hamiltonian_param}.
\end{rem}

Here, we note that the Hamiltonian \eqref{gen_hamiltonian_ai} satisfies the no big-bang condition $\mathcal{H}^{red} |\mathrm{vac}\rangle=0$ given in \eqref{no_bigbang_red}. 
For the $\nn$-point amplitude $\sff_{\nn}(\bm{x}_{I})$ defined in \eqref{n_pt_amp_v} and determined by $\mathcal{H}^{red}$, we obtain the following proposition.

\begin{prop}\label{prop:sd_eq_red}
The SD equation \eqref{sd_eq_gen_red} yields
\begin{align}
0&=
g_s \sum_{i=1}^{\nn} \partial_{x_i} \left(
\left(p(x_i)\, 
\sff_{\nn+1}(x_i, \bm{x}_{I}\right)_{\mathrm{irreg}(x_i)}
\right)
+ \sum_{i=1}^{\nn} \partial_{x_i} \left( 
\left(2 \sfDel(x_i)\,  \sff_{\nn}(\bm{x}_{I})\right)_{\mathrm{irreg}(x_i)}\right)
\nonumber\\
&\ \
+ \sum_{i=1}^{\nn} \partial_{x_i} \left(
\left(
g_s^{-1} \sfK_{-}(x_i)
+\frac{g_s p_0}{16} x_i^{-2}+\frac{g_s p_1}{16} x_i^{-1}
\right)
\sff_{\nn-1}(\bm{x}_{I \backslash \{i\}})\right)
\nonumber\\
&\ \
+ g_s \sum_{1 \le i < j \le \nn} \partial_{x_i}\partial_{x_j}
\biggl[\biggl(\left(x_i^{-1/2} p(x_i)\, \sff_{\nn-1}(\bm{x}_{I \backslash \{j\}})\right)_{\mathrm{irreg}(x_i)}
\nonumber\\
&\hspace{10em}
-\left(x_j^{-1/2} p(x_j)\, \sff_{\nn-1}(\bm{x}_{I \backslash \{i\}})\right)_{\mathrm{irreg}(x_j)}\biggr)
/\left(x_i^{1/2}-x_j^{1/2}\right)\biggr]
\nonumber\\
&\ \
+ \sum_{1 \le i < j \le \nn} \partial_{x_i}\partial_{x_j}
\frac{\left(x_i^{-1/2}\sfDel(x_i)\right)_{\mathrm{irreg}(x_i)}
-\left(x_j^{-1/2}\sfDel(x_j)\right)_{\mathrm{irreg}(x_j)}}
{x_i^{1/2}-x_j^{1/2}}
\, \sff_{\nn-2}(\bm{x}_{I \backslash \{i,j\}})
\nonumber\\
&\ \
-\frac{g_s p_0}{8} 
\sum_{i=1}^{\nn} \sum_{\substack{1 \le j <k \le \nn \\ j,k \ne i}}
x_i^{-2}x_j^{-3/2}x_k^{-3/2}\,
\sff_{\nn-3}(\bm{x}_{I \backslash \{i,j,k\}})\,,
\label{key_sd_red}
\end{align}
where $f(x)_{\mathrm{irreg}(x)}:=F(x)-F(x)_{\mathrm{reg}(x)}$ for a function $F(x)$ of $x$, and 
$\mathrm{reg}(x)$ denotes the regular part of the Laurent series of $x^{1/2}$ at $x=0$.
Here $p(x)$ is defined in \eqref{pq_def}, and 
\begin{align}
\sfDel(x):=\sum_{k=-1}^{s}\tau_{k+1}\, x^{k+1/2}\,,
\qquad
\sfK_{-}(x):=\sum_{k=1}^{\mu+1} \kappa_k\, x^{-k}\,.
\label{delta_k_def_ai}
\end{align}
\end{prop}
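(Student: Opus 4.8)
The plan is to closely follow the strategy used to establish Proposition \ref{prop:sd_eq}: evaluate the single commutator $[-\mathcal{H}^{red},\Psd(\ell)]$ using only the canonical relations \eqref{string_com_rel}, insert the result into the SD equation \eqref{sd_eq_gen_red}, and resum the $\ell$-sums against the half-integer Laplace kernel \eqref{psdw_v}. First I would compute $[-\mathcal{H}^{red},\Psd(\ell)]$ term by term from \eqref{gen_hamiltonian_ai}, organizing the output by the number of annihilation operators it carries. The two cubic vertices reproduce, after the obvious index shifts, the $\Psd\Psd$ and $\Psd\Psi$ structures already encountered in \eqref{com_H1}; the propagator and tadpole terms give the one-body pieces $2\ell\sum_{k}\tau_{k+1}\Psd(\ell+2k-1)$ and $g_s^{-1}\ell\,\kappa_{\ell/2}$ (nonzero only for even $\ell$); and the low-order vertices $\frac{g_s}{8}(2p_0\Psi(4)+p_1\Psi(2))$, $\tau_{0}\Psi(1)\Psi(2)$ and $\frac{g_s p_0}{8}\Psi(1)^2\Psi(2)$ produce $c$-number, one-$\Psi$, and two-$\Psi$ remainders, respectively.

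The \emph{genuinely new} feature, compared with Proposition \ref{prop:sd_eq}, is the three-string annihilation vertex $\frac{g_s p_0}{8}\Psi(1)^2\Psi(2)$. Its commutator with $\Psd(\ell)$ yields $\frac{g_s p_0}{8}\bigl(\Psi(1)^2\,\delta_{\ell,2}+2\Psi(1)\Psi(2)\,\delta_{\ell,1}\bigr)$, i.e.\ a piece still quadratic in annihilation operators. When this is inserted into \eqref{sd_eq_gen_red} between the remaining $\vPsdw(x_j)$, the surviving $\Psi(1)^2$ must be contracted against two distinct later insertions $\vPsdw(x_j)$, $\vPsdw(x_k)$ by a second use of \eqref{string_com_rel}. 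This double contraction is exactly what generates the last, genuinely three-variable term $-\frac{g_s p_0}{8}\sum_i\sum_{j<k}x_i^{-2}x_j^{-3/2}x_k^{-3/2}\,\sff_{\nn-3}$ in \eqref{key_sd_red}; no analogue appears in the even case because $\mathcal{H}$ has no three-annihilation vertex. The one-$\Psi$ remainders contract against a single later insertion and feed the two-variable block, while the $c$-number remainders and the tadpole feed the $\sff_{\nn-1}$ terms.

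The second new ingredient is purely kinematical: the half-integer Laplace transform \eqref{psdw_v}. Here I would use $\ell\,x^{-\ell/2-1}=-2\,\partial_x x^{-\ell/2}$ to convert each loop-length weight $\ell$ into the derivatives $\partial_{x_i}$ that appear uniformly in \eqref{key_sd_red}, and then observe that the index shifts---all in steps of two---combine with the kernel $x^{-\ell/2-1}$ to assemble the generating series $p(x)=\sum_k p_k x^k$, $\sfDel(x)=\sum_{k=-1}^{s}\tau_{k+1}x^{k+1/2}$ and $\sfK_{-}(x)=\sum_{k=1}^{\mu+1}\kappa_k x^{-k}$ of \eqref{delta_k_def_ai}. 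The parity of the shift is what produces the half-integer powers, hence the difference quotients with denominator $(x_i^{1/2}-x_j^{1/2})$ rather than $(x_i-x_j)$, and forces the projection ``$\mathrm{irreg}$'' to be taken with respect to the Laurent expansion in $x^{1/2}$. The restriction $\ell\ge1$ in \eqref{psdw_v}, together with the convention $\Psd(\ell)\equiv0$ for $\ell\le0$, is precisely what removes the non-negative powers and thereby installs the $(\cdot)_{\mathrm{irreg}(x)}$ projections.

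I expect the main obstacle to be bookkeeping rather than conceptual: keeping the half-integer powers consistent throughout, verifying that the low-order vertices $\Psi(2),\Psi(4),\Psi(1)\Psi(2),\Psi(1)^2\Psi(2)$ reassemble into exactly the corrections $\frac{g_s p_0}{16}x_i^{-2}+\frac{g_s p_1}{16}x_i^{-1}$ together with the constrained coefficient $\kappa_{\mu+1}=\tau_0^2/p_{\mu}$ of \eqref{nu0_ai}, and checking that the various $c$-number boundary pieces combine with the derivative terms so that the entire right-hand side can be written under a single $\partial_{x_i}$ (resp.\ $\partial_{x_i}\partial_{x_j}$). Once the commutator is in hand, the remaining steps are the same resummation manipulations as in Proposition \ref{prop:sd_eq}, now carried out in the half-integer (two-reduced) grading.
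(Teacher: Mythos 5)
Your proposal is correct and follows essentially the same route as the paper: the paper's proof likewise computes $[-\mathcal{H}^{red},\Psd(\ell)]$ (yielding exactly the remainders you list, including $\frac{g_s p_0}{8}(2\delta_{\ell,1}\Psi(1)\Psi(2)+\delta_{\ell,2}\Psi(1)^2)$ from the three-string annihilation vertex), then commutes the surviving annihilation operators past the later $\Psd(\ell_j)$'s — the double contraction producing the $\delta_{\ell_i+\ell_j+\ell_k,4}$ three-index term that becomes the $\sff_{\nn-3}$ contribution — and finally resums against the half-integer kernel \eqref{psdw_v} to obtain \eqref{key_sd_red}. The structural points you single out (the two-annihilation remainder, the step-two index shifts forcing half-integer powers, the denominators $x_i^{1/2}-x_j^{1/2}$, and the $\mathrm{irreg}$ projections in $x^{1/2}$) are precisely the features of the paper's own computation.
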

\begin{proof}
By
\begin{align}
&
\left[-\mathcal{H}^{red}, \Psd(\ell)\right]
\nonumber\\
&=
g_s\, \ell \sum_{k=\mu}^{r} p_k\, 
\sum_{\substack{\ell',\ell''\ge 1 \\ \ell'+\ell''=\ell+2k-4}}
\Psd(\ell')\Psd(\ell'')
+ 2\ell \sum_{k=-1}^{s}\tau_{k+1}\, \Psd(\ell+2k-1)
+ g_s^{-1}\, \ell \sum_{k=1}^{\mu+1} \kappa_{k} \delta_{\ell,2k}
\nonumber\\
&\ \
+ \frac{g_s}{8} \left(2p_0 \delta_{\ell,4} + p_1 \delta_{\ell,2}\right)
+\frac{g_s}{2} \sum_{k=\mu}^{r} p_k
\sum_{\ell' \ge 1} \ell \ell'\,
\Psd(\ell+\ell'+2k-4) \Psi(\ell')
\nonumber\\
&\ \
+\frac12 \tau_{0}\, \ell \left(3-\ell\right) \Psi(3-\ell)
+\frac{g_s p_0}{8} \left(2\delta_{\ell,1} \Psi(1)\Psi(2)
+ \delta_{\ell,2} \Psi(1)^2\right),
\label{com_H1_red}
\end{align}
we obtain
\begin{align}
&
\Psd(\ell_1) \cdots \Psd(\ell_{i-1})
\left[-\mathcal{H}^{red}, \Psd(\ell_i)\right]
\Psd(\ell_{i+1})\cdots \Psd(\ell_{\nn})
\nonumber
\\
&=
\Psd(\ell_1)\cdots \breve{\Psi}^{\dagger}(\ell_i)\cdots \Psd(\ell_{\nn})
\nonumber\\
&\ \
\times
\Biggl(
g_s\, \ell_i \sum_{k=\mu}^{r} p_k\, 
\sum_{\substack{\ell,\ell'\ge 1 \\ \ell+\ell'=\ell_i+2k-4}}
\Psd(\ell)\Psd(\ell')
+ 2 \ell_i \sum_{k=-1}^{s}\tau_{k+1}\, \Psd(\ell_i+2k-1)
\nonumber\\
&\hspace{2.5em}
+ g_s^{-1}\, \ell_i \sum_{k=1}^{\mu+1} \kappa_{k} \delta_{\ell_i,2k}
+\frac{g_s}{8} \left(2p_0 \delta_{\ell_i,4} + p_1 \delta_{\ell_i,2}\right)
\nonumber\\
&\hspace{2.5em}
+ \frac{g_s}{2} \sum_{k=\mu}^{r} p_k
\sum_{\ell \ge 1} \ell_i \ell\,
\Psd(\ell_i+\ell+2k-4)\Psi(\ell)
\nonumber\\
&\hspace{2.5em}
+\frac12 \tau_{0}\, \ell_i \left(3-\ell_i\right) \Psi(3-\ell_i)
+\frac{g_s p_0}{8} \left(2\delta_{\ell_i,1} \Psi(1)\Psi(2)
+ \delta_{\ell_i,2} \Psi(1)^2\right)
\Biggr)
\nonumber\\
&\ \
+ \sum_{j=i+1}^{\nn} 
\Psd(\ell_1)\cdots \breve{\Psi}^{\dagger}(\ell_i)\cdots 
\breve{\Psi}^{\dagger}(\ell_j) \cdots \Psd(\ell_{\nn})
\Biggl(
\frac{g_s}{2} \sum_{k=\mu}^{r} p_k \ell_i \ell_j\,
\Psd(\ell_i+\ell_j+2k-4)
\nonumber\\
&\hspace{10em}
+\frac12 \tau_{0}\, \delta_{\ell_i+\ell_j,3}\, \ell_i\, \ell_j
+\frac{g_s p_0}{4}
\left(\delta_{\ell_i+\ell_j,2}\Psi(2) + \delta_{\ell_i+\ell_j,3}\Psi(1)\right)
\Biggr)
\nonumber\\
&\ \
+\frac{g_s p_0}{4} \sum_{i+1 \le j<k \le \nn} 
\delta_{\ell_i+\ell_j+\ell_k,4}\, 
\Psd(\ell_1)\cdots \breve{\Psi}^{\dagger}(\ell_i)\cdots 
\breve{\Psi}^{\dagger}(\ell_j)\cdots 
\breve{\Psi}^{\dagger}(\ell_k) \cdots \Psd(\ell_{\nn})\,,
\end{align}
and the SD equation \eqref{sd_eq_gen_red} yields \eqref{key_sd_red}.
\end{proof}

From behaviors for $x_1 \to \infty$:
\begin{align}
\begin{split}
&
p(x_1) \sff_{\nn+1}(x_1,x_1, \bm{x}_{I \backslash \{1\}})=O(x_1^{r-3})\,,\ \
\sfDel(x_1) \sff_{\nn}(\bm{x}_{I})=O(x_1^{s-1})\,,
\\
&
p(x_1)\sff_{\nn-1}(\bm{x}_{I \backslash \{j\}})=O(x_1^{r-3/2})\ \textrm{for}\ j\ne 1\,,
\ \
\sfDel(x_1) = O(x_1^{s+1/2})\,,
\end{split}
\end{align}
and an equation
\begin{align}
\frac{x_1^{-1/2} f(x_1)-x_i^{-1/2} f(x_i)}{x_1^{1/2}-x_i^{1/2}}
= \frac{\left(1+x_1^{-1/2}x_i^{1/2}\right) f(x_1)
-\left(x_1^{1/2}x_i^{-1/2}+1\right) f(x_i)}{x_1-x_i}\,,
\end{align}
for a function $f(x)$ of $x$, 
Proposition \ref{prop:sd_eq_red} implies the following proposition.

\begin{prop}\label{prop:sd_eq_sep_red}
Symmetric solutions $\sff_{\nn}(\bm{x}_{I})$ in the variables $\bm{x}_{I}$ of the separated SD equation
\begin{align}
0&=
g_s \left(p(x_1)\, \sff_{\nn+1}(x_1, \bm{x}_{I})\right)_{\mathrm{irreg}(x_1)}
+
\left(2 \sfDel(x_1)\, \sff_{\nn}(\bm{x}_{I})\right)_{\mathrm{irreg}(x_1)}
\nonumber\\
&\ \
+ \left(g_s^{-1} \sfK_{-}(x_1)
+\frac{g_s p_0}{16} x_1^{-2}+\frac{g_s p_1}{16} x_1^{-1}\right)
\sff_{\nn-1}(\bm{x}_{I \backslash \{1\}})
\nonumber\\
&\ \
+g_s \sum_{i=2}^{\nn} \partial_{x_i}
\frac{
\left(x_1^{-1/2}x_i^{1/2} p(x_1)\, \sff_{\nn-1}(\bm{x}_{I \backslash \{i\}})\right)_{\mathrm{irreg}(x_1)}
-\left(p(x_i)\, \sff_{\nn-1}(\bm{x}_{I \backslash \{1\}})\right)_{\mathrm{irreg}(x_i)}}{x_1-x_i}
\nonumber\\
&\ \
+ \sum_{i=2}^{\nn} \partial_{x_i}
\frac{\left(x_1^{-1/2}x_i^{1/2}\sfDel(x_1)\right)_{\mathrm{irreg}(x_1)}
-\sfDel(x_i)_{\mathrm{irreg}(x_i)}}{x_1-x_i}
\, \sff_{\nn-2}(\bm{x}_{I \backslash \{1,i\}})
\nonumber\\
&\ \
+ \frac{g_s p_0}{8} \sum_{2 \le i <j \le \nn} x_1^{-1} x_i^{-3/2} x_j^{-3/2}
\sff_{\nn-3}(\bm{x}_{I \backslash \{1,i,j\}})
+ C_{\nn-1}(\bm{x}_{I \backslash \{1\}})\,,
\label{key_sp_sd_red}
\end{align}
are also solutions to the equation \eqref{key_sd_red}, 
where $C_{\nn-1}(\bm{x}_{I \backslash \{1\}})$ is 
a function of $\bm{x}_{I \backslash \{1\}}$.
\end{prop}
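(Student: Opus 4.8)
The plan is to mirror the proof of Proposition~\ref{prop:sd_eq_sep}: I will show that the right-hand side of the full SD equation \eqref{key_sd_red} is identically equal to $\sum_{m=1}^{\nn}\partial_{x_m}\mathcal{S}_m$, where $\mathcal{S}_m$ denotes the right-hand side of the separated equation \eqref{key_sp_sd_red} after relabelling the distinguished variable $x_1\mapsto x_m$ (keeping its own term $C_{\nn-1}(\bm{x}_{I\backslash\{m\}})$). Granting this master identity the statement follows at once: if $\sff_{\nn}$ and all lower amplitudes are symmetric and satisfy \eqref{key_sp_sd_red}, then relabelling gives $\mathcal{S}_m=0$ for every $m$, and since $C_{\nn-1}(\bm{x}_{I\backslash\{m\}})$ does not depend on $x_m$ it is annihilated by $\partial_{x_m}$; hence \eqref{key_sd_red} holds. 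It therefore remains to verify the master identity term by term.

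The single-$\partial_{x_m}$ pieces of $\mathcal{S}_m$ (those carrying $p(x_m)\sff_{\nn+1}$, $\sfDel(x_m)\sff_{\nn}$, and $g_s^{-1}\sfK_{-}(x_m)+\tfrac{g_sp_0}{16}x_m^{-2}+\tfrac{g_sp_1}{16}x_m^{-1}$ times $\sff_{\nn-1}$) reproduce the corresponding single-derivative terms of \eqref{key_sd_red} directly under $\sum_m\partial_{x_m}$. The three-string piece is the cleanest check: since $x_i^{-3/2}x_j^{-3/2}\sff_{\nn-3}(\bm{x}_{I\backslash\{m,i,j\}})$ is independent of $x_m$, one has $\partial_{x_m}\bigl(\tfrac{g_sp_0}{8}x_m^{-1}x_i^{-3/2}x_j^{-3/2}\sff_{\nn-3}\bigr)=-\tfrac{g_sp_0}{8}x_m^{-2}x_i^{-3/2}x_j^{-3/2}\sff_{\nn-3}$, and summing over $m$ reproduces the last line of \eqref{key_sd_red}, the shift $x^{-1}\mapsto-x^{-2}$ accounting exactly for the power mismatch between the two forms.

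The heart of the matter, and the step I expect to be the main obstacle, is the two-string terms, because they appear in \eqref{key_sd_red} with the square-root denominators $x_i^{1/2}-x_j^{1/2}$ whereas $\sum_m\partial_{x_m}\mathcal{S}_m$ manufactures ordinary denominators $x_m-x_i$ accompanied by the asymmetric prefactors $x_m^{-1/2}x_i^{1/2}$. For a fixed unordered pair $\{a,b\}$ I will collect the two ordered contributions $(m,i)=(a,b)$ and $(b,a)$; using $\partial_{x_a}\partial_{x_b}=\partial_{x_b}\partial_{x_a}$ and $x_b-x_a=-(x_a-x_b)$ they fuse into a single expression over $x_a-x_b$. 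Exploiting the symmetry of $\sff_{\nn-1}$, I set $\phi(x):=p(x)\,\sff_{\nn-1}(x,\bm{x}_{I\backslash\{a,b\}})$, so that the two numerators are $\phi(x_a)$ and $\phi(x_b)$ of one and the same function and the fused numerator becomes
\[
\bigl((1+x_a^{-1/2}x_b^{1/2})\phi(x_a)\bigr)_{\mathrm{irreg}(x_a)}-\bigl((1+x_a^{1/2}x_b^{-1/2})\phi(x_b)\bigr)_{\mathrm{irreg}(x_b)} .
\]
On the full-equation side, multiplying the pair term of \eqref{key_sd_red} by $(x_a^{1/2}+x_b^{1/2})/(x_a^{1/2}+x_b^{1/2})$ converts its denominator to $x_a-x_b$, and the algebraic identity displayed just before the statement reduces its numerator to the very same expression; the crucial point is that the spurious $x^0$-coefficient of $\phi$ generated when $x_a^{1/2}$ is pulled through the truncation $(\cdot)_{\mathrm{irreg}(x_a)}$ is cancelled by the matching term from $x_b$, precisely because $\phi$ is a single function (symmetry of $\sff_{\nn-1}$). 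The $\sfDel$-term is treated identically, and is in fact easier: $\sfDel(x)$ carries only half-integer powers, so its $x^0$-coefficient vanishes and no cancellation is needed. Finally, the $x_1\to\infty$ estimates displayed before the statement enter only to guarantee that $\phi$, $p\,\sff_{\nn+1}$, $\sfDel\,\sff_{\nn}$, etc.\ possess genuine Laurent expansions of controlled degree, so that the regular/irregular splitting is well defined and pulling the derivatives out generates no boundary contribution; with these in hand the term-by-term matching — whose only delicate piece is this $x^0$-coefficient bookkeeping in the two-string terms — goes through exactly as in the even case.
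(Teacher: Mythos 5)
Your proposal is correct and follows essentially the same route the paper intends: the paper states the $x_1\to\infty$ behaviors and the half-integer-denominator identity immediately before the proposition precisely so that the reader can check, as you do, that the full SD equation \eqref{key_sd_red} is $\sum_{m}\partial_{x_m}$ of the relabelled separated equations, with the $C_{\nn-1}$ terms killed by the derivatives. Your term-by-term verification — in particular fusing the ordered pairs $(a,b)$, $(b,a)$ over the common denominator $x_a-x_b$, and noting that the spurious $x^0$-coefficient of $\phi(x)=p(x)\,\sff_{\nn-1}(x,\bm{x}_{I\backslash\{a,b\}})$ produced by pulling $x_a^{1/2}$ through $(\cdot)_{\mathrm{irreg}(x_a)}$ cancels against the matching term in $x_b$ by symmetry of $\sff_{\nn-1}$ (and is absent for $\sfDel$, which has only half-integer powers) — is exactly the bookkeeping the paper leaves implicit.
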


\subsection{Topological recursion}

\begin{Def}\label{def:free_en_red}
(Connected) amplitudes $\sfF_{\nn}(\bm{x}_{I})$, $\nn \ge 1$ are%
\footnote{
The shift term $\Omega_2(x_1,x_2)$ in the annulus amplitude motivates the inclusion of the three-string annihilation operator $\Psi(1)^2 \Psi(2)$ in the two-reduced $W^{(3)}$-type Hamiltonian $\mathcal{H}^{red}$ in \eqref{gen_hamiltonian_ai}. This is shown in Section \ref{subsubsec:n3_sp_sd_odd}.
In the pure DT model reviewed in Section \ref{subsec:dt_rev}, this shift naturally appears in the continuum limit (see e.g., \cite{FMW2025a} for details).}
\begin{align}
\begin{split}
&
\sfF_1(x) = \sff_1(x) + g_s^{-1} \frac{\sfDel(x)}{p(x)}\,,
\\
&
\sfF_2(x_1,x_2) = \sff^{\mathrm{con}}_2(x_1,x_2) + \Omega_2(x_1,x_2)\,,
\quad
\Omega_2(x_1,x_2):=
\frac{1}{4x_1^{1/2}x_2^{1/2}\left(x_1^{1/2}+x_2^{1/2}\right)^2}\,,
\\
&
\sfF_{\nn}(\bm{x}_{I}) =\sff_{\nn}^{\mathrm{con}}(\bm{x}_{I})
\ \textrm{for}\ \nn \ge 2\,,
\label{free_en_red}
\end{split}
\end{align}
where $\sff_{\nn}^{\mathrm{con}}(\bm{x}_{I})$ is the connected part of 
$\sff_{\nn}(\bm{x}_{I})$. 
We consider the perturbative expansion of the connected amplitudes as
\begin{align}
\sff_{\nn}^{\mathrm{con}}(\bm{x}_{I})
=\sum_{g \ge 0} g_s^{2g-2+\nn} \sff_{\nn}^{(g)}(\bm{x}_{I})\,,
\quad
\sfF_{\nn}(\bm{x}_{I})=\sum_{g \ge 0} g_s^{2g-2+\nn} \sfF_{\nn}^{(g)}(\bm{x}_{I})\,.
\label{pert_exp_f_red}
\end{align}
\end{Def}

\subsubsection{\texorpdfstring{$\nn=1$}{n=1} separated SD equation and disk amplitude}\label{subsec:n1_sd_red}

When $\nn=1$, the separated SD equation \eqref{key_sp_sd_red} is
\begin{align}
0&=
\left(
g_s p(x)\, \sff_{2}(x, x)
+ 2 \sfDel(x)\, \sff_{1}(x)\right)_{\mathrm{irreg}(x)}
+ g_s^{-1} \sfK_{-}(x)
+\frac{g_s}{16} \left(p_0 x^{-2}+p_1 x^{-1}\right)
+ C_{0}
\nonumber\\
&=
\left(
g_s p(x)\, \sfF_{2}(x, x)
+ g_s p(x)\, \sff_{1}(x)^2
+ 2 \sfDel(x)\, \sff_{1}(x)\right)_{\mathrm{irreg}(x)}
+ g_s^{-1} \sfK_{-}(x)\,,
\label{sp_sd_red_1}
\end{align}
where $C_{0}=0$ is used, following from the asymptotic behavior as $x \to \infty$.
The form of this equation is same as the equation \eqref{sd_n1} except 
the form of $\sfDel(x)$, and we obtain
\begin{align}
0&=
\left(\sff_1^{(0)}(x) + \frac{\sfDel(x)}{p(x)}\right)^2
- \frac{q(x)^2}{p(x)^2}\, \sigma(x)
=
\sfF_1^{(0)}(x)^2 
- \frac{q(x)^2}{p(x)^2}\, \sigma(x)\,,
\label{sd_pert_n10_odd}
\\
0&=
\left(p(x)\, \sfF_{2}^{(g-1)}(x,x)
+p(x) \sum_{g_1+g_2=g} \sfF_1^{(g_1)}(x)\, \sfF_1^{(g_2)}(x)
\right)_{\mathrm{irreg}(x)}
\quad \textrm{for}\ g \ge 1\,,
\label{sd_pert_n1h_red}
\end{align}
where
\begin{align}
q(x)^2 \sigma(x)=
\sfDel(x)^2 
+p(x)\left(2 \sfDel(x)\,  \sff_{1}^{(0)}(x)
+p(x)\, \sff_1^{(0)}(x)^2\right)_{\mathrm{reg}(x)} 
- p(x)\, \sfK_{-}(x)\,.
\label{sp_curve_determine_red}
\end{align}
As a result, 
$(h+1)$-cut solution ($0 \le h \le s$) of the disk amplitude
\begin{align}
\sfF_1^{(0)}(x)=M(x) \sqrt{\sigma(x)}\,,\ \
M(x)=\frac{q(x)}{p(x)}=
\frac{\sum_{k=0}^{s-h} q_k\, x^k}{\sum_{k=\mu}^{r} p_k\, x^{k}}\,,\ \
\sigma(x)=\prod_{k=1}^{2h+1} (x - \alpha_k)\,,
\label{planar_f_g_red}
\end{align}
is obtained from the equation \eqref{sp_curve_determine_red}, 
which yields $2s+3$ conditions 
on the coefficients of powers of $x$, by
\begin{align*}
&
\sfDel(x)^2 = \left(\sum_{k=-1}^{s}\tau_{k+1}\, x^{k+1/2}\right)^2
= \tau_{s+1}^2\, x^{2s+1} + \cdots
+ \tau_0^2\, x^{-1}\,,
\\
&
p(x)\left(2 \sfDel(x)\,  \sff_{1}^{(0)}(x)
+p(x)\, \sff_1^{(0)}(x)^2\right)_{\mathrm{reg}(x)} = 
\left(\sum_{k=\mu}^{r} p_k\, x^{k}\right)
\left(\sum_{k=0}^{s-1} c_k\, x^k\right)
\\
&\hspace{17.4em}
=p_r c_{s-1} x^{r+s-1} + \cdots,
\\
&
p(x) \sfK_{-}(x) = 
\left(\sum_{k=\mu}^{r} p_k\, x^{k}\right)
\left(\sum_{k=1}^{\mu+1} \kappa_k\, x^{-k}\right)
= p_r \kappa_1 x^{r-1} + \cdots 
+ p_{\mu}\kappa_{\mu+1}\, x^{-1}\,,
\end{align*}
where $c_k$ $(k=0,1, \ldots, s-1)$ are certain constants. 
They determine the parameter $\kappa_{\mu+1}$ as in \eqref{nu0_ai}, 
the $s$ constants $c_k$, and $s+2$ parameters $q_k$ and $\alpha_k$ out of the total $s+h+2$ parameters. 
Here $2s+1 > r+s-1$, i.e., the condition $s+1 \ge r$ in \eqref{condition_param_gen_ai} is imposed to determine the spectral curve $y=\sfF_1^{(0)}(x)$, and in particular, 
\begin{align}
q_{s-h}=\tau_{s+1}\,,
\end{align}
is determined. 
The remaining $h$ parameters are determined by the $h$ $A$-periods in \eqref{periods_c}.

\subsubsection{\texorpdfstring{$\nn=2$}{n=2} separated SD equation and annulus amplitude}

When $\nn=2$, the separated SD equation \eqref{key_sp_sd_red} is
\begin{align}
0&=
\left(
g_s p(x_1)\, \sff_{3}(x_1, x_1, x_2)
+
2 \sfDel(x_1)\, \sff_{2}(x_1, x_2)\right)_{\mathrm{irreg}(x_1)}
\nonumber\\
&\ \
+ 
\left(g_s^{-1} \sfK_{-}(x_1)
+ \frac{g_s}{16}\left(p_0 x_1^{-2}+p_1 x_1^{-1}\right)\right) \sff_{1}(x_2)
\nonumber\\
&\ \
+g_s \partial_{x_2}
\frac{
\left(x_1^{-1/2}x_2^{1/2} p(x_1)\, \sff_{1}(x_1)\right)_{\mathrm{irreg}(x_1)}
-\left(p(x_2)\, \sff_{1}(x_2)\right)_{\mathrm{irreg}(x_2)}}{x_1-x_2}
\nonumber\\
&\ \
+ \partial_{x_2}
\frac{\left(x_1^{-1/2}x_2^{1/2}\sfDel(x_1)\right)_{\mathrm{irreg}(x_1)}
-\sfDel(x_2)_{\mathrm{irreg}(x_2)}}{x_1-x_2}
+ C_{1}(x_2)
\nonumber\\
&=
\bigl(
g_s p(x_1)\left(\sff_{3}^{\mathrm{con}}(x_1, x_1, x_2)
+ 2\sff_{2}^{\mathrm{con}}(x_1, x_2) \sff(x_1)
+ \sff_{2}(x_1, x_1) \sff_1(x_2)\right)
\nonumber\\
&\ \ \ \
+ 2 \sfDel(x_1)\left(\sff_{2}^{\mathrm{con}}(x_1,x_2)+\sff_1(x_1)\sff_1(x_2)\right)
\bigr)_{\mathrm{irreg}(x_1)}
\nonumber\\
&\ \
+ 
\left(g_s^{-1} \sfK_{-}(x_1)
+\frac{g_s}{16}\left(p_0 x_1^{-2}+p_1 x_1^{-1}\right)\right) \sff_{1}(x_2)
\nonumber\\
&\ \
+g_s \partial_{x_2}
\frac{
\left(x_1^{-1/2}x_2^{1/2} p(x_1)\, \sfF_{1}(x_1)\right)_{\mathrm{irreg}(x_1)}
-\left(p(x_2)\, \sfF_{1}(x_2)\right)_{\mathrm{irreg}(x_2)}}{x_1-x_2}
+ C_{1}(x_2)\,,
\label{sd_n2_red}
\end{align}
and, by the $\nn=1$ separated SD equation \eqref{sp_sd_red_1} we obtain
\begin{align}
0&=
g_s
\left(
p(x_1)\left(\sfF_{3}(x_1, x_1, x_2)
+ 2\sff_{2}^{\mathrm{con}}(x_1, x_2)
\sfF_1(x_1)\right)\right)_{\mathrm{irreg}(x_1)}
\nonumber\\
&\ \
+g_s \partial_{x_2}
\frac{
\left(p(x_1)\, \sfF_{1}(x_1)\right)_{\mathrm{irreg}(x_1)}
-\left(p(x_2)\, \sfF_{1}(x_2)\right)_{\mathrm{irreg}(x_2)}}{x_1-x_2}
\nonumber\\
&\ \
+g_s \partial_{x_2}
\frac{
\left(\left(x_1^{-1/2}x_2^{1/2}-1\right) p(x_1)\, \sfF_{1}(x_1)
\right)_{\mathrm{irreg}(x_1)}}{x_1-x_2}
+ C_{1}(x_2)\,.
\label{sd_n2_1_red}
\end{align}
By
\begin{align}
\partial_{x_2} \frac{x_1^{-1/2}x_2^{1/2} - 1}{x_1-x_2}
=\frac{1}{2x_1^{1/2}x_2^{1/2}\left(x_1^{1/2}+x_2^{1/2}\right)^2}
=2 \Omega_2(x_1, x_2)\,,
\label{omega_id}
\end{align}
where $\Omega_2(x_1, x_2)$ is 
defined in \eqref{pert_exp_f_red}, the equation \eqref{sd_n2_1_red} yields
\begin{align}
0&=
g_s
p(x_1)\left(\sfF_{3}(x_1, x_1, x_2)
+ 2\sfF_{2}(x_1, x_2) \sfF_1(x_1)\right)
+g_s \partial_{x_2}
\frac{p(x_1)\, \sfF_{1}(x_1)-p(x_2)\, \sfF_{1}(x_2)}{x_1-x_2}
\nonumber\\
&\ \
- g_s
\left(
p(x_1)\left(\sfF_{3}(x_1, x_1, x_2)
+ 2\sff_{2}^{\mathrm{con}}(x_1, x_2) \sfF_1(x_1)\right)\right)_{\mathrm{reg}(x_1)}
\nonumber\\
&\ \
-g_s \partial_{x_2}
\frac{
\left(x_1^{-1/2}x_2^{1/2} p(x_1)\, \sfF_{1}(x_1)\right)_{\mathrm{reg}(x_1)}
-\left(p(x_2)\, \sfF_{1}(x_2)\right)_{\mathrm{reg}(x_2)}}{x_1-x_2}
+ C_{1}(x_2)\,.
\label{sd_n2_2_red}
\end{align}
This equation is almost the same form as the equation \eqref{sd_n2_1}, and actually we can show that the annulus amplitude of the same form as \eqref{annulus_sol}, such that
\begin{itemize}
\item
the amplitude $\sfF_{2}^{(0)}(x_1,x_2)$ has no poles in $x_1$ away from the branch cuts of the spectral curve, 
which lie along $[\alpha_1, \alpha_2] \cup \cdots \cup 
[\alpha_{2h+1}, \infty]$,
\end{itemize}
is obtained.

\subsubsection{\texorpdfstring{$\nn=3$}{n=3} separated SD equation}
\label{subsubsec:n3_sp_sd_odd}

When $\nn=3$, the separated SD equation \eqref{key_sp_sd_red} is
\begin{align}
0&=
\left(
g_s p(x_1)\, \sff_{4}(x_1, x_1, x_2, x_3)
+
2 \sfDel(x_1)\, \sff_{3}(x_1, x_2, x_3)\right)_{\mathrm{irreg}(x_1)}
\nonumber\\
&\ \
+ 
\left(g_s^{-1} \sfK_{-}(x_1)
+ \frac{g_s}{16}\left(p_0 x_1^{-2} + p_1 x_1^{-1}\right)\right)
\sff_{2}(x_2,x_3)
\nonumber\\
&\ \
+g_s \sum_{i=2,3} \partial_{x_i}
\frac{
\left(x_1^{-1/2}x_i^{1/2} p(x_1)\, \sff_{2}(\bm{x}_{I \backslash \{i\}})\right)_{\mathrm{irreg}(x_1)}
-\left(p(x_i)\, \sff_{2}(x_2,x_3)\right)_{\mathrm{irreg}(x_i)}}{x_1-x_i}
\nonumber\\
&\ \
+ \sum_{i=2,3} \partial_{x_i}
\frac{\left(x_1^{-1/2}x_i^{1/2}\sfDel(x_1)\right)_{\mathrm{irreg}(x_1)}
-\sfDel(x_i)_{\mathrm{irreg}(x_i)}}{x_1-x_i}
\, \sff_{1}(\bm{x}_{I \backslash \{1,i\}})
\nonumber\\
&\ \
+ \frac{g_s p_0}{8} x_1^{-1} x_2^{-3/2} x_3^{-3/2}
+ C_{2}(x_2,x_3)\,.
\label{sd_n3_red}
\end{align}
Using the separated SD equations \eqref{sp_sd_red_1} for 
$\nn=1$ and \eqref{sd_n2_red} for $\nn=2$, we obtain
\begin{align}
0&=
\bigl(
g_s p(x_1)\left(\sff_{4}^{\mathrm{con}}(x_1, x_1, x_2, x_3)
+ 2\sff_{3}^{\mathrm{con}}(x_1, x_2, x_3) \sff_1(x_1)
+ 2\sff_{2}^{\mathrm{con}}(x_1, x_2) \sff_2^{\mathrm{con}}(x_1, x_3)\right)
\nonumber\\
&\ \ \ \
+ 2 \sfDel(x_1) \sff_{3}^{\mathrm{con}}(x_1, x_2, x_3)
\bigr)_{\mathrm{irreg}(x_1)}
\nonumber\\
&\ \
+g_s \sum_{(i,j)=(2,3), (3,2)} \partial_{x_i}
\frac{
\left(x_1^{-1/2}x_i^{1/2} p(x_1)\, \sff_{2}^{\mathrm{con}}(x_1,x_j)\right)_{\mathrm{irreg}(x_1)}
-\left(p(x_i)\, \sff_{2}^{\mathrm{con}}(x_2,x_3)\right)_{\mathrm{irreg}(x_i)}}{x_1-x_i}
\nonumber\\
&\ \
+ \frac{g_s p_0}{8} x_1^{-1} x_2^{-3/2} x_3^{-3/2}
+ \widetilde{C}_{2}(x_2,x_3)
\nonumber\\
&=
g_s
\left(p(x_1)\left(\sfF_{4}(x_1, x_1, x_2, x_3)
+ 2\sfF_{3}(x_1, x_2, x_3) \sfF_1(x_1)
+ 2\sff_{2}^{\mathrm{con}}(x_1, x_2) \sff_2^{\mathrm{con}}(x_1, x_3)\right)
\right)_{\mathrm{irreg}(x_1)}
\nonumber\\
&\ \
+g_s \sum_{(i,j)=(2,3), (3,2)} \partial_{x_i}
\frac{
\left(p(x_1)\, \sff_{2}^{\mathrm{con}}(x_1,x_j)\right)_{\mathrm{irreg}(x_1)}
-\left(p(x_i)\, \sff_{2}^{\mathrm{con}}(x_2,x_3)
\right)_{\mathrm{irreg}(x_i)}}{x_1-x_i}
\nonumber\\
&\ \
+g_s \sum_{(i,j)=(2,3), (3,2)} \partial_{x_i}
\frac{
\left(\left(x_1^{-1/2}x_i^{1/2}-1\right) p(x_1)\, \sff_{2}^{\mathrm{con}}(x_1,x_j)\right)_{\mathrm{irreg}(x_1)}}{x_1-x_i}
\nonumber\\
&\ \
+ \frac{g_s p_0}{8} x_1^{-1} x_2^{-3/2} x_3^{-3/2}
+ \widetilde{C}_{2}(x_2,x_3)\,,
\label{sd_n3_1_red}
\end{align}
where $\widetilde{C}_{2}(x_2,x_3)$ is a function of $x_2$ and $x_3$.
By \eqref{omega_id}, and%
\footnote{
Equation \eqref{omega_id_2} can be shown by using \eqref{omega_id} as
\begin{align*}
&
2x_1^k \Omega_2(x_1,x_2) \Omega_2(x_1,x_3)
+ \sum_{(i,j)=(2,3), (3,2)} \partial_{x_i}
\frac{x_1^k \Omega_2(x_1,x_j)-x_i^k\Omega_2(x_2,x_3)}{x_1-x_i}
\\
&=\frac12 \partial_{x_2}\partial_{x_3}
\Biggl(\frac{x_1^{k-1}}{(x_1^{1/2}+x_2^{1/2})(x_1^{1/2}+x_3^{1/2})}
\\
&\hspace{6em}
+ \sum_{(i,j)=(2,3), (3,2)}
\frac{1}{x_1-x_i}
\left(\frac{-x_1^k}{x_1^{1/2}(x_1^{1/2}+x_j^{1/2})}
+\frac{x_i^k}{x_i^{1/2}(x_i^{1/2}+x_j^{1/2})}\right)\Biggr)\,,
\end{align*}
and simplifying this equation.
}
\begin{align}
&
\frac{p_0}{8} x_1^{-1} x_2^{-3/2} x_3^{-3/2}
+
\sum_{k=3}^{r} \frac{p_k}{8}
\sum_{\ell=0}^{k-3}
\sum_{\substack{\ell_2, \ell_3 \ge 0 \\ \ell_2+\ell_3=2\ell}}
(-1)^{\ell_2}(\ell_2+1)(\ell_3+1) x_1^{k-3-\ell}
x_2^{(\ell_2-1)/2} x_3^{(\ell_3-1)/2}
\nonumber\\
&=
2p(x_1) \Omega_2(x_1,x_2) \Omega_2(x_1,x_3)
+ \sum_{(i,j)=(2,3), (3,2)} \partial_{x_i}
\frac{p(x_1)\Omega_2(x_1,x_j)-p(x_i)\Omega_2(x_2,x_3)}{x_1-x_i}\,,
\label{omega_id_2}
\end{align}
the equation \eqref{sd_n3_1_red} yields
\begin{align}
0&=
g_s p(x_1)\left(\sfF_{4}(x_1, x_1, x_2, x_3)
+ 2\sfF_{3}(x_1, x_2, x_3) \sfF_1(x_1)
+ 2\sfF_{2}(x_1, x_2) \sfF_2(x_1, x_3)\right)
\nonumber\\
&\ \
+g_s \sum_{(i,j)=(2,3), (3,2)} \partial_{x_i}
\frac{
p(x_1)\, \sfF_{2}(x_1,x_j)-p(x_i)\, \sfF_{2}(x_2,x_3)
}{x_1-x_i}
\nonumber\\
&\ \
-g_s
\left(p(x_1)\left(\sfF_{4}(x_1, x_1, x_2, x_3)
+ 2\sfF_{3}(x_1, x_2, x_3) \sfF_1(x_1)
+ 2\sff_{2}^{\mathrm{con}}(x_1, x_2) \sff_2^{\mathrm{con}}(x_1, x_3)\right)
\right)_{\mathrm{reg}(x_1)}
\nonumber\\
&\ \
-g_s \sum_{(i,j)=(2,3), (3,2)} \partial_{x_i}
\frac{
\left(x_1^{-1/2}x_i^{1/2} p(x_1)\, \sff_{2}^{\mathrm{con}}(x_1,x_j)\right)_{\mathrm{reg}(x_1)}
-\left(p(x_i)\, \sff_{2}^{\mathrm{con}}(x_2,x_3)\right)_{\mathrm{reg}(x_i)}}
{x_1-x_i}
\nonumber\\
&\ \
- g_s \sum_{k=3}^{r} \frac{p_k}{8}
\sum_{\ell=0}^{k-3}
\sum_{\substack{\ell_2, \ell_3 \ge 0 \\ \ell_2+\ell_3=2\ell}}
(-1)^{\ell_2}(\ell_2+1)(\ell_3+1) x_1^{k-3-\ell}
x_2^{(\ell_2-1)/2} x_3^{(\ell_3-1)/2}
+ \widetilde{C}_{2}(x_2,x_3)\,.
\label{sd_n3_2_red}
\end{align}

\subsubsection{Separated SD equation for general \texorpdfstring{$\nn$}{n} and topological recursion}

For $\nn \ge 4$, the connected part of 
the separated SD equation \eqref{key_sp_sd_red} yields
\begin{align}
0&=
g_s p(x_1)\left(\sfF_{\nn+1}(x_1,\bm{x}_{I})
+\sum_{I_1 \cup I_2=I \backslash \{1\}}
\sfF_{|I_1|+1}(x_1, \bm{x}_{I_1}) \sfF_{|I_2|+1}(x_1, \bm{x}_{I_2})\right)
\nonumber\\
&\ \
+g_s \sum_{i=2}^{\nn} \partial_{x_i} 
\frac{p(x_1)\, \sfF_{\nn-1}(\bm{x}_{I \backslash \{i\}})
-p(x_i)\, \sfF_{\nn-1}(\bm{x}_{I \backslash \{1\}})}{x_1-x_i}
\nonumber\\
&\ \
-g_s \left(
p(x_1)\, \sfF_{\nn+1}(x_1,\bm{x}_{I})
+p(x_1) \sum_{I_1 \cup I_2=I \backslash \{1\}}
\widetilde{\sfF}_{|I_1|+1}(x_1, \bm{x}_{I_1}) 
\widetilde{\sfF}_{|I_2|+1}(x_1, \bm{x}_{I_2})
\right)_{\mathrm{reg}(x_1)}
\nonumber\\
&\ \
-g_s \sum_{i=2}^{\nn} \partial_{x_i} 
\frac{\left(x_1^{-1/2} x_i^{1/2} p(x_1)\, \sfF_{\nn-1}(\bm{x}_{I \backslash \{i\}})
\right)_{\mathrm{reg}(x_1)}
-\left(p(x_i)\, \sfF_{\nn-1}(\bm{x}_{I \backslash \{1\}})\right)_{\mathrm{reg}(x_i)}}{x_1-x_i}
\nonumber\\
&\ \
+\widetilde{C}_{\nn-1}(\bm{x}_{I \backslash \{1\}})\,,
\label{sp_sd_con_gen_red}
\end{align}
where 
$\widetilde{C}_{\nn-1}(\bm{x}_{I \backslash \{1\}})$ is a function of 
$\bm{x}_{I \backslash \{1\}}$, and 
$\widetilde{\sfF}_2=\sff_{2}^{\mathrm{con}}$ and 
$\widetilde{\sfF}_{\nn}=\sfF_{\nn}$ for $\nn \ne 2$.
By applying the perturbative expansion \eqref{pert_exp_f_red}, 
the SD equations \eqref{sp_sd_red_1} for $\nn=1$,
\eqref{sd_n2_2_red} for $\nn=2$, 
\eqref{sd_n3_2_red} for $\nn=3$, and 
\eqref{sp_sd_con_gen_red} for $\nn \ge 4$ yield 
perturbative SD equations for $g \ge 0$, $\nn \ge 1$ with $2g-2+\nn \ge 0$ as
\begin{align}
\sfF_{\nn}^{(g)}(\bm{x}_{I})&=
\frac{(-1)}{2\sfF_1^{(0)}(x_1)}
\Biggl[\sfF_{\nn+1}^{(g-1)}(x_1,\bm{x}_{I})
+\mathop{\sum_{g_1+g_2=g}}_{I_1 \cup I_2=I \backslash \{1\}}^{\textrm{no (0,1)}}
\sfF_{|I_1|+1}^{(g_1)}(x_1, \bm{x}_{I_1}) \sfF_{|I_2|+1}^{(g_2)}(x_1, \bm{x}_{I_2})
\nonumber\\
&\hspace{14em}
+\sum_{i=2}^{\nn} 
\frac{\sfF_{\nn-1}^{(g)}(\bm{x}_{I \backslash \{i\}})}{(x_1-x_i)^2}\Biggr]
+\frac{R_{\nn}^{(g)}(\bm{x}_{I})}{2 p(x_1) \sfF_1^{(0)}(x_1)}\,.
\label{sp_sd_con_gen_pert_red}
\end{align}
Here
\begin{align}
&
R_{\nn}^{(g)}(\bm{x}_{I})
\nonumber
\\
&=
\Biggl(p(x_1)\, \sfF_{\nn+1}^{(g-1)}(x_1,\bm{x}_{I})
+p(x_1) \mathop{\sum_{g_1+g_2=g}}_{I_1 \cup I_2=I \backslash \{1\}}
\widetilde{\sfF}_{|I_1|+1}^{(g_1)}(x_1, \bm{x}_{I_1}) 
\widetilde{\sfF}_{|I_2|+1}^{(g_2)}(x_1, \bm{x}_{I_2})
\Biggr)_{\mathrm{reg}(x_1)}
\nonumber
\\
& \ \
+\sum_{i=2}^{\nn} \partial_{x_i} 
\frac{\left(x_1^{-1/2} x_i^{1/2} p(x_1)\, \widetilde{\sfF}_{\nn-1}^{(g)}(\bm{x}_{I \backslash \{i\}})\right)_{\mathrm{reg}(x_1)}+
\left(p(x_i)\, \widetilde{\sfF}_{\nn-1}^{(g)}(\bm{x}_{I \backslash \{1\}})\right)_{\mathrm{irreg}(x_i)}}{x_1-x_i}
\nonumber\\
&\ \
+\delta_{g,0}\delta_{\nn,3}
\sum_{k=3}^{r} \frac{p_k}{8}
\sum_{\ell=0}^{k-3}
\sum_{\substack{\ell_2, \ell_3 \ge 0 \\ \ell_2+\ell_3=2\ell}}
(-1)^{\ell_2}(\ell_2+1)(\ell_3+1) x_1^{k-3-\ell}
x_2^{(\ell_2-1)/2} x_3^{(\ell_3-1)/2}
+C_{\nn-1}^{(g)}(\bm{x}_{I \backslash \{1\}})\,,
\end{align}
where $C_{\nn-1}^{(g)}(\bm{x}_{I \backslash \{1\}})$ is a function of $\bm{x}_{I \backslash \{1\}}$, and
$\widetilde{\sfF}_2^{(0)}=\sff_2^{(0)}$ and 
$\widetilde{\sfF}_{\nn}^{(g)}=\sfF_{\nn}^{(g)}$ for $(g,\nn) \ne (0,2)$.

Consider the $(h+1)$-cut spectral curve 
$y={\sfF}_1^{(0)}(x)=q(x) \sqrt{\sigma(x)}/p(x)$ in \eqref{planar_f_g_red} 
as input of the SD equation \eqref{sp_sd_con_gen_pert_red}. 
By exactly the same argument as Section \ref{subsubsec:tr_w}, by assuming that
\begin{itemize}
\item
the amplitude ${\sfF}_{\nn}^{(g)}(\bm{x}_{I})$ on the left hand side of \eqref{sp_sd_con_gen_pert_red} has no poles in $x_1$ away from the branch cuts 
of the spectral curve,
which lie along $\mathcal{C}:=[\alpha_1, \alpha_2] \cup \cdots \cup 
[\alpha_{2h+1}, \infty]$,
\end{itemize}
the amplitudes are obtained as solutions to the CEO topological recursion as in \eqref{top_rec_c} or \eqref{top_rec_gen}, where the residues are, instead, taken at the branch points $\alpha_k$ ($k=1, 2, \ldots, 2h+1$).

\begin{rem}\label{rem:kernel_expansion_ai}
As mentioned in Remark \ref{rem:kernel_expansion}, the solutions ${\sfF}_{\nn}^{(g)}(\bm{x}_{I})$ obtained via the topological recursion are 
expressed in terms of kernel differentials \eqref{kernel_diff}. 
They are expanded around $x_i=\infty$ as in \eqref{psdw_v},
\begin{align}
{\sfF}_{\nn}^{(g)}(\bm{x}_{I})=
\sum_{\ell_1,\ldots,\ell_{\nn} = 1,3,5,\ldots}
x_1^{-\ell_1/2-1} \cdots x_1^{-\ell_{\nn}/2-1}\, \sff_{\ell_1,\ldots,\ell_{\nn}}^{(g)}\,,
\label{fg_x_exp_ai}
\end{align}
for spectral curves with odd number of branch points,
where the coefficients $\sff_{\ell_1,\ldots,\ell_{\nn}}^{(g)}$ do not depend on $x_1,\ldots,x_{\nn}$.
Here, note that the sums are restricted to odd positive integers.
\end{rem}

\subsection{Examples}\label{subsec:ex_odd}

We exemplify several two-reduced $W^{(3)}$-type Hamiltonians, and their spectral curves obtained from the equation \eqref{sp_curve_determine_red}.
The Hamiltonian \eqref{gen_hamiltonian_ai} contains 
$\mu$ parameters $\kappa_k$ ($k=1, 2, \ldots, \mu$), 
$s+2$ parameters $\tau_k$ ($k=0, 1, \ldots, s+1$) and 
$r-\mu+1$ parameters $p_k$ ($k=\mu, \mu+1, \ldots, r$) (as well as $p_0, p_1$) as input parameters, 
and $\kappa_{\mu+1}=\tau_0^2/p_{\mu}$ in \eqref{nu0_ai} as a constrained parameter.

\subsubsection{\texorpdfstring{$(2,2m-1)$}{(2,2m-1)} minimal continuum DT model}\label{subsubsec:mcdt}

Consider a one-cut spectral curve
\begin{align}
y=q(x) \sqrt{\sigma(x)}\,,
\quad
q(x)= \tau_m\, x^{m-1} + \sum_{k=0}^{m-2} q_k\, x^k\,,
\quad
\sigma(x)=x - \alpha\,.
\label{minimal_c_DT_sp_curve}
\end{align}
In this case, 
\begin{align}
r=\mu=0\,,\ \ 
s=m-1\ (m \ge 1)\,,
\end{align}
and we set
\begin{align}
p_0=1\,,\ \ 
p_1=0\,.
\end{align}
Then, the parameter $\kappa_1$ is determined by \eqref{nu0_ai} as $\kappa_1=\tau_0^2$, and the Hamiltonian \eqref{gen_hamiltonian_ai} yields
\begin{align}
-\mathcal{H}_{\mathrm{MDT}}^{(m)}&=
\frac{g_s}{4} \Psi(4)
+ \frac{g_s}{8} 
\left(\Psi(1)+4g_s^{-1}\tau_0\right)^2\Psi(2)
+2\sum_{k=0}^{m} \tau_k 
\sum_{\ell \ge 1} \ell\, \Psd(\ell+2k-3)\Psi(\ell)
\nonumber\\
&\ \
+ g_s \sum_{\ell, \ell' \ge 1}
\left(\ell+\ell'+4\right) \Psd(\ell)\Psd(\ell')\Psi(\ell+\ell'+4)
+\frac{g_s}{4}
\sum_{\ell, \ell' \ge 1}
\ell \ell'\, \Psd(\ell+\ell'-4) \Psi(\ell) \Psi(\ell')
\nonumber\\
&=
\frac{g_s}{4} \Psi(4)
+ \frac{g_s}{8} \widehat{\Psi}(1)^2 \Psi(2)
+ g_s \sum_{\ell, \ell' \ge 1}
\left(\ell+\ell'+4\right) \Psd(\ell)\Psd(\ell')\Psi(\ell+\ell'+4)
\nonumber\\
&\ \
+\frac{g_s}{4}
\sum_{\ell, \ell' \ge 1}
\ell \ell'\, \Psd(\ell+\ell'-4) \widehat{\Psi}(\ell) \widehat{\Psi}(\ell')
+ \mathcal{H}_0\,,
\label{cont_dt_hamiltonian}
\end{align}
where $\widehat{\Psi}(2\ell)=\Psi(2\ell)$ and
\begin{align}
\widehat{\Psi}(2\ell+1)
=\Psi(2\ell+1) + g_s^{-1}\frac{4\tau_{\ell}}{2\ell+1}\,,\quad
\tau_{\ell}=0\ \textrm{for}\ \ell \ge m+1\,,
\end{align}
are introduced, and $\mathcal{H}_0$ is explicitly expressed in terms of the string creation operators $\Psd(\ell)$.
This Hamiltonian describes 
the $(2,2m-1)$ minimal continuum DT ($m$-th multicritical DT) model 
\cite{Gubser:1993vx,Watabiki:1993ym,Ambjorn:1996ne} (see also footnote \ref{ft:cont_dt} in Section \ref{subsubsec:conti_pureDT}), 
and the spectral curve \eqref{minimal_c_DT_sp_curve} is obtained from the equation \eqref{sp_curve_determine_red}, 
where $q_k$ ($0, \ldots, m-2$) and $\alpha$ are determined as functions of 
the parameters $\tau_{\ell}$ ($\ell=0,1, \ldots, m$).

Here, it is intriguing to consider a specialization of the parameters $\tau_k$ as
\begin{align}
\begin{split}
&
\tau_m=1\,,
\quad
\tau_{m-2\ell+1}=0\,,
\quad
\ell=1, \ldots, \left\lfloor\frac{m+1}{2}\right\rfloor,
\\
&
\tau_{m-2k}=
\frac{\left(2m-1\right)(2m-2k-3)!!}{k!\, (2m-4k-1)!!}
\left(-\frac{\mu}{8}\right)^k,
\quad
k=1, \ldots, \left\lfloor\frac{m}{2}\right\rfloor,
\label{minimal_c_DT_sp}
\end{split}
\end{align}
which leads to the so-called ``conformal background'', with the cosmological constant $\mu$, studied in \cite{Moore:1991ir,Gubser:1993vx}. 
Then it is shown in \cite{FMW2025b} that the polynomial $q(x)$ and the parameter $\alpha$ in \eqref{minimal_c_DT_sp_curve} yield
\begin{align}
\begin{split}
&
\alpha=-\sqrt{\mu}\left(1- \delta_{m,1}\right),
\\
&
q(x)=\frac{(-1)^{m-1}\mu^{(2m-1)/4}}{2^{m-3/2}\sqrt{x+\sqrt{\mu}}}\,
\sin\left(\frac{2m-1}{2}\arccos\left(-\frac{x}{\sqrt{\mu}}\right)\right)
\\
&\hspace{1.85em}=
\frac{\mu^{(2m-1)/4}}{2^{m-3/2}\sqrt{x+\sqrt{\mu}}}\,
T_{2m-1}\biggl(\frac{\sqrt{x+\sqrt{\mu}}}{\sqrt{2}\, \mu^{1/4}}\biggr)\,,
\label{minimal_c_DT_sp_curve_FZZT}
\end{split}
\end{align}
where $T_{n}(x)$ is the Chebyshev polynomial of the first kind defined by $T_n(\cos \theta)=\cos(n\theta)$.
This spectral curve arises from the leading density of eigenvalues in the matrix model dual to the $(2,2m-1)$ minimal string in the semi-classical expansion (see e.g., \cite{Saad:2019lba,Mertens:2020hbs}). 
We also consider an alternative specialization of the parameters $\tau_k$ as
\begin{align}
\tau_0=0\,,
\quad
\tau_k=
\frac{(-4\pi^2)^{k-1}}{(2k-1)!}
\prod_{i=1}^{k-1} \left(1-\left(\frac{2i-1}{2m-1}\right)^2\right),\quad
k=1, \ldots, m\,.
\end{align}
We see that this just changes $x \to x - \sqrt{\mu}$ and the overall normalization of the polynomial $q(x)$ in \eqref{minimal_c_DT_sp_curve_FZZT}, 
and specializes $\sqrt{\mu}=(2m-1)^2/(2^3 \pi^2)$ as \cite{Seiberg:2003nm} (see also \cite{Gregori:2021tvs,Fuji:2023wcx}), 
\begin{align}
\begin{split}
&
\alpha=0\,,
\\
&
q(x)
=\frac{(-1)^{m-1}}{2\pi \sqrt{x}}\,
T_{2m-1}\biggl(\frac{2\pi\sqrt{x}}{2m-1}\biggr)
=\sum_{k=0}^{m-1}\frac{(-4\pi^2 x)^k}{(2k+1)!}
\prod_{i=1}^k \left(1-\left(\frac{2i-1}{2m-1}\right)^2\right).
\label{minimal_c_DT_sp_curve_chebyshev}
\end{split}
\end{align}
As shown in \cite{Saad:2019lba}, 
in the limit $m\to\infty$, the polynomial $q(x)$ in \eqref{minimal_c_DT_sp_curve_chebyshev} yields $q(x)=\frac{1}{2\pi\sqrt{x}}\sin(2\pi\sqrt{x})$ 
which gives the spectral curve arising from the Jackiw-Teitelboim (JT) gravity 
\cite{Stanford:2017thb}.
In the context of hyperbolic geometry, 
the CEO topological recursion for this spectral curve is shown to be the Laplace dual of Mirzakhani's recursion \cite{Mirzakhani:2006fta} for the Weil-Petersson volumes of moduli spaces of hyperbolic bordered Riemann surfaces \cite{Eynard:2007fi}.

\begin{rem}\label{rem:minimal_grav}
When $m=1$ (resp. $m=2$), the $(2,1)$ (resp. $(2,3)$) minimal continuum DT model model describes the topological gravity (resp. pure gravity). 
Actually, when $m=1$, by setting 
$\tau_0=0, \tau_1=1$, we obtain the spectral curve 
$y=\sqrt{x}$, given by \eqref{minimal_c_DT_sp_curve_FZZT} or \eqref{minimal_c_DT_sp_curve_chebyshev} for $m=1$, of the topological gravity. 
And when $m=2$, by setting $\tau_0=-3\mu/8, \tau_1=0, \tau_2=1$, 
the Hamiltonian \eqref{cont_dt_hamiltonian} yields the Hamiltonian $\mathcal{H}_{\mathrm{conti.DT}}^{\mathrm{pure}}$ of the continuum pure DT model in \eqref{cont_pure_dt_ham}, and 
the equations in \eqref{minimal_c_DT_sp_curve_FZZT} yield the spectral curve 
$y=(x -\sqrt{\mu}/2) \sqrt{x + \sqrt{\mu}}$ as given in 
\eqref{cont_DT_sp_curve}.
Mathematically, the amplitudes $\sfF_{\nn}^{(g)}(\bm{x}_{I})$ of the topological gravity (resp. pure gravity) give the $\psi$-class intersection numbers on the moduli space of stable curves of genus $g$ with $n$ marked points 
\cite{Witten:1990hr,Kontsevich:1992ti}
(resp. a WKB solution for the isomonodromy system for the Painlev\'e I equation \cite{Bergere:2013qba,Iwaki:2015xnr}). 
\end{rem}

\subsubsection{A supersymmetric analogue of Section \ref{subsubsec:mcdt}} 

Next, we provide a Hamiltonian for a one-cut spectral curve
\begin{align}
y=\frac{q(x)}{x} \sqrt{\sigma(x)}\,,
\quad
q(x)= \tau_{m}\, x^{m-1} + \sum_{k=0}^{m-2} q_k\, x^k\,,
\quad
\sigma(x)=x - \alpha\,.
\label{super_minimal_c_DT_sp_curve}
\end{align}
In this case, 
\begin{align}
r=1\,,\ \ 
\mu=1\,,\ \ 
s=m-1\ (m \ge 1)\,,
\end{align}
and we set
\begin{align}
p_0=0\,,\ \
p_1=1\,.
\end{align}
Then the parameter $\kappa_2$ in \eqref{nu0_ai} is $\kappa_2=\tau_0^2$, and 
the Hamiltonian \eqref{gen_hamiltonian_ai} yields
\begin{align}
-\mathcal{H}_{\mathrm{SMDT}}^{(m)}&=
4g_s^{-1} \tau_0^2 \Psi(4) + 2g_s^{-1} \kappa_1 \Psi(2) 
+ \frac{g_s}{8} \Psi(2) + \tau_0 \Psi(1)\Psi(2)
\nonumber\\
&\ \
+2\sum_{k=0}^{m} \tau_k
\sum_{\ell \ge 1} \ell\, \Psd(\ell+2k-3)\Psi(\ell)
\nonumber\\
&\ \
+g_s \sum_{\ell, \ell' \ge 1}
\left(\ell+\ell'+2\right)
\Psd(\ell)\Psd(\ell')\Psi(\ell+\ell'+2)
+\frac{g_s}{4} \sum_{\ell, \ell' \ge 1}
\ell \ell'\, 
\Psd(\ell+\ell'-2) \Psi(\ell) \Psi(\ell')
\,.
\label{super_dt_hamiltonian}
\end{align}
The spectral curve \eqref{super_minimal_c_DT_sp_curve} is obtained from 
the equation \eqref{sp_curve_determine_red}, 
where $q_k$ ($0, \ldots, m-2$) and $\alpha$ are determined as functions of $\kappa_1$ and $\tau_{\ell}$ ($\ell=0, \ldots, m$).

Now we consider a specialization of the parameters $\kappa_1$ and $\tau_k$ as
\begin{align}
\kappa_1=\tau_0=0\,,
\quad
\tau_k=
\frac{(-4\pi^2)^{k-1}}{(2k-2)!}
\prod_{i=1}^{k-1} \left(1-\left(\frac{2i-1}{2m-1}\right)^2\right),
\quad
k=1, \ldots, m\,,
\end{align}
such that the polynomial $q(x)$ and the parameter $\alpha$ in \eqref{super_minimal_c_DT_sp_curve} yield
\begin{align}
\begin{split}
&
\alpha=0\,,
\\
&
q(x)
=(-1)^{m-1}\,
U_{2m-2}\biggl(\frac{2\pi\sqrt{x}}{2m-1}\biggr)
=\sum_{k=0}^{m-1}\frac{(-4\pi^2 x)^k}{(2k)!}
\prod_{i=1}^k \left(1-\left(\frac{2i-1}{2m-1}\right)^2\right),
\label{super_minimal_c_DT_sp_curve_chebyshev}
\end{split}
\end{align}
where $U_{n}(x)$ is the Chebyshev polynomial of the second kind defined by 
$U_{n}(\cos \theta)\sin \theta=\sin((n+1) \theta)$.
This specialization is discussed in \cite{Fuji:2023wcx} as a supersymmetric analogue%
\footnote{
This supersymmetric analogue model is expected to be related to the $(2,4m-4)$ minimal superstring \cite{Seiberg:2003nm}.
}
of the spectral curve \eqref{minimal_c_DT_sp_curve_chebyshev}.
In particular, for $m=1$ on the one hand, 
we obtain the spectral curve $y=\sqrt{x}/x$, 
with $q(x)=1$, known as the Bessel curve \cite{Do:2016odu} which is 
associated to the Br\'ezin-Gross-Witten tau function \cite{Norbury:2017eih} of the KdV hierarchy. 
On the other hand, in the limit $m\to\infty$, 
we obtain the spectral curve with $q(x)=\cos(2\pi\sqrt{x})$, 
and the CEO topological recursion for this spectral curve is shown to be the Laplace dual of Stanford-Witten's recursion for the super Weil-Petersson volumes of moduli spaces of hyperbolic bordered super Riemann surfaces \cite{Stanford:2019vob,Norbury:2020vyi}.

\subsubsection{4D \texorpdfstring{$\mathcal{N}=2$}{N=2} pure \texorpdfstring{$SU(2)$}{SU(2)} gauge theory}
\label{subsubsec:pure_sw}

Finally, let us provide a Hamiltonian for the two-cut spectral curve in \cite{Gaiotto:2009hg,Gaiotto:2009ma},
\begin{align}
y=\frac{\Lambda}{x^2} \sqrt{\sigma(x)}\,,
\quad
\sigma(x)=x^3 + \frac{u}{\Lambda^2} x^2 + x\,,
\label{pure_su2_sw}
\end{align}
which describes
the Seiberg-Witten curve in 4D $\mathcal{N}=2$ pure $SU(2)$ gauge theory, where $\Lambda$ is the dynamical scale parameter and $u$ parametrizes the Coulomb branch.
In this case, 
\begin{align}
r=\mu=2\,,\ \ 
s=1\,,
\end{align}
and we set
\begin{align}
\kappa_1 = \kappa_2 = 0\,,\ \
\tau_0 = 0\,,\ \
\tau_1 = \tau_2 = \Lambda\,,\ \
p_0=p_1=0\,,\ \
p_2=1\,.
\end{align}
Then, the parameter $\kappa_1$ is determined in \eqref{nu0_ai} as $\kappa_3=0$, and the Hamiltonian \eqref{gen_hamiltonian_ai} is
\begin{align}
-\mathcal{H}_{\mathrm{pure}}^{SU(2)}&=
2\Lambda \sum_{k=1,2}
\sum_{\ell \ge 1} \ell\, \Psd(\ell+2k-3)\Psi(\ell)
\nonumber\\
&\ \
+g_s \sum_{\ell, \ell' \ge 1}
\left(\ell+\ell'\right) \Psd(\ell)\Psd(\ell')\Psi(\ell+\ell')
+\frac{g_s}{4} \sum_{\ell, \ell' \ge 1}
\ell \ell'\, \Psd(\ell+\ell') \Psi(\ell) \Psi(\ell')\,.
\label{pure_sw_hamiltonian}
\end{align}
The equation \eqref{sp_curve_determine_red} which determines a spectral curve is
\begin{align}
\Lambda^2 \sigma(x)&=
\Lambda^2 x^3 + 2 \Lambda \left(f + \Lambda\right)x^2
+\Lambda^2 x\,,
\end{align}
where
\begin{align}
f:=\lim_{g_s \to 0} g_s \lim_{T \to \infty} \left\langle\mathrm{vac}\Big| \e^{-T \mathcal{H}_{\mathrm{pure}}^{SU(2)}}\,
\Psd(1) \Big|\mathrm{vac}\right\rangle.
\end{align}
As a result, the spectral curve \eqref{pure_su2_sw} is obtained, 
where the parameter $u = 2\Lambda \left(f + \Lambda\right)$ is determined by the $A$-period
\begin{align}
a = \oint_{\mathcal{A}} y\, dx\,.
\end{align}
As noted in Remark \ref{rem:nf4_sw} of Section \ref{subsubsec:nf4_sw}, 
the amplitudes $\sfF_{\nn}^{(g)}(\bm{x}_{I})$, defined by 
the Hamiltonian \eqref{pure_sw_hamiltonian}, 
encode the perturbative expansion of the instanton partition function with $\nn$ half-BPS simple-type surface defects, 
in the self-dual $\Omega$-background with $g_s=\hbar$ \cite{Alday:2009fs,Dimofte:2010tz,Kozcaz:2010af} (see also \cite{Awata:2010bz}).


\acknowledgments{
We would like to thank Hiroaki Kanno, Kento Osuga, and Shintarou Yanagida for valuable comments.
This work was supported by JSPS KAKENHI Grant Numbers JP23K22388 and JP25K07278.
}

\appendix
\section{Spectral curve as a planar SD equation}
\label{app:sp_to_sd}

This appendix presents the detailed computations carried out in the background in order to derive the Hamiltonians in Definitions \ref{def:hamiltonian} and \ref{def:hamiltonian_red}.

\subsection{Even number of branch points}
\label{app:rev_sd}

We here discuss a reverse construction of a ``planar SD equation'' by rewriting the spectral curve \eqref{sp_curve_g} with even number of branch points.
Consulting the derivation of the spectral curve \eqref{sp_curve_basic} of the basic-type discrete DT model from the planar SD equation \eqref{sp_sd_sim_n1_planar}, 
we rewrite the spectral curve \eqref{sp_curve_g} as
\begin{align}
y= f(x) + \frac{\Delta_{\nu}(x)}{p(x)}\,,
\label{sp_rw}
\end{align}
then the functions $f(x)$ and $\Delta_{\nu}(x)$ obey
\begin{align}
\begin{split}
&
0= p(x)\, f(x)^2 + 2 \Delta_{\nu}(x)\, f(x) + \widetilde{K}(x)\,,
\\
&
\widetilde{K}(x):=\frac{1}{p(x)} \left(\Delta_{\nu}(x)^2 - q(x)^2\, \sigma(x)\right).
\label{key_relation}
\end{split}
\end{align}
Here we assume that
\begin{itemize}
\item[\textbf{1.}]
the function $f(x)$ is identified with a disk amplitude $f_1^{(0)}(x)$ and behaves at $x=\infty$ as
\begin{align}
f(x)=O(x^{-2})
\quad (x \to \infty)\,,
\label{disk_id}
\end{align}
\item[\textbf{2.}]
$\Delta_{\nu}(x)$ is a Laurent polynomial in $x$ as 
\begin{align}
\Delta_{\nu}(x)=\sum_{k=-\nu}^{s+1}\tau_{k+1}\, x^k\,,
\quad \nu \ge 0\,,
\label{delta_def}
\end{align}
where we fix $\nu=1$ in Section \ref{sec:ham_tr} as in \eqref{branch_condition_c} 
and define $\Delta(x)=\Delta_{1}(x)$,
\item[\textbf{3.}]
$\widetilde{K}(x)$ is a Laurent polynomial in $x$ expanded as
\begin{align}
\widetilde{K}(x) = \frac{1}{p(x)} \left(\Delta_{\nu}(x)^2 - q(x)^2\, \sigma(x)\right)
=\sum_{k=-\mu-2\nu}^{2s-r+1} \wkp_k\, x^k\,,
\label{k_laurent}
\end{align}
where we require $\widetilde{K}(x) \neq 0$ so that the spectral curve has branch cuts, and
\begin{align}
2s + 2\nu + 1 \ge r - \mu \ge 0\,,
\label{condi_num}
\end{align}
is imposed,
\item[\textbf{4.}]
(parameter ansatz for reverse construction): 
$s+h+3$ parameters $\alpha_k$ ($k=1, \ldots, 2h+2$) in $\sigma(x)$ and 
$q_k$ ($k=0, 1, \ldots, s-h$) in $q(x)$ of the spectral curve are determined by 
the parameters $p_k$ ($k=\mu, \mu+1, \ldots, r$) in $p(x)$, 
$\tau_k$ ($k=-\nu+1, -\nu+2, \ldots, s+2$) in $\Delta_{\nu}(x)$, 
$\wkp_k$ in \eqref{k_laurent}, and $h$ $A$-periods
\begin{align}
P_i=
\frac{1}{2 \pi \mathsf{i}} \oint_{[\alpha_{2i-1}, \alpha_{2i}]} 
M(x)\sqrt{\sigma(x)}\, dx\,,
\qquad
i=1,2,\ldots, h\,.
\label{periods_c_app}
\end{align}
\end{itemize}
Assumption \textbf{3} provides $2s+2\nu+3$ conditions for the coefficients of $x^{k}$ $(k=0,1,\ldots,2s+2\nu+2)$ in
\begin{align}
\begin{split}
&
\left(\sum_{k=0}^{s+\nu+1}\tau_{k-\nu+1}\, x^{k}\right)^2
-x^{2\nu}
\left(\sum_{k=0}^{s-h} q_k\, x^{k}\right)^2
\prod_{k=1}^{2h+2}(x - \alpha_k)
\\
&=
\left(\sum_{k=0}^{r-\mu} p_{k+\mu}\, x^{k}\right)
\left(\sum_{k=0}^{2s-r+\mu+2\nu+1} \wkp_{k-\mu-2\nu}\, x^{k}\right).
\label{condition_coeff}
\end{split}
\end{align}
In particular, these conditions determine $2\nu$ parameters $\wkp_k$ 
($k=-\mu-2\nu, -\mu-2\nu+1, \ldots, -\mu-1$) in terms of the parameters 
$p_k$ ($k=\mu, \mu+1, \ldots, r$) in $p(x)$ and 
$\tau_k$ ($k=-\nu+1, -\nu+2, \ldots, s+2$) in $\Delta_{\nu}(x)$:
\begin{align}
\wkp_k=\kappa_{-k}
:=\kappa_{-k}(p_{\mu}, \ldots, p_r, \tau_{-\nu+1}, \ldots, \tau_{s+2})\,,
\label{determined_kp}
\end{align}
by the conditions arising from the identity in $x$:
\begin{align}
\sum_{\substack{k,\ell=1 \\ k+\ell \le 2\nu+1}}^{s+\nu+2}
\tau_{k-\nu}\tau_{\ell-\nu}\, x^{k+\ell}
=
\mathop{\sum_{k=1}^{r-\mu+1}\ \sum_{\ell=1}^{2\nu}}_{k+\ell \le 2\nu+1}
p_{k+\mu-1} \kappa_{\mu+2\nu-\ell+1}\, x^{k+\ell}\,.
\label{condition_coeff_c}
\end{align}
The remaining $2s+3$ conditions in \eqref{condition_coeff} and $h$ $A$-periods
\eqref{periods_c_app} determine the $s+h+3$ parameters $\alpha_k$ ($k=1, \ldots, 2h+2$) and $q_k$ ($k=0, 1, \ldots, s-h$), and in particular $q_{s-h}=\tau_{s+2}$ is determined.
The number of the remaining parameters 
$\wkp_k$ ($k=-\mu, -\mu+1, \ldots, 2s-r+1$) is $2s-r+\mu+2$ and 
the number of the remaining conditions is $s$, and we further assume the following.
\begin{itemize}
\item[\textbf{5.}]
The $s$ remaining conditions determine 
$\wkp_k$ ($k=0, 1, \ldots, s-1$), and they give 
the regular part of $-2 \Delta_{\nu}(x) f(x)$ as
\begin{align}
\sum_{k=0}^{s-1} \wkp_{k}\, x^k
=
-\left(2 \Delta_{\nu}(x)\,  f(x)\right)_{\mathrm{reg}(x)}
+ \sum_{k=0}^{s-1} \gamma_k\, x^k\,,
\label{kappa_assume}
\end{align}
where $\mathrm{reg}(x)$ denotes the regular part of the Laurent series of $x$ at $x=0$, and 
$\gamma_k$ are constants determined below by the asymptotic condition \eqref{disk_id} stated in Assumption \textbf{1}. 
Here a condition $2s-r+1 \ge s-1$ for the number of parameters $\wkp_{k}$, i.e.,
\begin{align}
s+2 \ge r\,,
\label{condition_param}
\end{align}
is imposed.%
\footnote{The condition \eqref{condition_param} ensures that $2s-r+\mu+2$ (the number of the remaining $\wkp_k$) 
is greater than or equal to $s$ (the number of the remaining conditions) 
so as not to be an over-constrained system.}
Then, we see that, instead of the condition $2s-r+\mu+2\nu+1\ge 0$ in \eqref{condi_num}, it is enough to impose
\begin{align}
s \ge 1\ \ \ \ \textrm{or}\ \ \ \
\mu+2\nu \ge 1\,.
\label{branch_condition_c_gen}
\end{align}
\end{itemize}
By Assumptions \textbf{3} and \textbf{5}, we have
\begin{align}
\widetilde{K}(x) =-\left(2 \Delta_{\nu}(x)\,  f(x)\right)_{\mathrm{reg}(x)}
+ \sum_{k=0}^{s-1} \gamma_k\, x^k 
+ \sum_{k=s}^{2s-r+1} \wkp_k\, x^{k}
+ \sum_{k=1}^{\mu+2\nu} \kappa_k\, x^{-k}\,,
\label{k_eq}
\end{align}
where $\kappa_k=\wkp_{-k}$ ($k=\mu+1, \mu+2, \ldots, \mu+2\nu$) are parameters determined in \eqref{determined_kp}, and
$\kappa_k:=\wkp_{-k}$ ($k=1, 2, \ldots, \mu$) are considered as independent parameters. 
By the asymptotic condition $f(x)=O(x^{-2})$ as $x \to \infty$ in \eqref{disk_id}, 
the equation \eqref{key_relation}:
\begin{align}
\begin{split}
0&= p(x)\, f(x)^2 + 
\left(2 \Delta_{\nu}(x)\, f(x) -\left(2 \Delta_{\nu}(x)\,  f(x)\right)_{\mathrm{reg}(x)}\right)
\\
&\ \
+ \sum_{k=0}^{s-1} \gamma_k\, x^k 
+ \sum_{k=s}^{2s-r+1} \wkp_k\, x^{k}
+ \sum_{k=1}^{\mu+2\nu} \kappa_k\, x^{-k}\,,
\end{split}
\end{align}
implies that $\gamma_k=0$ and $\wkp_k=0$ for $k \ge r-3$,
and by the condition \eqref{condition_param} 
all the parameters $\wkp_k$ ($k=s, \ldots, 2s-r+1$) in \eqref{k_eq} are vanished, and 
\begin{align}
\sum_{k=0}^{r-4} \gamma_k\, x^k =
-\left(p(x)\, f(x)^2\right)_{\mathrm{reg}(x)}\,.
\label{k_eq2}
\end{align} 
As a result, we have
\begin{align}
\widetilde{K}(x) =
-\left(2 \Delta_{\nu}(x)\,  f(x) + p(x)\, f(x)^2\right)_{\mathrm{reg}(x)}
+ K_{-,\nu}(x)\,,
\label{k_eq3}
\end{align}
where 
\begin{align}
K_{-,\nu}(x)=\sum_{k=1}^{\mu+2\nu} \kappa_k\, x^{-k}\,.
\label{k_input_2_nu}
\end{align} 
Then, the equation \eqref{key_relation} yields a ``planar SD equation'' for the function $f(x)$:
\begin{align}
0= 
\left(p(x)\, f(x)^2 + 2 \Delta_{\nu}(x)\, f(x)\right)_{\mathrm{irreg}(x)}
+ K_{-,\nu}(x)\,,
\label{planar_sd}
\end{align}
where $F(x)_{\mathrm{irreg}(x)}:=F(x)-F(x)_{\mathrm{reg}(x)}$ is defined for a function $F(x)$ of $x$. 
This generalizes the planar SD equation \eqref{sp_sd_sim_n1_planar} of the basic-type discrete DT model. 
Note that, in Section \ref{sec:ham_tr}, we fix $\nu=1$ as in \eqref{branch_condition_c} 
and define $K_{-}(x)=K_{-,1}(x)$.

Similar to the discussion in Section \ref{subsubsec:discrete_pureDT} or \ref{subsec:n1_sd}, 
we find that the planar SD equation \eqref{planar_sd} for 
the disk amplitude $f(x)=f_{1}^{(0)}(x)$ is derived from a $W^{(3)}$-type Hamiltonian $\mathcal{H}_{\nu}$:
\begin{align}
\begin{split}
-\mathcal{H}_{\nu}&=
g_s^{-1}
\sum_{k=1}^{\mu+2\nu} k \, \kappa_{k}\, \Psi(k)
+
\sum_{k=0}^{\nu-1} \tau_{-k}
\sum_{\ell=1}^{k+1} \ell \left(k+2-\ell\right)
\Psi(\ell) \Psi(k+2-\ell)
\\
&\ \
+2\sum_{k=-\nu+1}^{s+2} \tau_k
\sum_{\ell \ge 1}
\ell\, \Psd(\ell+k-2)\Psi(\ell)
\\
&\ \
+g_s \sum_{k=\mu}^{r} p_k
\sum_{\ell, \ell' \ge 1}
\left(\ell+\ell'-k+2\right)
\Psd(\ell)\Psd(\ell')\Psi(\ell+\ell'-k+2)
\\
&\ \
+g_s \sum_{k=\mu}^{r} p_k
\sum_{\ell, \ell' \ge 1}
\ell \ell'\, 
\Psd(\ell+\ell'+k-2) \Psi(\ell) \Psi(\ell')\,,
\label{gen_hamiltonian_nu}
\end{split}
\end{align}
where $g_s$ is the string coupling constant, and the conditions in \eqref{condition_coeff_c}, as well as those in  
\eqref{condition_param} and \eqref{branch_condition_c_gen}, are imposed. 
Here we adopt the parameters $p_k$ as coupling constants of three-string interaction terms, as motivated by the example in Section \ref{subsubsec:discrete_pureDT}.
In Section \ref{sec:ham_tr}, we fix $\nu=1$ (see Remark \ref{rem:nu} in the following) for simplicity. 
Following the discussion in Section \ref{subsec:tr}, 
it is straightforward to see that, instead of \eqref{free_en_def} with $\nu=1$, introducing
\begin{align}
F_1(x) = f_1(x) + g_s^{-1} \frac{\Delta_{\nu}(x)}{p(x)}\,,
\end{align}
the CEO topological recursion gives the perturbative amplitudes defined by the Hamiltonian $\mathcal{H}_{\nu}$, where the $\nu$ dependence only appears in the disk amplitude.

\begin{rem}\label{rem:nu}
For the condition \eqref{branch_condition_c_gen}, it is further 
enough to fix $\nu$ as
\begin{align}
\nu = 1\,,
\label{branch_condition_c}
\end{align}
and the conditions in \eqref{condition_coeff_c} determines $\wkp_{-\mu-1}=\kappa_{\mu+1}$ and $\wkp_{-\mu-2}=\kappa_{\mu+2}$ as
\begin{align}
&
\tau_{0}^2=p_{\mu} \kappa_{\mu+2}\,,\ \
2 \tau_{0} \tau_{1}=p_{\mu} \kappa_{\mu+1} + p_{\mu+1} \kappa_{\mu+2}\,,
\nonumber\\
& \Longleftrightarrow\ \
\kappa_{\mu+1}=
\frac{2 \tau_{0} \tau_{1}}{p_{\mu}}-\frac{\tau_{0}^2p_{\mu+1}}{p_{\mu}^2}\,,\ \
\kappa_{\mu+2}=\frac{\tau_{0}^2}{p_{\mu}}\,.
\label{condition_coeff_cc_1}
\end{align}
In Section \ref{sec:ham_tr}, we fix $\nu = 1$ as above, and only the condition \eqref{condition_param} needs to be imposed.
\end{rem}

\subsection{Odd number of branch points}
\label{app:rev_sd_ai}

We repeat a similar discussion in Appendix \ref{app:rev_sd} for the spectral curve \eqref{sp_curve_g_c} with odd number of branch points.
We rewrite this spectral curve, 
by \eqref{sp_rw}, i.e., $y= f(x) + \sfDel_{\nu}(x)/p(x)$, 
as \eqref{key_relation}:
\begin{align}
\begin{split}
&
0= p(x)\, f(x)^2 + 2 \sfDel_{\nu}(x)\, f(x) + \widetilde{\sfK}(x)\,,
\\
&
\widetilde{\sfK}(x):=\frac{1}{p(x)} \left(\sfDel_{\nu}(x)^2 - q(x)^2\, \sigma(x)\right).
\label{key_relation_c}
\end{split}
\end{align} 
We assume that
\begin{itemize}
\item[\textbf{1.}]
the function $f(x)$ is identified with a disk amplitude $\sff_{1}^{(0)}(x)$ and 
behaves at $x=\infty$ as
\begin{align}
f(x)=O(x^{-3/2})
\quad (x \to \infty)\,,
\label{f_asympt}
\end{align}
instead of \eqref{disk_id},
\item[\textbf{2.}]
$\sfDel_{\nu}(x)$ is expanded as
\begin{align}
\sfDel_{\nu}(x)&=\sum_{k=-\nu}^{s}\tau_{k+1}\, x^{k+1/2}\,,
\quad \nu \ge 0\,,
\label{delta_def_c}
\end{align}
where we fix $\nu=1$ in Section \ref{sec:ham_ai} as in \eqref{nu_odd_fix} 
and define $\sfDel(x)=\sfDel_{1}(x)$,
\item[\textbf{3.}]
$\widetilde{\sfK}(x)$ is expanded as
\begin{align}
\widetilde{\sfK}(x) = \frac{1}{p(x)} \left(\sfDel_{\nu}(x)^2 - q(x)^2\, \sigma(x)\right)
=\sum_{k=-\mu-2\nu+1}^{2s-r} \wkp_k\, x^k\,,
\label{k_laurent_c}
\end{align}
where the condition
\begin{align}
2s + 2\nu - 1 \ge r - \mu \ge 0\,,
\label{condi_num_odd}
\end{align}
is imposed for $\widetilde{\sfK}(x) \neq 0$,
\item[\textbf{4.}]
(parameter ansatz for reverse construction): 
$s+h+2$ parameters $\alpha_k$ ($k=1, \ldots, 2h+1$) in $\sigma(x)$ and 
$q_k$ ($k=0, 1, \ldots, s-h$) in $q(x)$ of the spectral curve are determined by 
the parameters $p_k$ ($k=\mu, \mu+1, \ldots, r$) in $p(x)$,  
$\tau_k$ ($k=-\nu+1, -\nu+2, \ldots, s+1$) in $\sfDel_{\nu}(x)$, 
and $\wkp_k$ in \eqref{k_laurent_c}, and $h$ $A$-periods \eqref{periods_c_app}.
\end{itemize}
Assumption \textbf{3} provides $2s+2\nu+1$ conditions for the coefficients of $x^{k}$ $(k=-1,0,\ldots,2s+2\nu-1)$ in
\begin{align}
\begin{split}
&
\left(\sum_{k=0}^{s+\nu}\tau_{k-\nu+1}\, x^{k+1/2}\right)^2
-x^{2\nu-1}
\left(\sum_{k=0}^{s-h} q_k\, x^{k}\right)^2
\prod_{k=1}^{2h+1}(x - \alpha_k)
\\
&=
\left(\sum_{k=0}^{r-\mu} p_{k+\mu}\, x^{k}\right)
\left(\sum_{k=0}^{2s-r+\mu+2\nu-1} \wkp_{k-\mu-2\nu+1}\, x^{k}\right),
\label{condition_coeff_x}
\end{split}
\end{align}
and determines $2\nu-1$ parameters $\wkp_k$ 
($k=-\mu-2\nu+1, -\mu-2\nu+2, \ldots, -\mu-1$):
\begin{align}
\wkp_k=\kappa_{-k}
:=\kappa_{-k}(p_{\mu}, \ldots, p_r, \tau_{-\nu+1}, \ldots, \tau_{s+1})\,,
\label{determined_kp_c}
\end{align}
by the conditions arising from the identity in $x$:
\begin{align}
\sum_{\substack{k,\ell=1 \\ k+\ell \le 2\nu}}^{s+\nu+1}
\tau_{k-\nu}\tau_{\ell-\nu}\, x^{k+\ell}
=
\mathop{\sum_{k=1}^{r-\mu+1}\ \sum_{\ell=1}^{2\nu-1}}_{k+\ell \le 2\nu}
p_{k+\mu-1} \kappa_{\mu+2\nu-\ell}\, x^{k+\ell}\,.
\label{condition_coeff_xc}
\end{align}
The remaining $2s+2$ conditions in \eqref{condition_coeff_x} and $h$ $A$-periods \eqref{periods_c_app} determine the $s+h+2$ parameters $\alpha_k$ ($k=1, \ldots, 2h+1$) and $q_k$ ($k=0, 1, \ldots, s-h$), and in particular $q_{s-h}=\tau_{s+1}$ is determined.
The number of the remaining parameters 
$\wkp_k$ ($k=-\mu, -\mu+1, \ldots, 2s-r$) is $2s-r+\mu+1$ and 
the number of the remaining conditions is $s$, and 
we make the same assumption as in \eqref{kappa_assume}, i.e.,
\begin{itemize}
\item[\textbf{5.}]
the $s$ remaining conditions determine $\wkp_k$ ($k=0, 1, \ldots, s-1$) as
\begin{align}
\sum_{k=0}^{s-1} \wkp_{k}\, x^k
=
-\left(2 \sfDel_{\nu}(x)\,  f(x)\right)_{\mathrm{reg}(x)}
+ \sum_{k=0}^{s-1} \gamma_k\, x^k,
\end{align}
where $\mathrm{reg}(x)$ denotes the regular part of the Laurent series 
of $x^{1/2}$ at $x=0$, and 
$\gamma_k$ are constants determined by asymptotic condition \eqref{f_asympt} stated in Assumption \textbf{1}.
Here a condition $2s-r \ge s-1$ for the number of parameters $\wkp_{k}$, i.e.,
\begin{align}
s+1 \ge r,
\label{condition_param_c}
\end{align}
is imposed. 
\end{itemize}
By Assumptions \textbf{3} and \textbf{5},
\begin{align}
\widetilde{\sfK}(x) =-\left(2 \sfDel_{\nu}(x)\,  f(x)\right)_{\mathrm{reg}(x)}
+ \sum_{k=0}^{s-1} \gamma_k\, x^k 
+ \sum_{k=s}^{2s-r} \wkp_k\, x^{k}
+ \sum_{k=1}^{\mu+2\nu-1} \kappa_k\, x^{-k}\,,
\label{k_eq_c}
\end{align}
where $\kappa_k=\wkp_{-k}$ ($k=\mu+1, \mu+2, \ldots, \mu+2\nu-1$) are parameters determined in \eqref{determined_kp_c}, and
$\kappa_k=\wkp_{-k}$ ($k=1, 2, \ldots, \mu$) are free parameters. 
By the asymptotic condition $f(x)=O(x^{-3/2})$ as $x \to \infty$ 
in \eqref{f_asympt}, the equation \eqref{key_relation_c}:
\begin{align}
\begin{split}
0&= p(x)\, f(x)^2 + 
\left(2 \sfDel_{\nu}(x)\, f(x) -
\left(2 \sfDel_{\nu}(x)\,  f(x)\right)_{\mathrm{reg}(x)}\right)
\\
&\quad
+ \sum_{k=0}^{s-1} \gamma_k\, x^k 
+ \sum_{k=s}^{2s-r} \wkp_k\, x^{k}
+ \sum_{k=1}^{\mu+2\nu-1} \kappa_k\, x^{-k}\,,
\end{split}
\end{align}
implies that $\gamma_k=0$ and $\wkp_k=0$ for $k \ge r-2$,
and by the condition \eqref{condition_param_c} 
all the parameters $\wkp_k$ ($k=s, \ldots, 2s-r$) in \eqref{k_eq_c} are vanished, and 
\begin{align}
\sum_{k=0}^{r-3} \gamma_k\, x^k =
-\left(p(x)\, f(x)^2\right)_{\mathrm{reg}(x)}\,.
\label{k_eq2_c}
\end{align} 
We then obtain
\begin{align}
\widetilde{\sfK}(x) =
-\left(2 \sfDel_{\nu}(x)\,  f(x) + p(x)\, f(x)^2\right)_{\mathrm{reg}(x)}
+ \sfK_{-,\nu}(x)\,,
\label{k_eq3_c}
\end{align}
where 
\begin{align}
\sfK_{-,\nu}(x)&=\sum_{k=1}^{\mu+2\nu-1} \kappa_k\, x^{-k}\,,
\end{align}
and a ``planar SD equation'', from \eqref{key_relation_c}, with the same form as \eqref{planar_sd}, i.e., 
\begin{align}
0= \left(p(x)\, f(x)^2 + 2 \sfDel_{\nu}(x)\, f(x)\right)_{\mathrm{irreg}(x)}
+ \sfK_{-,\nu}(x)\,.
\label{planar_sd_odd}
\end{align}
In Section \ref{sec:ham_ai}, we fix $\nu=1$ as in \eqref{nu_odd_fix} 
and define $\sfK_{-}(x)=\sfK_{-,1}(x)$.

Similar to the $W^{(3)}$-type Hamiltonian $\mathcal{H}_{\nu}$ in \eqref{gen_hamiltonian_nu}, the planar SD equation \eqref{planar_sd_odd} for the disk amplitude $f(x)=\sff_{1}^{(0)}(x)$ is derived from a two-reduced $W^{(3)}$-type Hamiltonian $\mathcal{H}_{\nu}^{red}$:
\begin{align}
\begin{split}
-\mathcal{H}^{red}_{\nu}&=
2 g_s^{-1}
\sum_{k=1}^{\mu+2\nu-1} k \, \kappa_{k}\, \Psi(2k)
+
\frac{g_s}{8} \left(2p_0 \Psi(4) + p_1 \Psi(2)\right)
\\
&\ \
+ \frac14 \sum_{k=0}^{\nu-1} \tau_{-k}
\sum_{\ell=1}^{2k+2} \ell \left(2k+3-\ell\right)
\Psi(\ell) \Psi(2k+3-\ell)
+ \frac{g_s p_0}{8} \Psi(1)^2 \Psi(2)
\\
&\ \
+2\sum_{k=-\nu+1}^{s+1} \tau_k
\sum_{\ell \ge 1}
\ell\, \Psd(\ell+2k-3)\Psi(\ell)
\\
&\ \
+g_s \sum_{k=\mu}^{r} p_k
\sum_{\ell, \ell' \ge 1}
\left(\ell+\ell'-2k+4\right)
\Psd(\ell)\Psd(\ell')\Psi(\ell+\ell'-2k+4)
\\
&\ \
+\frac{g_s}{4} \sum_{k=\mu}^{r} p_k
\sum_{\ell, \ell' \ge 1}
\ell \ell'\, 
\Psd(\ell+\ell'+2k-4) \Psi(\ell) \Psi(\ell')\,,
\label{gen_hamiltonian_ai_nu}
\end{split}
\end{align}
where $g_s$ is the string coupling constant, and the conditions in \eqref{condition_coeff_xc} are imposed as well as the conditions 
\eqref{condi_num_odd} and \eqref{condition_param_c}.
Here, as indicated by the motivated example in Section \ref{subsubsec:conti_pureDT}, 
the parameters $p_k$ are adopted as coupling constants of three-string interaction terms, 
and two tadpole operators $\Psi(2)$ and $\Psi(4)$, as well as 
a three-string annihilation operator $\Psi(1)^2 \Psi(2)$, 
are also introduced, in comparison with the $W^{(3)}$-type Hamiltonian $\mathcal{H}_{\nu}$ in \eqref{gen_hamiltonian_nu}.
It is, then, shown that the CEO topological recursion gives the perturbative amplitudes defined by the Hamiltonian $\mathcal{H}^{red}_{\nu}$, where the $\nu$ dependence only appears in the disk amplitude.
In Section \ref{sec:ham_ai}, we fix $\nu=1$ (see Remark \ref{rem:nu_ai} in the following) for simplicity.

\begin{rem}\label{rem:nu_ai}
Taking into account of the condition \eqref{condition_param_c}, 
instead of the condition \eqref{condi_num_odd} it is enough to fix $\nu$ as
\begin{align}
\nu = 1\,,
\label{nu_odd_fix}
\end{align}
and $\wkp_{-\mu-1}=\kappa_{\mu+1}$ is determined by \eqref{condition_coeff_xc} as
\begin{align}
\kappa_{\mu+1}=\frac{\tau_0^2}{p_{\mu}}\,.
\label{nu0_ai_app}
\end{align}
In Section \ref{sec:ham_ai}, we fix $\nu = 1$ as above, and only the condition \eqref{condition_param_c} needs to be imposed.
\end{rem}

\section{Decoupling scheme in 4D \texorpdfstring{$\mathcal{N}=2$}{N=2} \texorpdfstring{$SU(2)$}{SU(2)} gauge theories}
\label{app:decoupling}

In this appendix, we present Hamiltonians of 4D $\mathcal{N}=2$ $SU(2)$ gauge theories with $N_f=0,1,2,3,4$ hypermultiplets, from the point of view of decoupling these hypermultiplets as described in Fig. \ref{fig:degen_sw2} \cite{Gaiotto:2009hg,Gaiotto:2009ma,Marshakov:2009gn,Eguchi:2009gf} (see also \cite{Awata:2010bz}).
\begin{figure}[t]
\centering
\includegraphics[width=160mm]{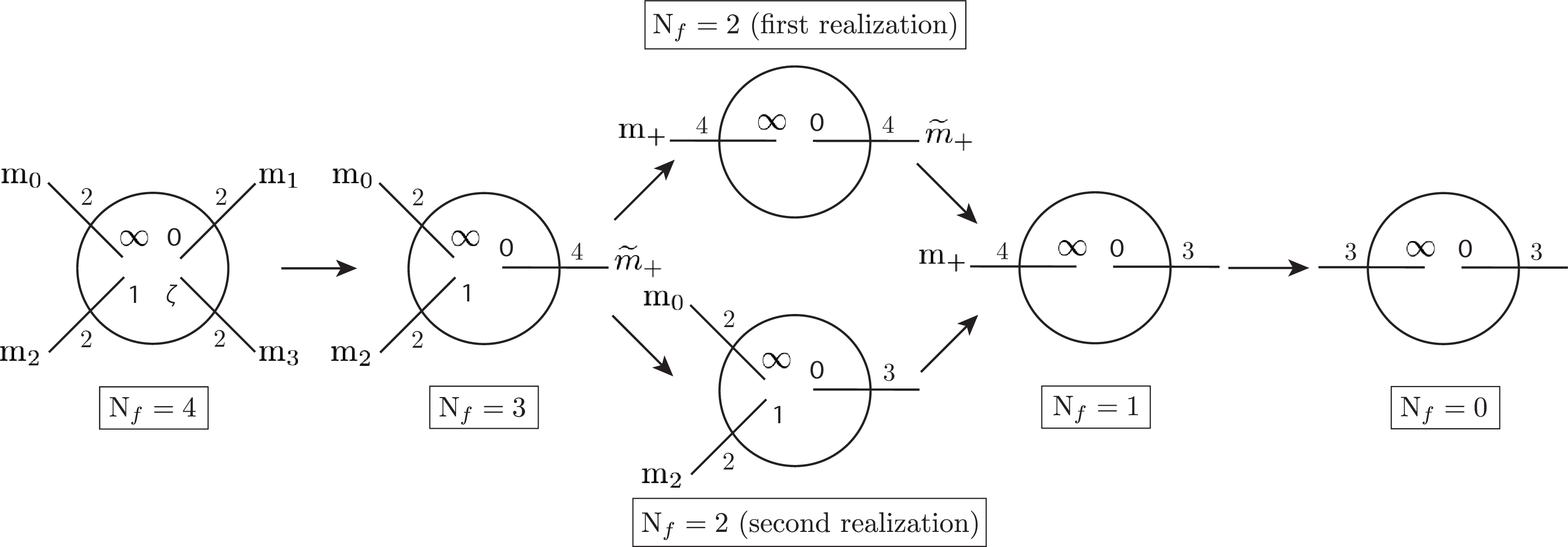}
\caption{Decoupling scheme of hypermultiplets, where the number assigned to each edge represents a degree of singularity of the Seiberg-Witten curves $y^2=\phi(x)$.}
\label{fig:degen_sw2}
\end{figure}

\subsection*{\underline{$N_f=4$}}

When $N_f=4$, the Hamiltonian \eqref{sw_su2_nf4_hamiltonian} is provided in Section \ref{subsubsec:nf4_sw}.
For convenience, we present it again here.

In this case, the Seiberg-Witten curve is given by \eqref{su2_nf4_sw} with 
$r=3$, $\mu=1$\, $s=1$, 
where the mass parameters $m_0, m_1, m_2, m_3$ are, respectively, associated with poles of degree 2 at $x=\infty, 0, 1, \zeta$ as
\begin{align}
\begin{split}
\frac{1}{2\pi \sqrt{-1}} \oint_{x=\infty} ydx &= m_0\,,
\quad
\frac{1}{2\pi \sqrt{-1}}\oint_{x=0} ydx = m_1\,,
\\
\frac{1}{2\pi \sqrt{-1}}\oint_{x=1} ydx &= m_2\,,
\quad
\frac{1}{2\pi \sqrt{-1}}\oint_{x=\zeta} ydx = m_3\,.
\end{split}
\label{pole_nf4}
\end{align}
The parameters for the $W^{(3)}$-type Hamiltonian \eqref{gen_hamiltonian} are taken as (see footnote \ref{ft:sw_red} for the parameter redundancy),
\begin{align}
\begin{split}
p_1&=\zeta\,,\ \ p_2=-1-\zeta\,,\ \ p_3=1\,,
\ \
\kappa_1=
\left(m_{0}+m_{1}+m_{2}+m_{3}\right)
\left(m_{0}-m_{1}+m_{2}+m_{3}\right)\zeta\,,
\\
\tau_{0}&=0\,,
\ \
\tau_{1}=\left(m_{0}+m_{2}+m_{3}\right)\zeta\,,
\ \
\tau_{2}=
-\left(m_{0}+m_{2}\right)
-\left(m_{0}+m_{3}\right)\zeta\,,
\ \
\tau_3=m_0\,.
\label{sw2_p_4}
\end{split}
\end{align}
Then, the Hamiltonian is
\begin{align}
\begin{split}
-\mathcal{H}^{SU(2)}_{N_f=4}&=
g_s^{-1} \kappa_{1}\, \Psi(1)
+2\sum_{k=1}^{3} \tau_k
\sum_{\ell \ge 1}
\ell\, \Psd(\ell+k-2)\Psi(\ell)
\\
&\ \
+g_s \sum_{k=1}^{3} p_k
\sum_{\ell, \ell' \ge 1}
\left(\ell+\ell'-k+2\right)
\Psd(\ell)\Psd(\ell')\Psi(\ell+\ell'-k+2)
\\
&\ \
+g_s \sum_{k=1}^{3} p_k
\sum_{\ell, \ell' \ge 1}
\ell \ell'\, 
\Psd(\ell+\ell'+k-2) \Psi(\ell) \Psi(\ell')\,,
\label{sw2_ham_4}
\end{split}
\end{align}
and the two-cut ansatz for the equation \eqref{sp_curve_determine} introduces the Coulomb branch parameter $U$ by
\begin{align}
\begin{split}
&
2 m_0 \lim_{g_s \to 0} g_s \lim_{T \to \infty} \left\langle\mathrm{vac}\Big| \e^{-T \mathcal{H}_{N_f=4}^{SU(2)}}\,
\Psd(1) \Big|\mathrm{vac}\right\rangle
\\
&=
-\left(\zeta +1\right) U
+\left(m_0+m_2\right)^2
+\left(m_{0}-m_{2}\right) 
\left(m_{0} + m_{2} + 2 m_{3}\right)\zeta\,.
\label{sw2_coulomb_4}
\end{split}
\end{align}
To discuss the decoupling of massive flavors in the following, 
it is useful to introduce
\begin{align}
m_{+} = m_2 + m_0\,,
\quad
m_{-} = m_2 - m_0\,,
\quad
\widetilde{m}_{+} = m_3 + m_1\,,
\quad
\widetilde{m}_{-} = m_3 - m_1\,.
\label{mass_redef}
\end{align}

\subsection*{\underline{$N_f=3$}}

When $N_f=3$, the Seiberg-Witten curve is obtained from the curve \eqref{su2_nf4_sw} for $N_f=4$ by the decoupling limit
\begin{align}
\widetilde{m}_{-} \to \infty\,,
\quad
\zeta \to 0\,,
\quad
\widetilde{m}_{-}\, \zeta=\Lambda_3\,
\ \textrm{(fixed)}\,,
\label{decouple_3}
\end{align}
as \cite{Eguchi:2009gf},
\begin{align}
y^2 = 
\frac{\Lambda_3^{2}}{4x^{4}}
-\frac{\widetilde{m}_{+}\Lambda_3}{x^{3} \left(x-1\right)}
-\frac{U+m_{2}\Lambda_3}{x^{2}\left(x-1\right)}
+\frac{m_{0}^{2}}{x \left(x-1\right)}
+\frac{m_{2}^{2}}{x \left(x-1\right)^{2}}
=:
\frac{m_0^2}{p(x)^2}\, \sigma(x)\,,
\label{sw2_curve_3}
\end{align}
with $r=3$, $\mu=2$, $s=1$, where $p(x)=x^{2}(x-1)$.
The mass parameters $m_0, m_2$ are, respectively, associated with poles of degree 2 at $x=\infty, 1$, and 
the mass parameter $\widetilde{m}_{+}$ is associated with a pole of degree 4 at $x=0$ as
\begin{align}
\frac{1}{2\pi \sqrt{-1}} \oint_{x=\infty} ydx = m_0\,,
\quad
\frac{1}{2\pi \sqrt{-1}}\oint_{x=1} ydx &= m_2\,,
\quad
\frac{1}{2\pi \sqrt{-1}}\oint_{x=0} ydx = \widetilde{m}_{+}\,.
\label{pole_nf3}
\end{align}
By taking the limit \eqref{decouple_3} for the parameters \eqref{sw2_p_4}, 
we here set
\begin{align}
\begin{split}
p_2&=-1\,,\ \ p_3=1\,,
\ \
\kappa_1=
\left(m_{0}+m_{2}+\widetilde{m}_{+}\right)\Lambda_3\,,
\ \
\kappa_2=0\,,
\\
\tau_{0}&=0\,,
\ \
\tau_{1}=\frac12\, \Lambda_3\,,
\ \
\tau_{2}=
-\left(m_{0}+m_{2}\right)
-\frac12\, \Lambda_3\,,
\ \
\tau_3=m_0\,,
\label{sw2_p_3}
\end{split}
\end{align}
for the $W^{(3)}$-type Hamiltonian \eqref{gen_hamiltonian},
and obtain
\begin{align}
\begin{split}
-\mathcal{H}^{SU(2)}_{N_f=3}&=
g_s^{-1} \kappa_{1}\, \Psi(1)
+2\sum_{k=1}^{3} \tau_k
\sum_{\ell \ge 1}
\ell\, \Psd(\ell+k-2)\Psi(\ell)
\\
&\ \
+g_s \sum_{k=2,3} p_k
\sum_{\ell, \ell' \ge 1}
\left(\ell+\ell'-k+2\right)
\Psd(\ell)\Psd(\ell')\Psi(\ell+\ell'-k+2)
\\
&\ \
+g_s \sum_{k=2,3} p_k
\sum_{\ell, \ell' \ge 1}
\ell \ell'\, 
\Psd(\ell+\ell'+k-2) \Psi(\ell) \Psi(\ell')\,.
\label{sw2_ham_3}
\end{split}
\end{align}
By introducing the Coulomb branch parameter $U$ by
\begin{align}
2 m_0 \lim_{g_s \to 0} g_s \lim_{T \to \infty} \left\langle\mathrm{vac}\Big| \e^{-T \mathcal{H}_{N_f=3}^{SU(2)}}\,
\Psd(1) \Big|\mathrm{vac}\right\rangle
=
-U + \left(m_0+m_2\right)^2
+\left(m_{0}-m_{2}\right) \Lambda_3\,,
\label{sw2_coulomb_3}
\end{align}
which is consistent with the decoupling limit of \eqref{sw2_coulomb_4}, 
the two-cut ansatz for the equation \eqref{sp_curve_determine} leads to the Seiberg-Witten curve \eqref{sw2_curve_3}.

\subsection*{\underline{$N_f=2$ (first realization)}}

When $N_f=2$, two different types of Seiberg-Witten curves can be found \cite{Gaiotto:2009hg,Gaiotto:2009ma}. 
The Seiberg-Witten curve of the first realization is obtained from the curve \eqref{sw2_curve_3} for $N_f=3$ by the decoupling limit
\begin{align}
m_{-} \to \infty\,,
\quad
\Lambda_3 \to 0\,,
\quad
m_{-}\, \Lambda_3=\Lambda_2^2\,
\ \textrm{(fixed)}\,,
\label{decouple_2_1}
\end{align}
and scalings
\begin{align}
x \to \frac{\Lambda_3}{\Lambda_2}\, x\,,
\quad
y \to \frac{\Lambda_2}{\Lambda_3}\, y\,,
\label{sw_scaling_nf2_1}
\end{align}
as \cite{Gaiotto:2009hg,Gaiotto:2009ma,Eguchi:2009gf},
\begin{align}
y^2 = 
\frac{\Lambda_2^{2}}{4x^{4}}+\frac{\widetilde{m}_{+}\Lambda_2}{x^{3}}
+\frac{u}{x^{2}}+\frac{m_{+}\Lambda_2}{x}+\frac{\Lambda_2^{2}}{4}
=:
\frac{\Lambda_2^2}{4x^4}\, \sigma(x)
\,,
\label{sw2_curve_2_1}
\end{align}
where $u=U+\Lambda_2^2/2$ is defined.
The mass parameters $m_{+}, \widetilde{m}_{+}$ are, respectively, associated with poles of degree 4 at $x=\infty, 0$ as
\begin{align}
\frac{1}{2\pi \sqrt{-1}} \oint_{x=\infty} ydx = m_{+}\,,
\quad
\frac{1}{2\pi \sqrt{-1}}\oint_{x=0} ydx = \widetilde{m}_{+}\,.
\label{pole_nf2_1}
\end{align}
In this case, $r=\mu=2$, $s=1$, and for the $W^{(3)}$-type Hamiltonian \eqref{gen_hamiltonian}, we set
\begin{align}
\begin{split}
p_2&=-1\,,
\ \
\kappa_1=
\left(m_{+}+\widetilde{m}_{+}\right)\Lambda_2\,,
\ \
\kappa_2=0\,,
\\
\tau_{0}&=0\,,
\ \
\tau_{1}=\frac12\, \Lambda_2\,,
\ \
\tau_{2}=-m_{+}\,,
\ \
\tau_3=-\frac12\, \Lambda_2\,,
\label{sw2_p_2_1}
\end{split}
\end{align}
which are obtained from \eqref{sw2_p_3} by the limit \eqref{decouple_2_1} and scalings%
\footnote{
These scaling relations are associated with the scalings \eqref{sw_scaling_nf2_1}. 
From the definition \eqref{n_pt_amp} of the one-point amplitude $f_{1}(x)$, 
the string creation operators $\Psd(\ell)$ are scaled as 
$\Psd(\ell) \to (\Lambda_3/\Lambda_2)^{\ell+1} \Psd(\ell)$, 
and the commutation relations \eqref{string_com_rel} imply that 
the string annihilation operators $\Psi(\ell)$ are scaled as 
$\Psi(\ell) \to (\Lambda_2/\Lambda_3)^{\ell+1} \Psi(\ell)$.
Furthermore, by $f_{1}(x) = g_s^{-1} f_{1}^{(0)}(x) + O(g_s^{0})$ $(g_s \to 0)$, 
the scaling of $y$ implies that the string coupling constant $g_s$ is scaled as 
$g_s \to (\Lambda_2/\Lambda_3) g_s$. 
As a result, the scaling relations \eqref{ham_scaling_nf2_1} are obtained from the Hamiltonian \eqref{gen_hamiltonian}.
}
\begin{align}
\kappa_1 \to \frac{\Lambda_3}{\Lambda_2}\, \kappa_1\,,
\quad
\tau_k \to \left(\frac{\Lambda_2}{\Lambda_3}\right)^{k-2} \tau_k\,.
\label{ham_scaling_nf2_1}
\end{align}
Then, the Hamiltonian is
\begin{align}
\begin{split}
-\mathcal{H}^{SU(2)}_{N_f=2^{(1)}}&=
g_s^{-1} \kappa_{1}\, \Psi(1)
+2\sum_{k=1}^{3} \tau_k
\sum_{\ell \ge 1}
\ell\, \Psd(\ell+k-2)\Psi(\ell)
\\
&\ \
-g_s
\sum_{\ell, \ell' \ge 1}
\left(
\left(\ell+\ell'\right)
\Psd(\ell)\Psd(\ell')\Psi(\ell+\ell')
+
\ell \ell'\, 
\Psd(\ell+\ell') \Psi(\ell) \Psi(\ell')\right),
\label{sw2_ham_2_1}
\end{split}
\end{align}
and the two-cut ansatz for the equation \eqref{sp_curve_determine} provides 
the Seiberg-Witten curve \eqref{sw2_curve_2_1}, 
where the Coulomb branch parameter $u$ is introduced by
\begin{align}
\Lambda_2 \lim_{g_s \to 0} g_s \lim_{T \to \infty} \left\langle\mathrm{vac}\Big| \e^{-T \mathcal{H}_{N_f=2^{(1)}}^{SU(2)}}\,
\Psd(1) \Big|\mathrm{vac}\right\rangle
=
u - m_{+}^2
+\frac12\, \Lambda_2\,.
\label{sw2_coulomb_2_1}
\end{align}

\subsection*{\underline{$N_f=2$ (second realization)}}

The Seiberg-Witten curve of the second realization is obtained from the curve \eqref{sw2_curve_3} for $N_f=3$ by the decoupling limit
\begin{align}
\widetilde{m}_{+} \to \infty\,,
\quad
\Lambda_3 \to 0\,,
\quad
\widetilde{m}_{+}\, \Lambda_3=\Lambda_2^2\,
\ \textrm{(fixed)}\,,
\label{decouple_2_2}
\end{align}
as \cite{Gaiotto:2009hg,Gaiotto:2009ma},
\begin{align}
y^2 = 
-\frac{\Lambda_2^2}{x^{3} \left(x-1\right)}
-\frac{U}{x^{2}\left(x-1\right)}
+\frac{m_{0}^{2}}{x \left(x-1\right)}
+\frac{m_{2}^{2}}{x \left(x-1\right)^{2}}
=:
\frac{m_0^2}{p(x)^2}\, \sigma(x)\,,
\label{sw2_curve_2_2}
\end{align}
where $p(x)=x^2(x-1)$. 
The mass parameters $m_0, m_2$ are, respectively, associated with poles of degree 2 at $x=\infty, 1$ as
\begin{align}
\frac{1}{2\pi \sqrt{-1}} \oint_{x=\infty} ydx = m_0\,,
\quad
\frac{1}{2\pi \sqrt{-1}}\oint_{x=1} ydx &= m_2\,.
\label{pole_nf2_2}
\end{align}
In this case, $r=3$, $\mu=2$, $s=1$, and for the $W^{(3)}$-type Hamiltonian \eqref{gen_hamiltonian}, we set
\begin{align}
p_2=-1\,,\ \ p_3=1\,,
\ \
\kappa_1=\Lambda_2^2\,,
\ \
\kappa_2=0\,,
\ \
\tau_{0}=\tau_{1}=0\,,
\ \
\tau_{2}=-m_{0}-m_{2}\,,
\ \
\tau_3=m_0\,,
\label{sw2_p_2_2}
\end{align}
which are obtained from \eqref{sw2_p_3} by the limit \eqref{decouple_2_2}.
Then, the Hamiltonian is
\begin{align}
\begin{split}
-\mathcal{H}^{SU(2)}_{N_f=2^{(2)}}&=
g_s^{-1} \Lambda_2^2\, \Psi(1)
+2\sum_{k=2,3} \tau_k
\sum_{\ell \ge 1}
\ell\, \Psd(\ell+k-2)\Psi(\ell)
\\
&\ \
+g_s \sum_{k=2,3} p_k
\sum_{\ell, \ell' \ge 1}
\left(\ell+\ell'-k+2\right)
\Psd(\ell)\Psd(\ell')\Psi(\ell+\ell'-k+2)
\\
&\ \
+g_s \sum_{k=2,3} p_k
\sum_{\ell, \ell' \ge 1}
\ell \ell'\, 
\Psd(\ell+\ell'+k-2) \Psi(\ell) \Psi(\ell')\,,
\label{sw2_ham_2_2}
\end{split}
\end{align}
and the two-cut solution of the planar SD equation is given by 
the Seiberg-Witten curve \eqref{sw2_curve_2_1}, 
where the Coulomb branch parameter $U$ is introduced by
\begin{align}
2 m_0 \lim_{g_s \to 0} g_s \lim_{T \to \infty} \left\langle\mathrm{vac}\Big| \e^{-T \mathcal{H}_{N_f=2^{(2)}}^{SU(2)}}\,
\Psd(1) \Big|\mathrm{vac}\right\rangle
=
-U + \left(m_0+m_2\right)^2,
\label{sw2_coulomb_2_2}
\end{align}
which is understood to be reduced from \eqref{sw2_coulomb_3} by the limit \eqref{decouple_2_2}.

\subsection*{\underline{$N_f=1$}}

When $N_f=1$, the Seiberg-Witten curve is obtained from the curves \eqref{sw2_curve_2_1} or \eqref{sw2_curve_2_2} for $N_f=2$.
Here, we take the curve \eqref{sw2_curve_2_1} of the first realization, and 
by considering the decoupling limit
\begin{align}
\widetilde{m}_{+} \to \infty\,,
\quad
\Lambda_2 \to 0\,,
\quad
\widetilde{m}_{+}\, \Lambda_2^2=\Lambda_1^3\,
\ \textrm{(fixed)}\,,
\label{decouple_1}
\end{align}
and scalings
\begin{align}
x \to \frac{\Lambda_1}{\Lambda_2}\, x\,,
\quad
y \to \frac{\Lambda_2}{\Lambda_1}\, y\,,
\end{align}
the Seiberg-Witten curve of $N_f=1$ \cite{Gaiotto:2009hg,Gaiotto:2009ma},
\begin{align}
y^2=
\frac{\Lambda_1^{2}}{x^{3}}+\frac{u}{x^{2}}+\frac{m_{+} \Lambda_1}{x}+
\frac{\Lambda_1^{2}}{4}
=
\frac{\Lambda_1^2}{4x^4}
\left(
x^4 + \frac{4m_{+}}{\Lambda_1}x^3
+ \frac{4u}{\Lambda_1^{2}}x^2 + 4x
\right),
\label{sw2_curve_1}
\end{align}
with $r=\mu=2$, $s=1$, is obtained. 
The mass parameter $m_{+}$ is associated with a pole of degree 4 at $x=\infty$ as
\begin{align}
\frac{1}{2\pi \sqrt{-1}} \oint_{x=\infty} ydx = m_{+}\,.
\label{pole_nf1}
\end{align}
In this case, for the $W^{(3)}$-type Hamiltonian \eqref{gen_hamiltonian}, we set
\begin{align}
p_2=-1\,,
\ \
\kappa_1=\Lambda_1^2\,,
\ \
\kappa_2=0\,,
\ \
\tau_{0}=\tau_{1}=0\,,
\ \
\tau_{2}=-m_{+}\,,
\ \
\tau_3=-\frac12\, \Lambda_1\,,
\label{sw2_p_1}
\end{align}
which are obtained from \eqref{sw2_p_2_1} by the limit \eqref{decouple_1} and scalings
\begin{align}
\kappa_1 \to \frac{\Lambda_1}{\Lambda_2}\, \kappa_1\,,
\quad
\tau_k \to \left(\frac{\Lambda_2}{\Lambda_1}\right)^{k-2} \tau_k\,.
\label{ham_scaling_nf1}
\end{align}
Then, the Hamiltonian is%
\footnote{
It is remarkable that the Hamiltonian \eqref{sw2_ham_1} at $\Lambda_1=0$ agrees with the Calogero-Sutherland Hamiltonian at $\beta=1$ in the collective coordinate representation \cite{Awata:1994xd}. 
In this case, the Seiberg-Witten curve \eqref{sw2_curve_1} reduces to 
$y^2=u/x^2$, which does not have any branch points. 
Here we note that the Hamiltonian formalism proposed in this paper is also applicable to such cases with no branch points.
}
\begin{align}
\begin{split}
-\mathcal{H}^{SU(2)}_{N_f=1}&=
g_s^{-1} \Lambda_{1}^2\, \Psi(1)
+2\sum_{k=2,3} \tau_k
\sum_{\ell \ge 1}
\ell\, \Psd(\ell+k-2)\Psi(\ell)
\\
&\ \
-g_s
\sum_{\ell, \ell' \ge 1}
\left(
\left(\ell+\ell'\right)
\Psd(\ell)\Psd(\ell')\Psi(\ell+\ell')
+
\ell \ell'\, 
\Psd(\ell+\ell') \Psi(\ell) \Psi(\ell')\right),
\label{sw2_ham_1}
\end{split}
\end{align}
and the Seiberg-Witten curve \eqref{sw2_curve_1} is obtained as the two-cut solution of the planar SD equation, where
the Coulomb branch parameter $u$ is introduced by
\begin{align}
\Lambda_1 \lim_{g_s \to 0} g_s \lim_{T \to \infty} \left\langle\mathrm{vac}\Big| \e^{-T \mathcal{H}_{N_f=1}^{SU(2)}}\,
\Psd(1) \Big|\mathrm{vac}\right\rangle
=
u - m_{+}^2\,.
\label{sw2_coulomb_1}
\end{align}

\subsubsection*{\underline{$N_f=0$}}

When $N_f=0$, the Seiberg-Witten curve \eqref{pure_su2_sw} is obtained from the curve \eqref{sw2_curve_1} for $N_f=1$ by the decoupling limit
\begin{align}
m_{+} \to \infty\,,
\quad
\Lambda_1 \to 0\,,
\quad
m_{+}\, \Lambda_1^3=\Lambda^4\,
\ \textrm{(fixed)}\,,
\label{decouple_0}
\end{align}
and scalings
\begin{align}
x \to \frac{\Lambda_1^2}{\Lambda^2}\, x\,,
\quad
y \to \frac{\Lambda^2}{\Lambda_1^2}\, y\,,
\end{align}
as \cite{Gaiotto:2009hg,Gaiotto:2009ma},
\begin{align}
y^2=
\frac{\Lambda^{2}}{x^{3}}+\frac{u}{x^{2}}+\frac{\Lambda^{2}}{x}
=
\frac{\Lambda^2}{x^4}
\left(x^3 + \frac{u}{\Lambda^2} x^2 + x\right),
\label{sw2_curve_0}
\end{align}
with $r=\mu=2$, $s=1$.
In this case, different from the above cases, 
the Hamiltonian which leads to the curve \eqref{sw2_curve_0} 
is not simply obtained by the decoupling limit of the $W^{(3)}$-type Hamiltonian $\mathcal{H}^{SU(2)}_{N_f=1}$ in \eqref{sw2_ham_1}.
Instead, by considering the two-reduced $W^{(3)}$-type Hamiltonian \eqref{gen_hamiltonian_ai}, as discussed in Section \ref{subsubsec:pure_sw}, we find the Hamiltonian \eqref{pure_sw_hamiltonian}:
\begin{align}
-\mathcal{H}_{\mathrm{pure}}^{SU(2)}&=
2\Lambda \sum_{k=1,2}
\sum_{\ell \ge 1} \ell\, \Psd(\ell+2k-3)\Psi(\ell)
\nonumber\\
&\ \
+g_s \sum_{\ell, \ell' \ge 1}
\left(
\left(\ell+\ell'\right) \Psd(\ell)\Psd(\ell')\Psi(\ell+\ell')
+\frac{1}{4} \ell \ell'\, \Psd(\ell+\ell') \Psi(\ell) \Psi(\ell')\right),
\label{sw2_ham_0}
\end{align}
which provides the curve \eqref{sw2_curve_0}.

\bibliographystyle{JHEP}
\bibliography{references.bib}

\end{document}